%% file: main_lottery.tex
\def\fullversion{1} 

\ifnum\fullversion=0 
	\documentclass[sigconf]{acmart}
	\setcopyright{acmlicensed}
	\copyrightyear{2018}
	\acmYear{2018}
	\acmDOI{XXXXXXX.XXXXXXX}
	\acmConference[Conference acronym 'XX]{Make sure to enter the correct conference title from your rights confirmation email}{June 03--05, 2018}{Woodstock, NY}
	\acmISBN{978-1-4503-XXXX-X/2018/06}
\else
    \documentclass[runningheads]{llncs}

	\usepackage{fullpage}

\fi

\newcommand{\pnote}[1]{\textcolor{magenta}{[[#1 \textbf{--Paul}]]}}
\newcommand{\anote}[1]{\textcolor{teal}{[[#1 \textbf{--Aravind}]]}}

\renewcommand{\pnote}[1]{}
\renewcommand{\anote}[1]{}

\input{packages.tex}

\input{0-macros.tex}
\usepackage[capitalize,noabbrev,nameinlink]{cleveref}
\crefname{subgamehop}{Exp.}{Experiments}
\Crefname{subgamehop}{Exp.}{Experiments}
\crefname{gamehop}{Exp.}{Experiments}
\Crefname{gamehop}{Exp.}{Experiments}

\crefname{definition}{Definition}{Definitions}
\Crefname{definition}{Definition}{Definitions}
\crefname{lemma}{Lemma}{Lemmas}
\Crefname{lemma}{Lemma}{Lemmas}

\AddToHook{env/definition/begin}{\crefalias{theorem}{definition}}
\AddToHook{env/definition/end}{\crefalias{theorem}{theorem}}
\AddToHook{env/lemma/begin}{\crefalias{theorem}{lemma}}
\AddToHook{env/lemma/end}{\crefalias{theorem}{theorem}}

\begin{document}

\ifnum\fullversion=1
\author{Paul Gerhart\inst{1}\orcidlink{0000-0002-0164-0187} \and Jay Taylor\inst{2}\orcidlink{0009-0004-8671-0771} \and  
Sri Aravinda Krishnan Thyagarajan\inst{2}\orcidlink{0000-0003-0114-7672}}
\institute{TU Wien, Austria \and University of Sydney, Australia}
\fi

\ifnum\fullversion=0 
	\begin{CCSXML}
	<ccs2012>
	   <concept>
    	   <concept_id>10002978.10002979</concept_id>
	       <concept_desc>Security and privacy~Cryptography</concept_desc>
    	   <concept_significance>500</concept_significance>
       	</concept>
	 </ccs2012>
\end{CCSXML}
	\ccsdesc[500]{Security and privacy~Cryptography}
\fi

\ifnum\fullversion=0 
	\title{Probabilistic Atomic Swaps for Bitcoin and Friends}

	\keywords{Probabilistic Atomic Swap, Adaptor Signature, Oblivious PRF, OPRF}
	\begin{abstract}
		\input{abstract.tex}
	\end{abstract}
	\maketitle
\else
	\title{Probabilistic Atomic Swaps for Bitcoin and Friends}
    
	\maketitle

	\begin{abstract}
		\input{abstract.tex}
	\end{abstract}
    
	\keywords{Probabilistic Atomic Swap, Adaptor Signature, Oblivious PRF, OPRF}

\fi

	\input{intro.tex}
	\input{related.tex}
	\input{prelims.tex}

	\input{overview.tex}

	\input{fair_lottery.tex}
	\input{protocol.tex}
	\input{extensions.tex}

	\input{prototype.tex}

\ifnum\fullversion=0 
    \bibliographystyle{ACM-Reference-Format}
	\bibliography{abbrev3.bib,crypto.bib,extrarefs.bib}
\fi

\appendix		
\crefalias{section}{appendix}

\ifnum\fullversion=0

\section*{Ethical Considerations}
This work introduces probabilistic atomic swaps, a cryptographic protocol for trustless probabilistic exchange of digital assets.
Our contribution is purely theoretical and systems-oriented: we design, formalize, and implement a new cryptographic primitive.
The work does not involve human subjects, user data, surveys, or real-world vulnerability analysis, and therefore does not require IRB/ERB approval.

We briefly discuss the potential societal implications of our work.
Probabilistic swaps enable new forms of decentralized exchange, including wager-style transactions and randomized cross-chain trades.
Like all financial primitives, these mechanisms could in principle be used for gambling applications.
However, we note that such use cases are already possible through existing on-chain lottery protocols~\cite{ESPW:BilBen17,FC:BarZun17}, which our protocol does not meaningfully extend in terms of accessibility.
On the contrary, our construction improves privacy and fungibility compared to existing approaches by making probabilistic swaps indistinguishable from ordinary on-chain transactions, thereby reducing third parties' ability to surveil or censor specific transaction types. 

Our prototype is evaluated on public Bitcoin and Litecoin testnets using only test coins of no monetary value. 
No real funds were used or placed at risk during our experiments. 
We do not release any exploit code, attack tooling, or adversarial artifacts.

\section*{Open Science}
Our prototype implementation of probabilistic atomic swaps, including all components described in \cref{sec:benchmarks}, is available for review.
Specifically, the artifact includes:

\begin{itemize}[leftmargin=*,nosep]
    \item The full protocol implementation in Rust, covering two-party DKG, two-party Schnorr pre-signing, the OPRF-based claim mechanism, the Bulletproof-based well-formedness proof, the cut-and-choose-based well-formedness proof with batched Schnorr proofs, and the Lightning linking proofs.
    \item Scripts to reproduce all benchmark results reported in \cref{sec:benchmarks}.
    \item Transaction IDs for the public Bitcoin testnet4 and Litecoin testnet runs listed in \cref{tab:txids}, which can be independently verified on the respective public block explorers.
    \item Documentation describing how to build and run the implementation.
\end{itemize}
The artifact is available at the following anonymous repository for double-blind review:

\begin{center}
    \url{https://anonymous.4open.science/r/probabilistic-swaps-62D8}
\end{center}

The implementation does not depend on any proprietary datasets or third-party services beyond publicly accessible Bitcoin and Litecoin testnets.
No artifact relevant to the core contributions is withheld.

\fi
\section*{Acknowledgements}
\ifnum\fullversion=1
	Paul Gerhart's research has been supported by the Google PhD Fellowship in Privacy, Safety, and Security.
\else
This paper was edited for grammar using ChatGPT.
\fi

\ifnum\fullversion=1
    \bibliographystyle{splncs04}
	\bibliography{abbrev3.bib,crypto.bib,extrarefs.bib}
\fi

	\input{appdx_prelims.tex}

	\input{instantiation.tex}	
	\ifnum \fullversion=0
	\input{appdx_proof.tex}
	\input{appdx_figures.tex}
	\fi
\end{document}

%% file: packages.tex
\usepackage[T1]{fontenc}
\usepackage{graphicx}

\usepackage{xcolor}
\usepackage{graphicx}
\usepackage{hyperref}
\hypersetup{
    colorlinks,
    linkcolor={blue!50!black},
    citecolor={blue!50!black},
    urlcolor={blue!80!black}
}
\usepackage{float}
\usepackage{xspace}
\usepackage{nicematrix}
\usepackage{resizegather}
\usepackage{wrapfig}
\usepackage{xcolor}
\definecolor{accent}{RGB}{69,90,131}
\usepackage{orcidlink}

\usepackage{enumitem}
\usepackage{booktabs}
\usepackage{array}
\usepackage{tikz}
\usetikzlibrary{fit} 
\usepackage{tikzpeople}
\usepackage{placeins}

\usepackage{xspace}

\usepackage[most]{tcolorbox}
\tcbuselibrary{breakable}

\newtcolorbox{textdef}{
  breakable,
  colback=white,
  colframe=black,
  boxrule=0.5pt,
  arc=0pt,
  breakable=false,
  left=1ex,
  right=1ex,
  top=1ex,
  bottom=1ex,
}

\ifnum\fullversion=0 

\fi

\usepackage{xspace}

\xspaceaddexceptions{.,;:!?)}

  \usepackage{makecell}
\usepackage{graphicx}
\usepackage{pgfplots}
\usepgfplotslibrary{groupplots}
\usepackage{pgfplotstable}
\pgfplotsset{compat=1.18}
\usepackage{bbm}
\usepackage{pifont}
\newcommand{\cmark}{\ding{51}} 
\newcommand{\xmark}{\ding{55}} 

\newcounter{gamehop}

\providecommand{\currgamelabel}{gamehop:0}
\providecommand{\lastgamelabel}{gamehop:0}

\newcommand{\expinline}[1]{%
  \noindent\textbf{Exp.~#1.}\enspace\ignorespaces%
}

\newcommand{\initialgame}{%
  \setcounter{gamehop}{-1}%
  \gamehop%
}

\newcommand{\gamehop}{%
  \refstepcounter{gamehop}%
  \edef\currgamelabel{gamehop:\thegamehop}%
  \edef\lastgamelabel{gamehop:\number\numexpr\value{gamehop}-1\relax}%
  \label{\currgamelabel}%
  \expinline{\thegamehop}%
}

\newcommand{\thisgame}{%
  \hyperref[\currgamelabel]{Exp.~\thegamehop}\xspace%
}

\newcommand{\lastgame}{%
  \ifnum\value{gamehop}>0
    \hyperref[\lastgamelabel]{Exp.~\number\numexpr\value{gamehop}-1\relax}\xspace%
  \else
    previous experiment\xspace%
  \fi
}

\newcounter{subgamehop}[gamehop]
\renewcommand{\thesubgamehop}{\thegamehop.\arabic{subgamehop}}

\newcommand{\currsubgamelabel}{}
\newcommand{\lastsubgamelabel}{}

\newcommand{\lastsubgame}{%
  \ifnum\value{subgamehop}>1
    \hyperref[\lastsubgamelabel]{Exp.~\thegamehop.\number\numexpr\value{subgamehop}-1\relax}\xspace%
  \else
    \thisgame\xspace%
  \fi
}

\usepackage{hyperref}

\usepackage[
	lambda,
	operators,
	advantage,
	sets,
	adversary,
	landau,
	probability,
	notions,
	logic,
	ff,
	mm,
	primitives,
	events,
	complexity,
	asymptotics,
	keys]{cryptocode}

\newcommand{\fullinline}[1]{%
\ifnum\fullversion=0
$#1$
\else
\[#1\]
\fi
}

%% file: 0-macros.tex
\newcommand{\Prim}{\ensuremath{\mathsf{Prim}}\xspace}

\newcommand{\comrand}{\ensuremath{\omega}\xspace}
\newcommand{\com}{\ensuremath{\mathsf{C}}\xspace}

\newcommand{\proofstatement}{\ensuremath{\mathbbm{x}\xspace}}
\newcommand{\proofwitness}{\ensuremath{\mathbbm{w}\xspace}}

\newcommand{\func}{\ensuremath{\mathcal{F}}\xspace}
\newcommand{\blockchainfunc}{\ensuremath{\func_\blockchain}\xspace}
\newcommand{\smtfunc}{\ensuremath{\func_\mathsf{smt}}\xspace}

\newcommand{\lotteryfunc}{\ensuremath{\func_{\lottery}^{\blockchain}}\xspace}
\newcommand{\realview}{\ensuremath{\mathtt{REAL}}\xspace}
\newcommand{\idealview}{\ensuremath{\mathtt{IDEAL}}\xspace}
\newcommand{\simdealer}{\ensuremath{\simulator_\dealer}\xspace}

\newcommand{\pkT}{\ensuremath{\pk_\mathsf{tmp}}\xspace}
\newcommand{\skT}{\ensuremath{\sk_\mathsf{tmp}}\xspace}

\newcommand{\skTD}{\ensuremath{\sk_{\mathsf{tmp}, \dealer}}\xspace}
\newcommand{\skTP}{\ensuremath{\sk_{\mathsf{tmp}, \player}}\xspace}

\newcommand{\BlindEval}{\ensuremath{\mathsf{BlindEval}}\xspace}
\newcommand{\Finalize}{\ensuremath{\mathsf{Finalize}}\xspace}

\newcommand{\request}{\ensuremath{\mathsf{req}}\xspace}
\newcommand{\response}{\ensuremath{\mathsf{res}}\xspace}

\providecommand{\sign}{\pcalgostyle{Sign}}
\providecommand{\vrfy}{\pcalgostyle{Vrfy}}
\providecommand{\pSign}{\pcalgostyle{pSign}}

\providecommand{\pVrfy}{\pcalgostyle{pVrfy}}
\providecommand{\adapt}{\pcalgostyle{Adapt}}
\providecommand{\extract}{\pcalgostyle{Extract}}

\providecommand{\pSig}{\presig}

\providecommand{\adaptorScheme}{\mathsf{AS}}

\providecommand{\relation}{\mathcal{R}}
\providecommand{\genRelation}{\mathsf{genR}}

\providecommand{\fext}{\mathsf{Ext}}
\providecommand{\newY}{\mathsf{NewS}}
\providecommand{\setS}{\mathcal{S}}
\providecommand{\setT}{\mathcal{T}}

\newcommand{\queue}{\mathcal{Q}}

\newcommand{\setup}{\mathsf{Setup}}
\newcommand{\keygen}{\mathsf{KeyGen}}

\NewDocumentCommand\nonce{o}{\ensuremath{\rho\IfNoValueF{#1}{_{#1}}}}
\NewDocumentCommand\prepstate{o}{\ensuremath{\mathit{state}\IfNoValueF{#1}{_{#1}}}}
\newcommand{\indexSet}{\ensuremath{S}}
\NewDocumentCommand\fml{mO{i}O{\indexSet}}{\ensuremath{\{#1_{#2}\}_{#2\in #3}}}

\newcommand{\buyer}{{\sf B}}
\newcommand{\seller}{{\sf S}}

\newcommand{\buyerSet}{\mathcal{B}}
\newcommand{\sellerSet}{\mathcal{S}}

\NewDocumentCommand\skB{o}{\ensuremath{\sk_{\buyer\IfNoValueF{#1}{_{#1}}}}}
\NewDocumentCommand\skS{o}{\ensuremath{\sk_{\seller\IfNoValueF{#1}{_{#1}}}}}

\NewDocumentCommand\pkB{o}{\ensuremath{\pk_{\buyer\IfNoValueF{#1}{_{#1}}}}}
\NewDocumentCommand\pkS{o}{\ensuremath{\pk_{\seller\IfNoValueF{#1}{_{#1}}}}}

\NewDocumentCommand\fmlB{mO{i}O{\buyerSet}}{\ensuremath{\{#1_{#2}\}_{#2\in #3}}}
\NewDocumentCommand\fmlS{mO{j}O{\sellerSet}}{\ensuremath{\{#1_{#2}\}_{#2\in #3}}}

\NewDocumentCommand\stateB{o}{\ensuremath{\state_{\buyer\IfNoValueF{#1}{_{#1}}}}}
\NewDocumentCommand\stateS{o}{\ensuremath{\state_{\seller\IfNoValueF{#1}{_{#1}}}}}

\providecommand{\eval}{\mathsf{Evaluate}}

\newcommand{\timet}{\ensuremath{\mathbf{T}}\xspace}
\newcommand{\blockchain}{\ensuremath{\mathbb{B}}\xspace}

\providecommand{\tx}{\mathsf{tx}}

\newcommand{\statement}{Y}
\newcommand{\witness}{y}
\newcommand{\presig}{\widetilde{\sigma}}

\providecommand{\signerSet}{\mathcal{S}}

\NewDocumentCommand\cm{o}{\ensuremath{\mathsf{cm}\IfNoValueF{#1}{_{#1}}}}
\RenewDocumentCommand\st{o}{\ensuremath{\mathsf{st}\IfNoValueF{#1}{_{#1}}}}
\NewDocumentCommand\setOfSigners{mO{i}O{\signerSet}}{\ensuremath{\{#1_{#2}\}_{#2\in #3}}}
\NewDocumentCommand\setOfOtherSigners{mO{j}O{\signerSet \setminus i}}{\ensuremath{\{#1_{#2}\}_{#2\in #3}}}

\renewcommand{\nizk}{\ensuremath{\mathsf{NIZK}}}

\newcommand{\ciphertext}{\ensuremath{ct}}

\newcommand{\finalize}{\ensuremath{\mathsf{Finalize}}\xspace}

\newcommand{\lockscript}{\ensuremath{\Lambda}\xspace}

\newcommand{\environment}{\ensuremath{\mathcal{E}}\xspace}
\newcommand{\dkg}{\ensuremath{\mathsf{DKG}}\xspace}

\newcommand{\refund}{\ensuremath{\mathsf{rfnd}}\xspace}
\newcommand{\compclose}[2]{
\noindent\underline{\textbf{Exp.~\ref{#1} $\approx$ Exp.~\ref{#2}.}}} 

\newcommand{\nizkProof}{\pi}
\newcommand{\nizkpp}{\mathsf{pp}\xspace}
\newcommand{\proofsys}{\mathsf{\Pi}}
\newcommand{\proofsetup}{\mathsf{Setup}\xspace}
\newcommand{\proofprove}{\mathcal{P}\xspace}

\newcommand{\proofverify}{\ensuremath{\mathcal{V}}\xspace}

\newcommand{\dealer}{\ensuremath{\mathsf{D}}\xspace}
\newcommand{\player}{\ensuremath{\mathsf{P}}\xspace}

\newcommand{\winningProb}{\ensuremath{p}\xspace}

\newcommand{\freeze}{\ensuremath{\mathsf{Freeze}}\xspace}

\newcommand{\transfer}{\ensuremath{\mathsf{Transfer}}\xspace}

\newcommand{\abort}{\ensuremath{\mathsf{abort}}\xspace}
\newcommand{\play}{\ensuremath{\mathsf{claim}}\xspace}
\newcommand{\lottery}{\ensuremath{\mathsf{ProSwap}}\xspace}

\newcommand{\buyout}{\ensuremath{\nu_\dealer}\xspace}

\newcommand{\target}{\ensuremath{\mathsf{tgt}}\xspace}
\newcommand{\win}{\ensuremath{\mathsf{win}}\xspace}
\newcommand{\guess}{\ensuremath{\mathsf{gss}}\xspace}

\newcommand{\witnessTarget}{\ensuremath{\witness_\target}\xspace}
\newcommand{\witnessGuess}{\ensuremath{\witness_\guess}\xspace}

\newcommand{\paymentWit}{\ensuremath{\witness_\win}\xspace}
\newcommand{\paymentStmt}{\ensuremath{\statement_\win}\xspace}

\newcommand{\mypar}[1]{\paragraph{#1}}
\newcommand{\lightpar}[1]{\smallskip\noindent\emph{#1}}

\newcommand{\Request}{\ensuremath{\mathsf{Req}}\xspace}
\newcommand{\BlindEv}{\ensuremath{\mathsf{BlindEv}}\xspace}

\newcommand{\oprf}{\ensuremath{\mathsf{OPRF}}\xspace}

%% file: abstract.tex
Atomic swaps are a fundamental primitive for the trustless exchange of digital assets across blockchains: they guarantee that either both parties receive the agreed assets or neither party transfers. While this all-or-nothing guarantee is powerful, it also imposes an inherent determinism that rules out exchanges whose intended outcome is probabilistic. As a result, existing atomic swaps cannot realize trustless exchanges in which one party pays for a fixed chance of receiving a larger asset or reward, as in lotteries, randomized allocation mechanisms, and probabilistic cross-chain trades.

We introduce \emph{probabilistic swaps}, a new cryptographic primitive that extends atomic swaps to the probabilistic setting. 
In a probabilistic swap, one party's transfer is executed with a fixed, publicly specified probability embedded in the protocol and cannot be biased by either party. 
This yields a trustless mechanism for randomized exchange with verifiable odds and no trusted intermediary.

Our construction combines adaptor signatures with oblivious pseudorandom functions (OPRFs) to realize the desired probabilistic outcome while ensuring that neither party can predict or bias it in advance. Along the way, we introduce a new mechanism for the atomic exchange of OPRF evaluations for payments, which may be of independent interest.
A key feature of our approach is that it preserves the minimal on-chain footprint of modern atomic-swap protocols. The protocol relies only on standard Bitcoin scripts, such as digital signatures and timelocks, and is deployable on any blockchain that already supports atomic swaps. Consequently, probabilistic swaps are indistinguishable from ordinary on-chain transactions, which helps preserve privacy and fungibility. 
We provide formal security foundations and demonstrate practicality through a probabilistic swap between the Bitcoin and Litecoin testnets, as well as in the Lightning Network. 

%% file: intro.tex
\section{Introduction}
Atomic swaps of digital assets are a fundamental economic primitive: they allow two parties to exchange assets with the guarantee that either both receive their desired assets or neither does~\cite{nolan2013atomic}. 
One of the key motivations behind blockchain systems is precisely to enable such exchanges without a trusted intermediary, and this has attracted significant interest from both academia and industry~\cite{NDSS:HABSG17,SP:TaiMorMaf21,CCS:GMMMTT22,SP:QPMSES23,SP:ThyMalMor22,SP:HLTW24,AC:AEEFHM21}, with practical relevance for cross-chain interoperability and decentralized finance~\cite{NDSS:MMSKM19,FC:MBBKM19,SP:ThyMalMor22,NDSS:MTVFMM23,CCS:GMMMTT22,SP:QPMSES23,FC:TayGerThy26}.

Classical atomic swaps enforce a deterministic exchange relation: if one party obtains the counterparty's asset, then the counterparty obtains the agreed asset as well. 
This all-or-nothing guarantee is powerful, but it cannot directly express exchanges whose intended output is a probability distribution rather than a fixed transfer. 
In this work, we ask whether one can realize a \emph{probabilistic swap}: a contract-minimal exchange in which one party pays for a publicly specified chance of receiving an asset, while the protocol guarantees that the realized outcome has exactly the advertised distribution and cannot be biased by either party.

One motivating example is \emph{wagers}. In such a setting, a participant commits a fixed amount in exchange for a chance to receive a larger reward (from another party) according to predefined odds. This application is also popularly referenced as \texttt{OP\_RAND} in the Bitcoin community~\cite{kurbatov2025emulatingoprandbitcoin}, and has been identified as a promising tool for handling tiny, otherwise ``uneconomical'' payments on the Lightning Network~\cite{btforum2025}. 
Our work realizes exactly this vision, as we demonstrate in~\cref{sec:benchmarks}. 
 
Another natural example is \textit{probabilistic cross-chain exchange}. 
A user holding a long-tail asset may prefer a small, explicitly priced chance of receiving a highly liquid asset, such as Bitcoin, over a guaranteed but unfavorable exchange rate. 
A probabilistic swap lets the parties encode this trade directly: the user transfers the source asset in exchange for a fixed probability of receiving the target asset, with the odds and payoffs agreed in advance.

The key requirement in both these examples is that, once the payment is made, the outcome is realized with exactly the \emph{advertised} probability, and neither party can bias or influence the result. Unlike traditional scenarios that rely on a trusted house or intermediary, the setting here is decentralized, and the parties obtain cryptographic guarantees about the fairness of the outcome and the correctness of the winning probability.

\noindent\textbf{Existing work.} Prior approaches have explored blockchain-based randomized protocols, including lottery-style mechanisms on Bitcoin, showing that probabilistic outcomes can in principle be realized in a trustless setting~\cite{CCS:KumMorBen15,ESPW:BilBen17}. 
However, these constructions typically rely on several assumptions. Firstly, they require complex smart contracts or intricate script-level mechanisms, such as carefully structured spending conditions, to enforce correctness~\cite{ESPW:BilBen17,SP:ADMM14,C:BenKum14}. 
This complexity has concrete costs: larger transactions and higher fees, more distinguishable on-chain behavior, and reduced portability to chains with limited scripting support.
This complexity comes at huge costs: (1) larger transactions and higher {fees}, (2) more distinguishable on-chain behavior, weakening  {on-chain privacy} and {fungibility}, and (3) {limited compatibility} across blockchains with limited scripting/contract support. 
These downsides of excessive script-usage have already been highlighted in several prior works~\cite{Thyagarajan2021LockableSF,CCS:ATMMM22,SP:ThyMalMor22,SP:HLTW24,CCS:VanSonThy24} in the context of other payment applications.
Secondly, existing solutions like~\cite{ESPW:BilBen17,FC:BarZun17,SP:ADMM14} for lotteries require the set of participants to be fixed and known to everyone ahead of time, which makes them poorly suited to open, permissionless participation.
As a result, while such approaches demonstrate theoretical feasibility, they do not provide a rigorous abstraction for decentralised probabilistic exchange analogous to atomic swaps.
We therefore ask the question: 
\begin{center}
{\em Can we extend the atomic swap of blockchain tokens into the probabilistic setting without modifying the existing minimal on-chain footprint?} 
\end{center}

We emphasize that this question is not merely of theoretical interest as it addresses a fundamental limitation of classical atomic swaps and opens up new benefits for  applications with randomized outcomes on blockchain systems like Bitcoin. 

\noindent\textbf{A new primitive.} 
We answer this question affirmatively by introducing \emph{probabilistic swaps}, a cryptographic primitive that extends atomic swaps from deterministic exchanges to exchanges with a prescribed outcome distribution. 
In a probabilistic swap between a dealer $\dealer$ and a party $\player$, $\player$ transfers an asset in exchange for a fixed probability $p$ of receiving $\dealer$'s asset; with probability $1-p$, $\player$ receives nothing. 
The probability $p$ is fixed in advance and enforced by the protocol: after the parties commit, neither party can bias the outcome, selectively abort after learning it, or obtain value outside the prescribed distribution. 
If the swap does not proceed, then neither side's asset is transferred. 
Thus, probabilistic swaps retain the trustless-exchange guarantee of atomic swaps while replacing deterministic delivery with fair probabilistic delivery. 
We give a high-level overview in~\Cref{fig:overviewNocrypto}.

\begin{figure}[h]
  \centering
  \begin{tikzpicture}[>=Stealth, scale=0.8, transform shape]
    \node[draw, thick, rounded corners=10pt,
          minimum width=1.5cm, minimum height=3.5cm, align=center]
      (D) at (0,0) {{\Large Dealer}};

    \node[draw, thick, rounded corners=10pt,
          minimum width=1.5cm, minimum height=3.5cm, align=center]
      (F) at (4.5,0) {{\Large $\lotteryfunc$}\\[0.4em]
        $b \sample
        \left\{
        \begin{array}{ll}
          1 : p\\[0.3em]
          0 : 1{-}p
        \end{array}
        \right.$};

    \node[draw, thick, rounded corners=10pt,
          minimum width=1.5cm, minimum height=3.5cm, align=center]
      (P) at (9,0) {{\Large Party}};

    \draw[->, thick, shorten <=5pt, shorten >=5pt]
      ($(D.north east)!0.33!(D.south east)$)
      -- node[above] {$\$\nu$}
      ($(F.north west)!0.33!(F.south west)$);

    \draw[->, thick, shorten <=5pt, shorten >=5pt]
      ($(F.north west)!0.67!(F.south west)$)
      -- node[above] {$\$1\ \ \ \ : b=1$}
         node[below] {$\$\nu{+}1 : b=0$}
      ($(D.north east)!0.67!(D.south east)$);

    \draw[->, thick, shorten <=5pt, shorten >=5pt]
      ($(P.north west)!0.33!(P.south west)$)
      -- node[above] {$\$1$}
      ($(F.north east)!0.33!(F.south east)$);

    \draw[->, thick, shorten <=5pt, shorten >=5pt]
      ($(F.north east)!0.67!(F.south east)$)
      -- node[above] {$\$\nu : b=1$}
         node[below] {$\$0: b=0$}
      ($(P.north west)!0.67!(P.south west)$);

  \end{tikzpicture}
  \caption{
    Ideal functionality $\lotteryfunc$ for probabilistic swaps. The dealer inputs
    $\nu$ coins and the party inputs one coin. The functionality samples
    $b \sample \mathrm{Bernoulli}(p)$ and always transfers one coin to the
    dealer. If $b=1$, the party receives $\nu$ coins; otherwise, the dealer
    is refunded its $\nu$ coins. Probabilistic swaps subsume atomic swaps~\cite{SP:ThyMalMor22} when setting $\winningProb=1=\nu$.
  }
  \label{fig:overviewNocrypto}
\end{figure}

%
%

A key advantage of our approach is its minimal reliance on blockchain functionality. Our protocol can be implemented on \emph{any blockchain} that supports atomic swaps~\cite{SP:ThyMalMor22}, \emph{without any} additional on-chain overhead and requiring only lightweight off-chain cryptographic techniques.
This makes probabilistic swaps immediately deployable on a wide range of existing systems, including script-limited and cross-chain environments. 

\subsection{Our contributions}
Our main contributions can be summarised as follows:
\begin{itemize}[leftmargin=*,nosep]
    \item
    We introduce \emph{probabilistic atomic swaps}, a new cryptographic primitive that extends classical atomic swaps to probabilistic settings. We formalize this notion via an ideal functionality (c.f.,~\cref{sec:idealfunc}) that captures fair exchange with an a priori specified probability $p$. Our new functionality is a generalization of the regular atomic swap functionality~\cite{SP:ThyMalMor22} simply by setting $p=1$. Importantly, our formulation preserves the lightweight nature of atomic swaps by ensuring that probabilistic swaps incur the same on-chain footprint as their deployed deterministic counterparts.

    \item  We construct an efficient protocol for probabilistic atomic swaps using only minimal script support: transaction signature verification and timeouts. 
    These features are available on Bitcoin-like chains and are compatible with payment-channel frameworks. 
    We give a concrete instantiation from Schnorr signatures, and more generally identify the compatibility requirements needed for other adaptor signature relations, and cryptographic tools like encryption keys, oblivious pseudorandom function (OPRF) output spaces, and well-formedness proofs~\cite{AC:AEEFHM21,FC:TaiMorMaf21,EC:GSST24,CCS:GerRauSch26}. 
    Our techniques can also be extended to chains without native transaction timeouts using cryptographic timed-payment methods~\cite{SP:ThyMalMor22}.


    \item
    We develop, as a central technical ingredient and contribution of independent interest, a mechanism for \emph{atomic exchange of oblivious PRF evaluation for payments}. 
    An OPRF~\cite{TCC:FIPR05} allows a client to obtain the evaluation of a PRF on an input using a server’s secret key, without the server learning the input and without the client learning anything about the key beyond the output.
	In this mechanism, the dealer acts as an OPRF server, and the party obtains OPRF evaluations if and only if it makes the required payment. Beyond enabling probabilistic atomic swaps, this tool may be broadly useful in OPRF-based applications, including payment-driven rate limiting in password authentication, or encrypted backup systems~\cite{CCS:CamLehNev15,USENIX:ECSJR15,EC:JarKraXu18,CCS:AMMM18,EC:HJKW23,C:DFGHHH23}. 
	In particular, we envision payment-based rate limiting to replace hard cut-offs after a fixed number of attempts with a small fee per authentication attempt. This imposes negligible cost on honest users, especially since our protocol can be instantiated entirely on layer-2 networks, thereby minimizing fee overhead. At the same time, it renders large-scale brute-force attacks economically infeasible. We leave the exploration of this idea as interesting future work.
    
	\item We implement and evaluate our protocol on the Bitcoin and Litecoin testnets, demonstrating practical feasibility and low on-chain overhead. 
    A complete on-chain swap uses exactly four standard Taproot transactions; in our testnet experiments, these transactions were about $575$WU each. 
    We also demonstrate a Bitcoin--Litecoin cross-chain swap. 
    For layer-2 deployment, PTLC-style systems support the adaptor signature version of our protocol; since the deployed Lightning Network is currently \emph{hash timelock contract} (HTLC)-based, our artifact also includes an HTLC/preimage variant with explicit hash/discrete-log link proofs. 
    This provides a concrete realization of an \texttt{OP\_RAND}-style functionality for Lightning-like payment networks~\cite{btforum2025}.

\end{itemize}

%% file: related.tex
\subsection{Related Work} \label{sec:related}
Prior work demonstrates that probabilistic exchanges can be realized on blockchains, for example, in decentralized poker~\cite{CCS:KumMorBen15} and lottery protocols~\cite{ESPW:BilBen17,FC:BarZun17}, or general MPC~\cite{SP:ADMM14,C:BenKum14}. 
However, these protocols either require additional opcodes not supported by Bitcoin~\cite{CCS:KumMorBen15,FC:BarZun17}, achieve guarantees only via security deposits as least as big as all coins to be exchanged~\cite{SP:ADMM14,C:BenKum14}, or rely on advanced scripting capabilities that publicly reveal both the execution of a lottery and its conditions~\cite{ESPW:BilBen17,FC:BarZun17}. 
We compare these approaches in \cref{tab:related}.

(Deterministic) Atomic swaps enable trustless exchange across blockchains. Early constructions such as TumbleBit~\cite{NDSS:HABSG17} and later protocols including $A^2L$~\cite{SP:TaiMorMaf21,CCS:GMMMTT22} and BlindHub~\cite{SP:QPMSES23} focus on practical, privacy-preserving implementations. 
More recent works, such as Universal Swaps~\cite{SP:ThyMalMor22} and Sweep UC~\cite{SP:HLTW24}, provide more general and formally secure frameworks. 
Atomic swaps are also used to build payment channels~\cite{AC:AEEFHM21}.
Our formalization builds upon the security framework of~\cite{SP:ThyMalMor22} but is enhanced to support probabilistic exchanges. 
Along the way, we identified minor formalization gaps in the functionality of~\cite{SP:ThyMalMor22}, in particular a mismatch between the ideal functionality and real-world behavior: while the ideal functionality refunds both parties atomically, an adversary in the real world may refuse to reclaim locked funds, leading to executions that cannot be simulated. 
We refer to~\cref{sec:functionality} for more discussion.

\ifnum\fullversion=1
  \setlength{\tabcolsep}{8pt} 
\fi

\begin{table}

\caption{Comparison of on-chain protocols. 
With private protocol semantics, we denote whether the protocol execution can be inferred from on-chain data. 
}

\centering
\footnotesize
\begin{tabular}{l|c c c c c c}

 & \rotatebox{90}{\textbf{This work}}
 & \rotatebox{90}{\textbf{OP\_Rand}~\cite{kurbatov2025emulatingoprandbitcoin}}
 & \rotatebox{90}{\textbf{Lottery}~\cite{ESPW:BilBen17,FC:BarZun17}}
 & \rotatebox{90}{\textbf{UnivSwps}~\cite{SP:ThyMalMor22}}
 & \rotatebox{90}{\textbf{Poker}~\cite{CCS:KumMorBen15}}
 & \rotatebox{90}{\textbf{MPC}~\cite{SP:ADMM14,C:BenKum14}}\\
\midrule
Probabilistic output distribution & \cmark & \cmark & \cmark & \xmark & \cmark & \cmark   \\
On-Chain privacy & \cmark & \cmark & \xmark & \cmark & \xmark & \xmark  \\
Off-chain/cross-chain compatible & \cmark & \cmark & \xmark & \cmark & \xmark & \xmark  \\
As-of-today Bitcoin compatible & \cmark & \cmark & \cmark & \cmark & \xmark & \cmark   \\
On-chain cost: moderate $(\rightarrow$)/ low ($\downarrow$)
& $\downarrow$ 
& $\downarrow$ 
& $\rightarrow$ 
& $\downarrow$ 
& $\rightarrow$ 
& $\rightarrow$  \\
Deposit-based (no deposits are better) & \xmark & \xmark & \xmark & \xmark & \cmark & \cmark \\
Proven secure & \cmark & \xmark & \cmark & \cmark & \cmark & \cmark \\
\end{tabular}

\label{tab:related}
\end{table}

Most directly related to our protocol are universal swaps~\cite{SP:ThyMalMor22} and the Bitcoin lotteries of~\cite{ESPW:BilBen17,FC:BarZun17}. 
Our protocol, as well as the lotteries, supports probabilistic exchanges, whereas universal swaps focus on deterministic exchanges.
On-chain, our protocol behaves similarly to universal swaps, revealing only freezing and exchange transactions without exposing additional protocol-specific information. This results in an on-chain footprint of four standard transactions: two that each transfer coins from a party's address to a temporary address to freeze them. And two transactions from the temporary address to either party, (or both to the dealer, if the party lost). 
In contrast, the Bitcoin lottery protocols~\cite{ESPW:BilBen17,FC:BarZun17} require encoding the full protocol logic on-chain, including the outcome probabilities. 
Consequently, on-chain costs involve evaluating complex scripts, whereas our protocol relies solely on standard transaction signatures and timelock verification. 
The overhead of timelocks can be further reduced by instantiating our protocol with verifiable timed signatures (VTS)~\cite{CCS:TBMDKS20}, similar to~\cite{SP:ThyMalMor22}. 
Finally, since the atomicity of the exchange in our protocol does not rely on expressive on-chain scripting, it can be integrated into existing exchange frameworks, such as layer2 networks, and naturally supports cross-chain execution. This is not possible for the lottery protocol~\cite{ESPW:BilBen17}.

Independently in his note, Kurbatov~\cite{kurbatov2025emulatingoprandbitcoin} proposes a protocol for emulating an \texttt{OP\_RAND} opcode on Bitcoin via a two-party interactive protocol.
The overall structure is similar to ours: a party makes a guess, and if correct, learns a witness that allows it to spend coins. However, the two works differ substantially in two ways.
First, in~\cite{kurbatov2025emulatingoprandbitcoin} \emph{both} the dealer and the party must compute NIZKs over hash evaluations, with the dealer's proof growing linearly in $m$: for winning probability $1/m$, the dealer proves $m+1$ hash instances including a \texttt{hash160} evaluation, while the party proves one \texttt{hash160} evaluation.
In contrast, our protocol requires the dealer only to compute a single well-formedness proof, while the party computes no NIZK at all; our Bulletproof instantiation achieves near-constant proving time regardless of $m$.
Therefore, for probabilities other than $1/2$, the approach of~\cite{kurbatov2025emulatingoprandbitcoin} quickly becomes impractical, whereas our protocol remains efficient even for probabilities as small as $2^{-20}$.
This matters in practice: encoding a one-satoshi payment on the Lightning Network as a probabilistic payment requires winning probabilities on the order of $2^{-17}$ or smaller, a regime where~\cite{kurbatov2025emulatingoprandbitcoin} becomes prohibitively expensive. 
Second, \cite{kurbatov2025emulatingoprandbitcoin} provides no formal security definitions or proofs. Therefore, while the protocol appears intuitively sound, it remains unclear whether it achieves the atomicity and bias-resistance guarantees we formalize in \cref{sec:functionality}.

%% file: prelims.tex
\section{Preliminaries} \label{sec:prelims}
We denote by $\secpar$ the security parameter. We say an algorithm is efficient if it can be computed in probabilistic polynomial time (PPT). We say a function $\mu(\cdot)$ is negligible if it decreases faster than the inverse of any polynomial. 
By $\mathcal{X} \approx_c \mathcal{Y}$, we denote computational indistinguishability of the two distributions $\mathcal{X}$ and $\mathcal{Y}$.
We defer formal treatments of OPRFs, adaptor signatures, and NIZKS to~\cref{sec:appdx:prelims}.

\mypar{Security definition.}
We use a standard simulation-based notion of security~\cite{EPRINT:Lindell16}. 
%
Let $\Pi$ be a protocol, $\func$ be an ideal functionality, $\adv$ be a probabilistic polynomial-time adversary, and $\simulator$ be a probabilistic polynomial-time simulator. 
We define the random variables $\idealview_{\func, \simulator}$ consisting of the identity of the corrupted party, the outputs of the honest parties, and the output of $\simulator$, and $\realview_{\Pi, \adv}$ consisting of the identity of the corrupted party, the outputs of the honest parties, and the view of $\adv$ at the end of the execution of~$\Pi$.

\mypar{Digital signatures.}
A digital signature scheme $\Sigma$ consists of three algorithms: a key generation algorithm $\kgen(\secpar)$ that reads a security parameter and outputs a key-pair $(\pk, \sk)$, a signing algorithm $\sign(\sk, m)$ that reads the signing key $\sk$, and a message $m$ and outputs a signature $\sigma$, and a verification algorithm $\vrfy(\pk, m, \sigma)$ that reads the public key $\pk$, a message $m$, and signature $\sigma$ and outputs $1$ if $\sigma$ is valid and $0$ otherwise. We require $\Sigma$ to be existentially unforgeable under chosen-message attacks (EUF-CMA)~\cite{KatLin14}.

\mypar{Hard relations}
A relation $\relation \subseteq \bin^* \times \bin^*$ is associated with the language
$
\mathcal{L}_\relation := \{\proofstatement \mid \exists \proofwitness : (\proofstatement,\proofwitness) \in \relation \}.
$
We say that $\relation$ is a hard relation~\cite{AC:AEEFHM21} if:
(i) there exists a PPT algorithm $\genRelation(\secpar)$ that outputs $(\proofstatement,\proofwitness) \in \relation$,
(ii) membership in $\relation$ can be decided in PPT, and
(iii) for every PPT adversary $\adv$, given $\proofstatement$, it is infeasible to compute a corresponding witness $\proofwitness$.

\mypar{Adaptor signatures.}
Let $\relation$ be a hard relation and let $\Sigma$ be a signature scheme. An adaptor signature scheme is a tuple $\adaptorScheme = (\pSign, \pVrfy, \adapt, \extract)$, where $\pSign(\sk, m, Y)$ outputs a pre-signature $\presig$, $\pVrfy(\pk, m, Y, \presig)$ verifies its validity, $\adapt(\presig, y)$ outputs a signature $\sigma$, and $\extract(\presig, \sigma)$ outputs a witness $y$.

We require $\adaptorScheme$ to be \emph{adaptable}, ensuring that any valid pre-signature can be adapted into a valid signature given a valid witness, and \emph{extractable}, guaranteeing that any valid signature was either issued by the signer or enables extraction of a valid witness with respect to a pre-signature issued by the signer~\cite{EC:GSST24}. 

\mypar{Oblivious PRFs.}
An oblivious pseudorandom function (OPRF)~\cite{TCC:FIPR05} allows a client to obtain an evaluation of a PRF on input $x$ under a server-held key $\sk$, without revealing $x$ to the server and without learning $\sk$. 
OPRFs consist of four algorithms $(\Request, \BlindEval, \allowbreak\Finalize, \eval)$. 
We instantiate OPRFs using the 2-Hash-DH construction~\cite{AC:JarKiaKra14}. Let $\GG$ be a cyclic group of prime order $p$, and let $\hash_p$ and $\hash_\GG$ be hash functions mapping to $\ZZ_p$ and $\GG$, respectively. To form a request for input $x$, the client runs $\Request(x)$ by sampling $r \sample \ZZ_p$ and computing $\request = \hash_\GG(x)^r$. Upon receiving $\request$, the server computes $\response = \BlindEval(\sk, \request) = \request^\sk$. The client then runs $\Finalize(x, r, \response)$ to remove the blinding and obtain $\eval(\sk, x) = \hash_p(x, \response^{1/r}) = \hash_p(x, \hash_\GG(x)^\sk)$. 

\mypar{NIZKs}
A non-interactive argument system (NARG) for a relation $\relation$ in the random oracle model is a tuple $\proofsys = (\proofprove, \proofverify)$, where $\proofsetup(\secpar)$ outputs public parameters, $\proofprove^\hash(\proofstatement,\proofwitness)$ outputs a proof $\pi$ for $(\proofstatement,\proofwitness)\in\relation$, and $\proofverify^\hash(\proofstatement,\pi)$ outputs $1$ or $0$. We require completeness, i.e., valid statements admit accepting proofs, and soundness, i.e., no efficient adversary can produce a proof for a false statement. The system is zero-knowledge if there exists a simulator that can generate proofs for true statements without access to the witness such that no efficient adversary can distinguish simulated proofs from real ones. A NARG with statistical soundness is called a non-interactive zero-knowledge proof (NIZK). 

\mypar{Communication model.}
We consider a synchronous communication model in which parties interact in rounds, and messages sent in a round are delivered within that same round. We assume static corruptions, i.e., the adversary chooses the corrupted party before the protocol execution and cannot change it thereafter. This model is standard in the context of atomic swap protocols~\cite{SP:ThyMalMor22}.

\mypar{Blockchain.}
We follow the blockchain model of~\cite{SP:ThyMalMor22} and assume an ideal ledger functionality $\blockchainfunc$ that we model as a trusted append-only bulletin board maintaining the balances associated with each address. 
Parties can register addresses and transfer coins via a $\mathsf{Post}(A, B, v)$ interface, which moves $v$ coins from address $A$ to $B$, and can query the current state of the ledger via a $\mathsf{Read}$ interface. 
Following~\cite{AC:AEEFHM21,SP:ThyMalMor22}, the ideal blockchain functionality accepts any posted transaction without enforcing authorization by a specific identity. In a real-world instantiation, however, transactions must be authorized via digital signatures. 
We address this gap by having the simulator forward $\mathsf{Post}$ requests only upon receiving a valid signature on the transaction, and by including these signatures in the adversary’s view upon $\mathsf{Read}$ queries.

\mypar{Timelocks.}
We use absolute timelocks as specified in BIP~65 via \texttt{OP\_CHECKLOCKTIMEVERIFY} (CLTV) using two spending branches. One branch allows spending by $\pkT$ at any time. The other branch allows spending by $\pk$, provided that time $\timet$ has been reached.
For convenience, we denote this script by
$
\lockscript(\pkT, \timet, \pk),
$
which expands to the locking script shown in \ifnum\fullversion=0 Figure~\ref{fig:lockingscripts} in~\cref{sec:appdx:figures}. \else \cref{fig:lockingscripts}.\fi
Instead of using timelocks, our protocol could also be realized using verifiable timed payments as shown in~\cite{SP:ThyMalMor22} (c.f.,~\cref{sec:discussion}). 
\ifnum\fullversion=1
\input{bip65fig.tex}
\fi%

%% file: bip65fig.tex
\begin{figure}	%
\begin{textdef}
$\texttt{OP\_IF } \pkT\ \texttt{OP\_CHECKSIG}$\\
$\texttt{OP\_ELSE}$\\
\strut \quad $\timet\ \texttt{OP\_CHECKLOCKTIMEVERIFY OP\_DROP}$\\
\strut \quad $\pk\ \texttt{OP\_CHECKSIG}$\\
$\texttt{OP\_ENDIF}$
\end{textdef}
	\caption{Transaction output using BIP~65. This output can be spent by $\pkT$ at any time, and by $\pk$ after time $\timet$.}
	\label{fig:lockingscripts}
\end{figure}%

%% file: overview.tex
\section{Solution Overview} \label{sec:overview}
Before presenting our solution for probabilistic swap protocols, let us first recall our setting. 
We consider two entities, a dealer and a party, who wish to engage in a coin exchange with asymmetric and probabilistic outcomes.
In contrast to deterministic atomic swaps, the dealer receives a fixed, typically smaller, payment of $1$ coin, while the party receives a higher reward, say $\buyout$ coins, only with some probability $\winningProb \in [0,1]$. 
The outcome is determined by a probabilistic event agreed upon in advance by both parties and cannot be influenced by either party.

\mypar{Deterministic atomic swaps.}
We begin by recalling the deterministic atomic swaps on scriptless blockchains enabled by adaptor signatures from~\cite{SP:ThyMalMor22}. 
On a high level, in a deterministic atomic swap, the dealer samples a statement-witness pair $(Y, y)$ of a hard relation and sends the statement $Y$ to the party together with an adaptor signature \textit{pre-signature} $\presig_{\dealer\player}$ with respect to $Y$. 
This pre-signature corresponds to a transaction transferring coins from the dealer to the party. 
Upon receiving the pre-signature, the party constructs a different transaction transferring coins from the party to the dealer, produces a corresponding pre-signature $\presig_{\player\dealer}$ with respect to the same statement $Y$, and sends it to the dealer. 
If the dealer decides to execute the swap, it uses its knowledge of a valid witness $y$ to adapt the party's pre-signature $\presig_{\player\dealer}$ into a valid signature $\sigma_{\player\dealer}$ (using the interface from adaptor signatures), broadcasts the party-to-dealer payment transaction on-chain, and can then go offline.
The party eventually observes the adapted signature on-chain and extracts the witness $y$ using the extractability property of adaptor signatures. 
Knowing $y$, the party can, in turn, adapt the pre-signature $\presig_{\dealer\player}$ received from the dealer on the dealer-to-payment transaction and claim its payment. 

This exchange is atomic for the dealer, since it reveals the witness $y$ only by completing the transaction that pays it. 
If the dealer does not proceed, the party cannot obtain $y$ and therefore cannot claim the dealer's coins. 
It is also atomic for the party, since whenever the dealer completes the transaction and receives payment, the party can extract $y$ from the on-chain signature and use it to claim its own payment. 
We note that such a protocol requires temporarily locking coins to prevent frontrunning attacks. 
For simplicity of exposition, we omit these locking mechanisms in this overview.

\subsection{\ifnum\fullversion=1 A \fi Strawman Approach}
On a high level, deterministic atomic swaps have a similar setting to the probabilistic version we seek. 
Therefore, as a natural starting point, we try to modify that paradigm to obtain probabilistic behavior. 
A first idea is to make \textit{adapt} probabilistic: upon learning a valid witness $y$, the party can adapt its pre-signature only with probability $\winningProb$. The parties would then exchange pre-signatures exactly as in the deterministic protocol. The dealer would still obtain payment first, while the party would only be able to complete its side of the swap, and hence receive payment, with probability $\winningProb$. 
Similar in flavor, but using another interface of adaptor signatures, one could also think about making witness extraction probabilistic. 

However, neither of these approaches achieves the desired notion of probabilistic exchange. The core problem is that once the dealer receives the party's pre-signature, the dealer essentially has the same information as the party for evaluating whether the party will ultimately succeed. In particular, the dealer can run the same adapt or extraction procedure locally and determine whether the party can claim the reward. Thus, the dealer can predict the protocol outcome before the exchange is completed and abort whenever the outcome is unfavorable. For example, in a wager-style execution, the dealer could learn in advance that the party is going to win and simply refuse to proceed.

This is not merely an artifact of a particular adaptor signature construction. Rather, it reflects a more basic obstacle: in a probabilistic swap, the outcome must remain hidden from both parties until they are already committed to the exchange. Otherwise, the party that learns the outcome early can selectively abort, thereby biasing the protocol's effective outcome. 
We return to this broader issue in~\cref{sec:extensions}, where we also discuss the difficulty of realizing probabilistic swaps in which both parties receive probabilistic outcomes on minimal blockchains. 

\subsection{Towards the Solution} 
Instead of making adapt or extraction itself probabilistic, we encode the uncertainty into some \textit{hidden information} chosen by both the dealer and the party. 
Concretely, the dealer samples a ``target value'' from a sufficiently large domain, and the party succeeds only if it ``guesses'' this value correctly. 
Crucially, the party's guess must remain hidden from the dealer (similar to the dealer's choice from the party) until the dealer completes its part of the exchange irrevocably. 
In this way, neither party can predict the outcome in advance: the party learns whether it can complete the swap only after paying the dealer, while the dealer cannot tell ahead of time whether the party's guess was correct. 
By appropriately choosing the domain size for the dealer's guessing space, we can ensure that the party succeeds with probability exactly $\winningProb$.\footnote{With a single guess, the party has a winning probability of $1/m$, where $m$ is the domain size. However, we can modify our protocol, such that it supports $a$ concurrent guesses, resulting in a probability of $a/m$ (c.f.,~\cref{sec:discussion}).} 

\mypar{A First attempt.}
Our first attempt to realize this idea is based on oblivious transfer~\cite{10.1145/3812.3818}. 
In an oblivious transfer (OT) protocol, a sender holds multiple messages, and a receiver obtains exactly one of them without revealing its choice to the sender, while learning nothing about the remaining messages. 
A natural approach is to adapt OT to our setting as follows: the dealer (the sender in OT) prepares $m$ statement–witness pairs, pre-signs a pre-signature paying the party $\presig_{\dealer\player}$ w.r.t. exactly one of the statements, while the party (the receiver in OT) uses the OT to obtain the witness corresponding to exactly one statement. 
If the party's choice is correct, it can adapt the dealer's pre-signature and win. 
By the security of OT, the party's choice remains oblivious to the dealer, so it cannot anticipate the output, which was one half of our goal. 

However, the other half of our goal, namely that the party cannot anticipate the protocol outcome prior to the exchange, cannot be covered with the OT-based approach. 
This is the case since in OT, the party knows which element it selects. 
Consequently, since the party can identify in advance which of $m$ statements corresponds to the valid adaptor signature (simply by running pre-verification with all dealer-provided statements and the check passes for exactly one statement), it can simply choose the correct witness with certainty, eliminating the intended probabilistic behavior. 
We could fix this gap by letting the dealer provide a uniformly random witness to the party, but this would open the door to an attack in which the dealer deliberately always sends the wrong witness to the party without being detected. 

This mismatch reflects a more fundamental tension. On the one hand, the dealer must be prevented from cheating, meaning it should be guaranteed that the revealed witness corresponds to the party’s actual choice. 
On the other hand, the party must also be prevented from cheating, which requires that it cannot determine in advance which witness is the ``correct'' one to learn. 
These two requirements are inherently at odds in OT: ensuring verifiability of the chosen message tends to leak information about correctness, while hiding choice-correctness from the party prevents enforceability of the choice.
Because of this limitation, we move away from OT and instead look for other cryptographic primitives. 

\mypar{OPRFs for oblivious guessing.}
We seek a primitive, in which the dealer verifiably but obliviously evaluates the party's guess, and at the same time, the party is oblivious about the correctness of its guess until after the interaction with the dealer. 
Oblivious pseudorandom functions (OPRFs)~\cite{TCC:FIPR05} provide exactly this type of asymmetry. 
In an OPRF protocol, a server holding a secret key $\sk$ interacts with a party holding an input $x$. At the end of the protocol, the party learns the value $\oprf.\eval(\sk,x)$, while the server learns nothing about $x$ or the corresponding function value. Moreover, the party obtains the function value on only one input per execution.
We use the OPRF input space as the space of possible guesses. The dealer chooses a hidden target value $y_\target$, while the party chooses a guess $y_\guess$. The party then interacts with the dealer, who acts as the OPRF server, to obtain $\oprf.\eval(\sk_\dealer,y_\guess)$. By OPRF security, the dealer learns neither the party's guess nor whether it was correct.

We can connect this guessing mechanism to adaptor signatures if the dealer encodes the target value into the adaptor statement. Concretely, the dealer constructs the statement $\paymentStmt$ so that there exists exactly one value $y_\target$ satisfying
$
\paymentStmt = g^{\oprf.\eval(\sk_\dealer, y_\target)},
$ 
where $g$ is a generator of the group used by the adaptor signature scheme. The dealer then pre-signs the payment to the party with respect to $\paymentStmt$ and sends the resulting pre-signature $\presig_{\dealer\player}$ together with $\paymentStmt$, and a proof of well-formedness of $\paymentStmt$ that we discuss later in this section. 
By the pseudorandomness of \oprf, the statement $\paymentStmt$ leaks no information about $y_\target$, and the only way for the party to check if $y_\guess$ is correct is to evaluate the \oprf on $y_\guess$. 
If $y_\guess = y_\target$, the resulting evaluation $\oprf.\eval(\sk_\dealer,y_\guess)$ is a valid witness corresponding to $\paymentStmt$, and so the party can adapt the pre-signature $\presig_{\dealer\player}$ and claim the payment. If the guess is incorrect, the OPRF output is useless for adapting the pre-signature. Thus, when the target is sampled uniformly from a domain of size $m$, a random guess of the party succeeds with probability $1/m$. 
This captures the probabilistic aspect of the party's payout, and we visualize this guessing strategy in~\cref{fig:oprf-guessing}.

At this point, however, only one part of the problem has been addressed. 
Our OPRF-based approach safeguards the odds for the party, but they do not yet explain how to make the learning of the OPRF evaluation itself atomic with payment, i.e., make the exchange atomic for the dealer. 
If the dealer simply returns the OPRF response first, the party may obtain the reward (if $y_\guess = y_\target$) without paying. If the party pays first, the dealer may abort without delivering the response. Thus, the remaining challenge is to make \emph{OPRF evaluation itself} the object of an atomic exchange.

\begin{figure}
  \centering
  \begin{tikzpicture}[>=Stealth]

    \node[ thick, rounded corners=3pt,
          fill=white,
          minimum width=1.4cm, minimum height=1.0cm, align=center]
      (stmt) at (-3.5, 0) {$\paymentStmt= g^{\paymentWit}$};

    \foreach \i/\xpos in {0/-1.5, 1/0.5, 2/2.5}{
      \node[draw, very thick, rounded corners=3pt,
            fill=white,
            minimum width=1.4cm, minimum height=1.0cm, align=center]
        (slot\i) at (\xpos, 0) {$\mathsf{OPRF}(\sk, \cdot)$};
    }

    \draw[->, thick]
      ($(slot0.south)+(0,-0.5)$) -- (slot0.south)
      node[pos=0, below] {$y_{\mathsf{gss},0}$};
    \draw[->, thick]
      ($(slot1.south)+(0,-0.5)$) -- (slot1.south)
      node[pos=0, below] {$y_{\mathsf{tgt}}$};
    \draw[->, thick]
      ($(slot2.south)+(0,-0.5)$) -- (slot2.south)
      node[pos=0, below] {$y_{\mathsf{gss},2}$};

    \draw[->, thick, red!70!black, dashed]
      (slot0.north) -- ++(0,0.5) node[above] {$y_0$};
    \draw[->, thick, blue!70!black]
      (slot1.north) -- ++(0,0.5) node[above] {$y_{\mathsf{win}}$};
    \draw[->, thick, red!70!black, dashed]
      (slot2.north) -- ++(0,0.5) node[above] {$y_2$};

  \end{tikzpicture}
  \caption{The dealer acts as the OPRF server. The party submits guesses
  $y_{\mathsf{gss},0}, y_{\mathsf{tgt}}, y_{\mathsf{gss},2}$ and obtains
  the corresponding OPRF outputs. Only one guess $y_\target$ corresponds to a valid witness $\paymentWit$; the remaining outputs (red, dashed) are invalid.}
  \label{fig:oprf-guessing}
\end{figure}

\mypar{The missing piece: OPRF evaluation as a service.}\label{sec:missing_piece}
To atomically exchange OPRF evaluations, our idea is to treat the dealer's OPRF evaluation response as \emph{locked} until the party-to-dealer payment is completed. 
Rather than sending the response to the party directly, the dealer encrypts it under an ephemeral public key $Z$ for which the dealer knows the corresponding decryption key $z$, and proves well-formedness of this ciphertext w.r.t.~the request sent by the party (using a NIZK proof, see~\cref{sec:instances}).
Then, the dealer sends a pre-signature $\presig_{\dealer\player}$, pre-signed under the winning statement $\paymentStmt$, alongside the ephemeral key $Z$, the ciphertext, and the proof of well-formedness. 
Upon receiving this information, the party can verify that the dealer has provided a valid response, which corresponds to evaluating the OPRF on the party's guess $y_\guess$, but cannot yet recover the resulting OPRF value. This simplifies the exchange, since now it is atomic if the party can obtain the decryption key $z$ exactly when the dealer is paid.

This is where adaptor signatures enter again. The party prepares the dealer's payment transaction $\presig_{\player\dealer}$, but using the ephemeral key $Z$ as the adaptor statement and sends the resulting pre-signature to the dealer. Since the dealer knows the ephemeral decryption key (which now also acts as the adaptor witness) $z$, it can adapt the pre-signature into a full signature $\sigma_{\player\dealer}$ and publish the party-to-dealer payment transaction on-chain to claim payment. 
Once this signature is posted, the party can extract the adaptor witness (aka, the ephemeral decryption key) $z$ from the adapted signature. This key $z$ allows the party to decrypt the previously received ciphertext and recover the OPRF evaluation on $y_\guess$.

The protocol ties the two events together: the dealer can obtain payment only by revealing the decryption key, and the party can obtain the OPRF evaluation only after that key is revealed through the on-chain signature. This is the mechanism that upgrades the probabilistic OPRF-based idea into an atomic probabilistic swap. We visualize this OPRF evaluation for a payment in~\cref{fig:overviewPartcrypto}.

\subsection{Combining the Building Blocks} \label{sec:proofsketch}
At a high level, the protocol combines two ingredients: an OPRF-based guessing mechanism that introduces the desired outcome probability for the party, and an atomic OPRF-evaluation service that ensures the dealer is paid.
The dealer samples a target value $y_\target$ from a domain of size $m$, constructs the corresponding payment statement $\paymentStmt$, and sends the party a pre-signature $\presig_{\dealer\player}$ together with the required well-formedness proofs. The party, in turn, selects a guess $y_\guess$ from the same domain and engages in the OPRF evaluation protocol with the dealer. This OPRF evaluation is atomically tied to a payment from the party to the dealer using the encryption-and-adaptor-signature mechanism above.
If $y_\guess = y_\target$, which happens with probability $1/m$, the party obtains the correct OPRF evaluation, derives the witness $\paymentWit$, and successfully adapts the pre-signature $\presig_{\dealer\player}$ to claim the payment. Otherwise, the party learns nothing useful for adaptation, while the dealer is compensated for the OPRF evaluation.
We illustrate the full protocol in~\cref{fig:overviewPartcrypto} and give its formal description in~\cref{sec:proto}.

\mypar{Security intuition.} 
We briefly sketch why the proposed probabilistic swap protocol is secure and defer the formal treatment to~\cref{sec:security}. 
From the party’s perspective, knowledge soundness of the NIZK proofs and correctness of the adaptor signature scheme ensure that the dealer’s pre-signature $\presig_{\dealer\player}$ is well-formed with respect to a target value $y_\target$ from an appropriate domain matching the expected winning probability. 
Hence, if the party’s guess $y_\guess$ matches the dealer’s hidden target $y_\target$, then an OPRF evaluation yields exactly the witness $\paymentWit$, which enables the party to adapt $\presig_{\dealer\player}$ and claim the payment. 
Moreover, by the extractability of adaptor signatures, whenever the dealer completes the payment transaction and obtains compensation, the party can extract the decryption key $z$ from the on-chain signature and thereby recover the OPRF output. Finally, the input privacy of the OPRF guarantees that the dealer learns no information about $y_\guess$, preventing it from biasing the outcome based on the party’s choice.

From the dealer’s perspective, zero-knowledge of the NIZK proofs and the CPA security of the encryption scheme ensure that no information about the target value $y_\target$ or the corresponding OPRF evaluation is revealed prior to payment from the party.  
At the same time, the adaptor signature mechanism guarantees that the ephemeral decryption key is disclosed to the party only if the dealer completes the transaction and receives payment. 
Finally, the pseudorandomness of the OPRF ensures that each evaluation yields a single independent output, so the party learns at most one candidate witness per execution. This implies that the party’s success probability is exactly $\winningProb = 1/m$, as desired. 

\begin{figure}
	\centering

\begin{tikzpicture}[>=Stealth]
  \node[draw,  thick, rounded corners=10pt,
        minimum width=1.5cm, minimum height=3.5cm, align=center]
    (D) at (0,0) {{\Large Dealer}\\ ($\sk_\dealer, y_\target$)};
    
  \draw[ ->, shorten >=5pt]
  (D.south) -- ++(0,-.6)
  node[below] {$\sigma_{\player\dealer}$};

  \node[draw, thick, rounded corners=10pt,
        minimum width=1.5cm, minimum height=3.5cm, align=center]
    (P) at (6,0) {{\Large Party} \\ ($y_\guess$)};

  \draw[ ->, shorten >=5pt]
  (P.south) -- ++(0,-.6) 
  node[below] {$\mathsf{if}\; y_\target = y_\guess: \sigma_{\dealer\player}$};

  \foreach \k/\dir/\lbl in {
    {1/->/$\paymentStmt = g^{\oprf.\eval(\sk_\dealer, y_\target)}$},
    2/<-/$\request \gets \oprf.\Request(y_\guess)$,
    {3/->/$Z,\enc(Z, \response),\presig_{\dealer\player}(\paymentStmt)$},
    4/<-/$\presig_{\player\dealer}(Z)$
  }{
    \pgfmathsetmacro{\t}{\k/5}

    \coordinate (dpt) at ($(D.north east)!\t!(D.south east)$);
    \coordinate (ppt) at ($(P.north west)!\t!(P.south west)$);

    \draw[\dir,shorten <=5pt,shorten >=5pt] (dpt) -- node[above] {\lbl} (ppt);
  }
\end{tikzpicture}
\caption{Protocol overview.
(1)~$\dealer$ sends $\paymentStmt$, binding $y_\target$ under its OPRF key.
(2)~$\player$ sends a blinded OPRF request for its hidden guess $y_\guess$.
(3)~$\dealer$ returns an OPRF response encrypted under ephemeral key $Z$,
and pre-signs $\player$'s reward w.r.t.~$\paymentStmt$.
(4)~$\player$ pre-signs $\dealer$'s payment w.r.t.~$Z$; $\dealer$ posts it
on-chain, revealing $z$. $\player$ extracts $z$, decrypts the response,
and adapts $\presig_{\dealer\player}$ to claim the reward iff $y_\guess = y_\target$.}
\label{fig:overviewPartcrypto}
\end{figure}

\subsection{Proving Well-Formedness of $\paymentStmt$}\label{sec:overview-nizk}
The remaining technical contribution of this work is the proof of well-formedness of the statement $\paymentStmt$ sent by the dealer (the prover) to the party (the verifier). 
This proof must show that this statement admits a discrete logarithm witness that can be computed by evaluating the OPRF on inputs $(\sk_\dealer, y_\target)$, where $y_\target \in [0, m]$. Only if the statement $\paymentStmt$ has this structure, the party can be sure that evaluating the OPRF on its guess $y_\guess$ will result in a probabilistic swap with the desired winning probability $\winningProb = 1/m$. 

Before we talk about this proof of well-formedness, let us first recall the structure of the OPRF we are using, which is the 2HashDH OPRF~\cite{7467360}.
With 2HashDH, the witness $\paymentWit$ has the structure 
$
	\paymentWit = \hash_p\left(y_\target, \hash_\GG(y_\target)^\sk\right).
$ 
Proving well-formedness of this leads to additional challenges, as both $\hash_p$ and $\hash_\GG$ are modeled as random oracles. 
We provide two solutions to this problem.

\mypar{Instantiation with Bulletproofs} 
Our first solution uses a deterministic Groth-style map-to-curve relation for $\hash_\GG$ and a MiMC-based hash-to-scalar relation for $\hash_p$, and shows well-formedness using R1CS in a Bulletproof~\cite{SP:BBBPWM18}.
We show in~\cref{sec:proto} that this instantiation is concretely efficient with proving time below $6$ seconds, verification time below $0.35$s, and proof size of $2.8$KB. However, this approach has the downside of requiring the RO to be instantiated, which is unfavorable~\cite{SP:HLTW24}. 

\mypar{Instantiation from cut-and-choose.}
To avoid instantiating the random oracle in the well-formedness proof, we present a second approach based on the \textit{cut-and-choose} technique~\cite{JC:LinPin12}.
A natural attempt is to have the dealer sample $\lambda$ (statistical security parameter) target values $y_{\target,1}, \ldots, y_{\target,\lambda}$, compute the corresponding statements $Y_{\win,i}$, and open a random half upon challenge.
The party verifies that each opened value lies in the correct domain and that the OPRF component is well-formed with respect to the OPRF public key $\pk = g^\sk$.
Cut-and-choose then guarantees that, except with negligible probability, at least one unopened instance is well-formed. 
As we argue next, this property is not strong enough to achieve fairness. 

We observe first that the dealer must sample distinct target values $y_{\target,i}$, as opening an instance reveals information about the corresponding witness.  
If the values are not distinct, the party could recover a valid witness without guessing if the same target value is used in both an opened and an unopened instance. 
As a result, only $m - \lambda/2$ target values remain as possible winning candidates.  
In addition, since the party cannot decide which of the unopened statements corresponds to a valid OPRF execution, the dealer must provide a pre-signature for every unopened statement $Y_{\win,j}$. 

However, this breaks fairness: the party now holds $\lambda/2$ pre-signatures, each corresponding to a different target value, and can attempt to guess a valid witness for any of them. 
Since a dishonest dealer needs to make only one of the $\lambda/2$ unopened instances well-formed, the winning probability for an honest party is guaranteed to be only $\winningProb = 1/(m-\lambda /2 )$. 
However, if the dealer behaves honestly, a possibly malicious party has a winning probability of $\winningProb = (\lambda/2)/(m - \lambda/2)$, a factor-$\secpar/2$ gap ($\lambda=128$ in our case). 
This violates the core security requirement of probabilistic swaps that the winning probability $p = 1/m$ holds unconditionally, regardless of the dealer/party's behavior. 

We resolve this issue by modifying the used OPRF. 
The difficulty is that in the 2HashDH construction, opening any instance requires revealing $y_\target$ in the clear (as it is the first part of the input to $H_p$), which would immediately leak the target to the party.
We aim to allow opening instances \emph{without} revealing $y_\target$. 
Along this way, we first use the intermediate
$
 \oprf(\sk, y_\target) =: \hash_p\!\left(g^\sk,\, \hash_\GG(y_\target)^{\sk} \right),
$ 
whose only modification compared to 2HashDH is the changed first input, moving from $y_\target$ to the OPRF key $g^\sk$. 
This intermediate OPRF allows the dealer to open the outer hash $\hash_p$ without revealing $y_\target$ as the first input. 
However, the value $\hash_\GG(y_\target)^{\sk}$ is still deterministic w.r.t.~$y_\target$. Hence, we make an additional modification, which is replacing 
$\hash_\GG(y_\target)^{\sk}$ by $\hash_\GG(y_\target)^{\sk\cdot \alpha_i}$, where the values $\{g^{\alpha_i}\}_{1 \leq i \leq \secpar}$ are additional (ephemeral) public keys for a single probabilistic swap instance. 
These two modifications turn out to be sufficient for our goal: 
For all opened instances, the prover publishes $\hash_\GG(y_\target)^{\sk\cdot \alpha_i}$, and a proof of well-formedness of this value with respect to the OPRF key $g^\sk$, the ephemeral key $g^{\alpha_i}$, and the public set $\{\hash_\GG(y) : y \in [0, \ldots, m]\}$ using a Chaum--Pedersen OR proof, without revealing $y_\target$. 
For unopened instances, the dealer reveals $\alpha_j$, allowing the party to locally compute and verify the candidate witness after obtaining the OPRF blind evaluation $\hash_\GG(y_\target)^{\sk}$. 

The resulting OPRF inherits the pseudorandomness and request privacy of the baseline 2HashDH construction, as shown in~\cref{sec:appdx:prelims}. Since the ephemeral public keys are uniformly random, the prover can reuse the same $y_\target$ across all cut-and-choose instances. This proof strategy yields the correct winning probability of $\winningProb = 1/m$. 
We show in~\cref{sec:benchmarks} that this cut-and-choose-based approach not only avoids instantiating the random oracle, but is also more efficient than the Bulletproof instantiation for winning probabilities $p \geq 1/2^{12}$.
See~\cref{fig:proofs:cutandchoose,fig:proofs:orSchnorr,fig:proofs:orWF,fig:proofs:hiddenbaseWF} for a formal treatment.

%% file: fair_lottery.tex
\newcommand{\swapfunc}{\ensuremath{\mathcal{F}_\mathsf{Swap}^{\blockchain}}}

\section{Defining Probabilistic Swaps}
\label{sec:idealfunc}
\label{sec:functionality}
We formalize the security requirements of probabilistic swap protocols. 
The central property we aim to capture is a notion we call \emph{probabilistic atomicity}, which can be seen as a probabilistic analog of atomic swaps~\cite{SP:ThyMalMor22}. 
%
%
We begin by recalling the functionality of deterministic atomic swaps~\cite{SP:ThyMalMor22}, which is a simplified sub-case of probabilistic swaps (by simply setting the winning probability $\winningProb=1$). 
Then, we use this baseline to explain the modifications required to obtain our probabilistic functionality.

\mypar{Deterministic atomic swaps.}
The deterministic atomic functionality interacts with two parties over multiple rounds.
In the first two rounds, each party commits to the swap by submitting their matching public inputs to the functionality. 
Once both parties have submitted their values and agreed to run the swap, the functionality uses their inputs to lock the respective funds under a temporary functionality-controlled address. 
After the funds are frozen, each party retains the ability to abort the protocol.
If neither party aborts, the functionality atomically executes both transfers, paying each party. 
If any party aborts, the functionality releases all locked funds back to their respective owners.
This mechanism ensures \emph{deterministic} atomicity: either both transfers occur on-chain, or all funds are returned to their original owners.

\mypar{Our functionality.}
Our functionality follows the overall structure of the deterministic case with respect to setup and funding. 
Both parties agree on public  parameters, and the functionality locks their respective funds under a temporary, functionality-controlled address.
After funding, the dealer decides whether to participate by issuing a claim transaction. 
At this point, the dealer does not yet know the protocol's outcome; hence, the dealer needs to make this decision \emph{independently} of the protocol's outcome. 
Once the dealer claims its payment, the functionality samples the probabilistic outcome and reveals it to the party.
If the party receives a winning outcome (i.e., $b=1$), it can claim the dealer's locked funds. 
Otherwise, the dealer can later reclaim its funds. 
If either party remains inactive, the respective other party can recover their respective funds after predefined timeouts.
To retain the correct output probability for the swap, the protocol's outcome is revealed only after the dealer has committed to execution by claiming its payment. 
The functionality has two atomic outcomes: 
\begin{itemize}[leftmargin=*,nosep]
    \item The dealer is paid. The counterparty then can claim the locked funds with probability $\winningProb$. 
    \item The dealer is not paid. In this case, no exchange occurs, and both parties remain unaware of the probabilistic outcome. 
\end{itemize}
In both cases, unclaimed funds can be refunded by the respective parties after an a priori fixed period.
We provide a formal definition of our functionality in~\cref{fig:idealfunc:lottery}.
\begin{figure}
\centering

\begin{textdef}
Functionality $\lotteryfunc$  interacts with a dealer $\dealer$, and a party $\player$. 

\textbf{Setup:} Upon receiving $(\buyout, \winningProb, \pk_\dealer, \pk_\player, \timet_\dealer, \timet_\player)$ from $\dealer$, store the inputs and proceed to \texttt{Freeze}. 
	
\textbf{Fund:} Upon receiving $(\buyout, \winningProb, \pk_\dealer, \pk_\player, \timet_\dealer, \timet_\player)$ from $\player$, verify that $\player$'s inputs match those provided by $\dealer$. 
	Call $\freeze(\pk_\player, 1, \timet_\player)$ and $\freeze(\pk_\dealer, \buyout, \timet_\dealer)$ and set $b=0$. 
	If the check and freezing succeed, send $\mathtt{funded}$ to the dealer.
	Otherwise, send $\mathtt{fail}$ to the dealer and the player.
	In both cases, store the inputs and $b$, and proceed to \texttt{ClaimDealer}. 
	
\textbf{Claim Dealer:} Upon receiving $\play$ from $\dealer$ before $\timet_\player$ is up, set $b=1$ with probability $\winningProb$, call the subroutine $\transfer(1, \pk_\dealer)$, and send $b$ to $\player$. Continue with \texttt{ClaimParty}.

\textbf{Claim Party:} Upon receiving $\play$ from $\player$ before $\timet_\dealer$ is up, if $b=1$, call the subroutine $\transfer(\buyout, \pk_\player)$ to pay $\player$.				   

\textbf{Refund:} Upon receiving $\abort$ from $\dealer$ after $\timet_\dealer$ is up ($\player$, after $\timet_\player$), call the subroutine $\mathsf{Unfreeze}(\buyout, \pk_\dealer)$ ($\mathsf{Unfreeze}(1, \pk_\player)$).
					   
\vspace{0.5em}
\hrule
\vspace{0.5em}
Description of the subroutines. The subroutines are successful if the transaction is accepted by $\blockchainfunc$.

\textbf{Freeze:}
On input a tuple $(\pk, \nu, \timet)$, transfer $\nu$ coins from the address specified by $\pk$ to a script $\lockscript(\pkT, \timet, \pk)$ via $\texttt{Post}(\pk, \lockscript, \nu)$, where $\pkT$ is a temporary address. 

\textbf{Transfer:} On input a tuple $(\nu, \pk)$ transfer $\nu$ frozen coins to the address specified by $\pk$, via $\texttt{Post}(\pkT, \pk, \nu)$. 

\textbf{Unfreeze:}
On input $(\nu, \pk)$, move $\nu$ coins from $\lockscript$ to $\pk$, making them spendable only by $\pk$ via $\texttt{Post}(\pk, \pk, \nu)$.
\end{textdef}

\caption{The ideal functionality $\lotteryfunc$.}

\label{fig:idealfunc:lottery}
\end{figure}

\mypar{Other key differences with the functionality of~\cite{SP:ThyMalMor22}.}
While our functionality follows the high-level structure of~\cite{SP:ThyMalMor22}, we introduce several modifications that facilitate practical realizations. 

\lightpar{Refunds.}
The functionality of~\cite{SP:ThyMalMor22} assumes that funds can be atomically unfrozen. 
In most protocols, however, refunding requires active participation of the party controlling the locked coins (e.g., by issuing a transaction). 
In particular, an adversary in the security experiment could act irrationally by refusing to post a refund transaction, thereby breaking the simulation. 
To capture this more faithfully, our functionality models refunds explicitly: each party must trigger its own refund separately. 

\lightpar{Timing-based locking.}
The functionality of~\cite{SP:ThyMalMor22} does not model time, and funds remain locked until they are explicitly released. 
In contrast, our functionality is parameterized by timeouts $\timet_\dealer$ and $\timet_\player$, after which the dealer and the party, respectively, can reclaim their funds.
The main motivation for this change is that practical atomic swap protocols are typically realized using timelock scripts rather than verifiable timed signatures (VTS) as in~\cite{SP:ThyMalMor22}. 
Explicitly modeling time avoids mismatches between ideal and real protocol executions, since in the real execution, a locking script would be the recipient of the freezing transaction, whereas in the simulated world, a temporary address would be. 
We emphasize that this is primarily a modeling choice: our functionality could equivalently be instantiated using a timeless approach based on VTS. 

\lightpar{Public inputs only.}
In contrast to~\cite{SP:ThyMalMor22}, our functionality does not require parties to provide secret inputs, but only public parameters of the exchange. 
This significantly simplifies simulation, as the simulator no longer needs to extract secret keys from the adversary, which would otherwise require additional proofs of knowledge.
At first glance, this may appear at odds with the fact that blockchain transactions require signatures.
However, both our blockchain functionality and that of~\cite{SP:ThyMalMor22} model the blockchain as an ordered list of transactions, abstracting away transaction validation.
Since the ideal blockchains accept all incoming transactions, the proofs simulator that sits between the adversary and the ideal blockchain must ensure that only authorized transactions are sent to the blockchain. 
We do this by only appending signed transactions to the blockchain. 

\lightpar{Handling of signatures.}
Both our protocol and that of~\cite{SP:ThyMalMor22} rely on signatures in concrete realizations. 
However, signatures are not explicitly modeled by the blockchain functionality. To resolve this issue, we incorporate them at the simulation level.
Concretely, the simulator only appends transactions to the blockchain functionality if the corresponding party provides a valid signature. 
Moreover, when the adversary observes the blockchain, the simulator augments the output with the corresponding signatures before forwarding it to the adversary.
Not handling signatures like this would be an issue in the proof of~\cite{SP:ThyMalMor22} as well, so we conclude that the authors assume this is done implicitly. 

%% file: protocol.tex
\section{Our Protocol} \label{sec:proto}\label{sec:security}
We present our probabilistic swap protocol between a dealer $\dealer$ and a party $\player$. 
As building blocks, we use adaptor signatures and OPRFs, which we recalled in~\cref{sec:prelims}.
Additionally, we rely on secure two-party computations for the adaptor signature key pairs and pre-signing. 
For ease of exposition, we focus on Schnorr pre-signatures with DLog keypairs. However, the respective sub-protocols in~\cref{fig:proto:jointDKG,fig:proto:jointpresign} could simply be replaced by secure two-party protocols for any adaptor signature scheme. 

\mypar{2-party DLog key generation.}
We use a standard two-party protocol for generating a shared discrete logarithm keypair, following~\cite{EC:GJKR99}. 
The protocol ensures that both parties obtain additive secret shares of a jointly generated public key, while proving knowledge of their respective shares.
We depict the corresponding ideal functionality in \ifnum\fullversion=0\cref{fig:func:jointDKG} in~\cref{sec:appdx:figures}. \else \cref{fig:func:jointDKG}.
\input{2pdlogfuncfig.tex}
\fi  
It allows the adversary to fix its secret share while the honest party’s share is sampled uniformly at random. 
In particular, this enables us to learn the adversarial share of the secret key while determining the joint public key, which will be crucial in the security proof. 
We formalize the standard protocol realizing this two-party key generation functionality in~\cref{fig:proto:jointDKG}. 
\begin{figure}
	\begin{textdef}
		Global input: $(\GG, g, p)$ and a hash function $\hash: \bin^* \to \ZZ_p$. Parties have no individual inputs. 
		
		\begin{enumerate}[leftmargin=*]
			\item Party $P_0$ samples a secret key $\sk_0 \sample \ZZ_p$, the corresponding public key $\pk_0 = g^{\sk_0}$, commits to $\pk_0$ via $\com_0 = \hash(\pk_0)$, and sends $\com_0$ to party $P_1$.
			\item Party $P_1$ samples a secret key $\sk_1 \sample \ZZ_p$, the corresponding public key $\pk_1 = g^{\sk_1}$ and proves knowledge of $\sk_1$ via $\pi_1 \gets \proofsys.\proofprove(\pk_1, \sk_1)$. It sends $( \pk_1, \pi_1)$ to $P_0$.
			\item Party $P_0$ receives $( \pk_1, \pi_1)$. If the proof verifies, i.e., $\proofsys.\proofverify(\pk_1, \pi_1) =1$, party $P_0$ also proves knowledge of $\sk_0$ via $\pi_0 \gets \proofsys.\proofprove(\pk_0, \sk_0)$ and sends $(\pk_0, \pi_0)$ to $P_1$. Otherwise, it aborts.
			\item Party $P_1$ asserts, that $pk_0$ is a valid opening for $\com_0$  by checking if $\hash(\pk_0)=\com_0$ and verifies the proof by checking if $\proofsys.\proofverify(\pk_0, \pi_0) =1$. It aborts otherwise. 
		\end{enumerate}
		Party $P_0$ sets its output to $(\pk_0 \cdot \pk_1, \sk_0)$. Party $P_1$ sets its output to $(\pk_0 \cdot \pk_1, \sk_1)$. 
	\end{textdef}

\caption{Two-Party protocol $\Gamma_\mathsf{DL}$ for generating a DLog keypair. C.f. ~\cref{fig:proofs:schnorrDL} for the corresponding proof system.}
\label{fig:proto:jointDKG}
\end{figure}

\mypar{2-party Schnorr pre-signing.}
We use a two-party variant of Schnorr pre-signing as introduced in~\cite{AC:AEEFHM21}, which allows two parties holding additive secret shares to jointly produce a valid Schnorr pre-signature. 
On public input, a public key $\pk$, a transaction $\tx$, and a statement $Y$, and secret inputs being the key shares of two parties corresponding to $\pk$, the protocol outputs a valid pre-signature.   
We formally recall the protocol in~\cref{fig:proto:jointpresign}.
\begin{figure}
	\begin{textdef}
		Global input: $(\pk, \tx, \statement)$. Party $P_0$'s input: $\sk_0$. $P_1$'s input: $\sk_1$. 
		
		\begin{enumerate}[leftmargin=*]
			\item Both parties run the two-party protocol from~\cref{fig:proto:jointDKG} to obtain a joint value $R$, while $P_0$ receives $r_0$ and $P_1$ receives $r_1$, such that $R = g^{r_0 + r_1}$.
			\item Party $P_0$ computes $s_0 = \sk_0\cdot c + r_0$ and sends $s_0$ to $P_1$, where $c = \hash(\pk, R \cdot Y, \tx)$.
			\item Party $P_1$ receives $s_0$ from $P_0$ and computes $s_1 = \sk_1 \cdot c + r_1$, where $c = \hash(\pk, R \cdot Y, \tx)$, and sends $s_1$ to $P_0$.
		\end{enumerate}
		Both parties set their output to $(R\cdot Y, s_0 + s_1)$. $P_1$ receives the output first. 
	\end{textdef}
\caption{Two-Party Schnorr pre-signing protocol $\Gamma_\pSign$.}
	\label{fig:proto:jointpresign}
\end{figure}

\mypar{Protocol description.}
Using the above building blocks, we can describe our probabilistic atomic swap protocol.
The protocol reads as public input a buyout amount $\buyout$, a winning probability $\winningProb = 1/m$, the dealer's public key $\pk_\dealer$, the counterparty's public key $\pk_\player$, and two timeout parameters $\timet_\dealer$ and $\timet_\player$. 
The protocol's output is a reallocation of the dealer's and player's funds, as depicted in~\cref{fig:overviewNocrypto}. 
We require that $\timet_\dealer > \timet_\player$, ensuring that once the dealer has issued its claim, the party has sufficient time $\timet_\dealer - \timet_\player$ to submit its own claim. 
Since our construction relies on timelock scripts on the blockchain rather than verifiable timed signatures (VTS) as in~\cite{SP:ThyMalMor22}, the timeout parameters need not be chosen overly conservatively, which provides the dealer with faster liquidity if the party loses. 
We provide our protocol in~\cref{fig:const}, which realizes probabilistic swaps for probabilities of the form $1/m$. 
However, this baseline protocol can be extended to support arbitrary rational probabilities in $\mathbb{Q}$, as discussed in~\cref{sec:discussion}.
We omit a high-level overview of our protocol here as we provided one in~\cref{sec:overview}.
\begin{figure*}
	\begin{textdef}
	
	Global input: $(\buyout, \winningProb, m, \pk_\dealer, \pk_\player, \timet_\dealer, \timet_\player)$, dealer's input: $\sk_\dealer$, player's input: $\sk_\player$.
	It holds that $\winningProb = 1/m$.
	
			\textbf{Setup:} The dealer verifiably encrypts the winning witness under the (small domain) target witness.
			\begin{enumerate}
				\item Sample the target witness $\witnessTarget \sample [0, m]$, and compute $\paymentWit \gets \oprf.\eval(\sk_\dealer, \witnessTarget)$, $\paymentStmt = g^{\paymentWit}$. 
                
				\item Prove well-formedness of $\paymentStmt$ via 
               $\nizkProof \gets \proofsys.\proofprove({\relation_\win},(\pk_\dealer, \paymentStmt, m),(\sk_\dealer, \witnessTarget))$ and send $(\nizkProof, \paymentStmt)$ to the party. We define the corresponding relation language $\relation_\win$ in~\cref{sec:instances}. 
			\end{enumerate}

			\textbf{Funding Phase:} The dealer and the party lock the coins they want to use in the protocol. 
			\begin{enumerate}
				\item They run the key generation protocol of~\cref{fig:proto:jointDKG}. Joint output: $\pkT$. Private output for dealer: $\skTD$, for the party: $\skTP$.
				\item Both party and dealer create locking transactions $\tx_\dealer = (\pk_\dealer, \buyout, \lockscript(\pkT, \timet_\dealer, \pk_\dealer))$ and $\tx_\player = (\pk_\player, 1, \lockscript(\pkT, \timet_\player, \pk_\player))$ that spend $\buyout$ (one) coins from the address corresponding to $\pk_\dealer$ ($\pk_\player$) and whose outputs are the locking scripts $\Lambda$ (c.f.,~\cref{fig:lockingscripts}). 
				\item The dealer spends these locking transactions by computing $\sigma_\dealer \gets \Sigma.\sign(\sk_\dealer, \tx_\dealer)$ and sending $(\tx_\dealer, \sigma_\dealer)$ to the blockchain $\blockchain$. The party also spends the locking transactions with $(\tx_\player, \sigma_\player)$, where 
				$\sigma_\player \gets \Sigma.\sign(\sk_\player, \tx_\player)$. 
			\end{enumerate}
			When funding is complete, the dealer's funds can be spent solely by $\pkT$ for $\timet_\dealer$ timeout period. Afterwards, both $\pkT$ and $\pk_\dealer$ can spend it. The same holds for the party's funds with $\timet_\player$ and $\pk_\player$. 

			\textbf{Claim Dealer:} After the swap is successfully funded, the dealer and party exchange information until the dealer is paid.
			\begin{enumerate}
				\item The party samples $\witnessGuess \sample [0, m]$, computes an OPRF request $(\st, \request ) \gets \oprf.\Request(\witnessGuess)$, and sends $\request$ to the dealer.
				\item The dealer evaluates the OPRF request $\response \gets \oprf.\BlindEv(\sk, \request)$, samples an encryption key $(Z, z)\gets \kgen(\secpar)$, and verifiably encrypts $\response$ under $Z$ via $\ciphertext_\dealer \gets \Pi_\enc.\enc(Z, \response)$, $\nizkProof_\dealer \gets \proofsys.\proofprove(\relation_\enc, (\pk_\dealer, Z, \request, \ciphertext_\dealer)(\sk_\dealer, z))$. 
				We define the corresponding relation language $\relation_\enc$ in~\cref{sec:instances}. 
				Finally, the dealer sends $(Z, \ciphertext_\dealer, \nizkProof_\dealer)$ to the party. 
				\item Dealer and party jointly pre-sign the transactions $\tx_{\player\dealer} = (\pkT, 1, \pk_\dealer)$ which spends one coin from $\pkT$ to $\pk_\dealer$ and $\tx_{\dealer\player} = (\pkT, \buyout, \pk_\player)$ that spends $\buyout$ coins from $\pkT$ to $\pk_\player$ by running the pre-signing protocol of~\cref{fig:proto:jointpresign} twice. 
				\begin{itemize}
					\item Joint input: $(\tx_{\dealer\player}, \paymentStmt)$. Private input: $\dealer$ inputs $\skTD$, party inputs $\skTP$. The protocol outputs $\presig_{\dealer\player}$ first to $\dealer$ then to $\player$. 
					\item Joint input: $(\tx_{\player\dealer}, Z)$. Private input: $\dealer$ inputs $\skTD$, party inputs $\skTP$. The protocol outputs $\presig_{\player\dealer}$ first to $\player$ then to $\dealer$. 
				\end{itemize}
				\item The dealer adapts $\presig_{\player\dealer}$ using $z$ via $\sigma_{\player\dealer} \gets \adaptorScheme.\adapt(\pk_\player, \presig_{\player\dealer}, z)$ and posts $(\tx_{\player\dealer}, \sigma_{\player\dealer})$ to the blockchain\ifnum\fullversion=0 $\blockchain$\fi
				. 
			\end{enumerate}

			\textbf{Claim Party:} The party claims the $\buyout$ coins if its guess was correct.
			\begin{enumerate}
				\item The party extracts $z$ from $\sigma_{\player\dealer}$ obtained from $\blockchain$ via $z \gets \adaptorScheme.\extract(Z, \presig_{\player\dealer}, \sigma_{\player\dealer})$.
				\item It decrypts the OPRF response via $\response \gets \Pi_\enc.\dec(z, \ciphertext)$ and computes the guessed witness via $\paymentWit' \gets \oprf.\finalize(\response, \st)$. 
				\item If the guess was correct, the party obtains $\sigma_{\dealer\player} \gets \adaptorScheme.\adapt(\pkT, \presig_{\dealer\player}, \paymentWit')$, and posts $(\tx_{\dealer\player}, \sigma_{\dealer\player})$ to the blockchain $\blockchain$. 
			\end{enumerate}

            \textbf{Timeout Phase:} Parties reclaim their frozen funds if the protocol fails or the other party did not claim their coins.
            \begin{enumerate}
            	\item If the dealer fails to post $(\tx_{\player\dealer}, \sigma_{\player\dealer})$ to $\blockchain$ within time $\timet_\player$, the party creates a refund transaction $\tx_{\refund\player} = (\pk_\player, 1, \pk_\player)$ that spends the party's unspent outputs from $\tx_{\player\dealer}$, signs it via $\sigma_{\refund\player} \gets \Sigma.\sign(\sk_\player, \tx_{\refund\player})$ and sends $(\tx_{\refund\player}, \sigma_{\refund\player})$ to the blockchain $\blockchain$.
            	\item Similarly, if the party fails to post $(\tx_{\dealer\player}, \sigma_{\dealer\player})$ to $\blockchain$ within time $\timet_\dealer$, the dealer creates a refund transaction $\tx_{\refund\dealer} = (\pk_\dealer, \buyout, \pk_\dealer)$, signs it via $\sigma_{\refund\dealer} \gets \Sigma.\sign(\sk_\dealer, \tx_{\refund\dealer})$ and sends $(\tx_{\refund\dealer}, \sigma_{\refund\dealer})$ to the blockchain $\blockchain$. 
            \end{enumerate}
	\end{textdef}

\caption{Our probabilistic atomic swap protocol from adaptor signatures and oblivious pseudorandom functions for exchange probabilities in the powers of two.
}
\label{fig:const}
\end{figure*}
\begin{theorem} \label{thm:main}
Let $\adaptorScheme$ be an extractable and adaptable adaptor signature scheme w.r.t. a strongly unforgeable signature scheme $\Sigma$, and a hard relation $\relation$. 
Let $\Gamma_\mathsf{KG}$ and $\Gamma_\pSign$ be secure 2PC pre-signing and key generation protocols for $\adaptorScheme$. 
Let $\Pi_\enc$ be a CPA-secure encryption scheme with key pairs matching the relation of $\adaptorScheme$, and $\proofsys$ a knowledge sound zero-knowledge proof system. 
Let $\oprf$ be a pseudorandom and request-private two-hash OPRF mapping in the witness space of~$\relation$. 
Then, for all $\buyout \in \mathbb{Q}^{+}, m \in \NN$, and $\winningProb = 1/m$, the probabilistic atomic swap protocol described in~\cref{fig:const} with access to $\blockchainfunc$ securely realizes $\lotteryfunc$, i.e., for 
every PPT adversary $\adv$ there exists a PPT simulator $\simulator$ such that
\fullinline{
\idealview_{\lotteryfunc, \simulator} \approx_c
\realview_{\Pi, \adv}.}
\end{theorem}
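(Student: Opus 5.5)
We prove \cref{thm:main} by static corruption, splitting into the two cases of a corrupted dealer and a corrupted party; the case where neither is corrupted follows from correctness of the building blocks. In each case we build a simulator $\simulator$ that runs $\adv$ internally and plays the honest party inside the sub-protocols. It forwards $\mathsf{Post}$ requests to $\blockchainfunc$ only when they come with valid signatures, and it adds those signatures to the adversary's $\mathsf{Read}$ view. We then go from $\realview_{\Pi,\adv}$ to $\idealview_{\lotteryfunc,\simulator}$ by a sequence of hybrid experiments. In both cases the first two hops are the same. First, replace $\Gamma_\mathsf{KG}$ and both runs of $\Gamma_\pSign$ by their ideal functionalities (\cref{fig:func:jointDKG}), using security of the 2PC protocols; this lets the simulator learn the adversary's share of $\skT$ and program $\pkT$ and the nonces. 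Second, replace the honest party's NIZK proofs by simulated ones (zero knowledge) and extract witnesses from the adversary's proofs (knowledge soundness).

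\textbf{Corrupted party.} $\simulator$ does not know the target. It samples $\paymentStmt = g^{\paymentWit}$ with $\paymentWit$ uniform and simulates $\nizkProof$. When $\adv$ sends $\request$, $\simulator$ sends a fresh $Z$ and an encryption of a dummy value, with a simulated $\nizkProof_\dealer$; CPA security justifies the dummy ciphertext. With the 2PC functionality in place, $\simulator$ can output the pre-signatures $\presig_{\dealer\player}$ w.r.t.\ $\paymentStmt$ and $\presig_{\player\dealer}$ w.r.t.\ $Z$. When the honest dealer's $\play$ goes through $\lotteryfunc$, $\simulator$ receives $b$ and posts $\sigma_{\player\dealer}$, which reveals $z$. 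It then equivocates the decryption: if $b=1$ it must make $\adv$'s Finalize output equal $\paymentWit$, and if $b=0$ it must make it equal an independent value. This requires the simulator to know the \emph{input} of $\adv$'s request, and recovering it is the main obstacle. I would try to handle it in the random oracle model. Pseudorandomness of the OPRF lets us swap $\eval(\sk_\dealer,\cdot)$ for a random function. Since the response $\response=\request^{\sk}$ is forced by $\request$ and $\sk$, the simulator generates the decrypted response honestly, but it programs the outer oracle $\hash_p$ lazily. On the first query whose input matches $\response^{1/r}$ for some $x$, it sets the output to $\paymentWit$ if $b=1$ and to a random value otherwise. A one-more-DH style argument (the ``one output per execution'' property claimed in \cref{sec:proofsketch}) ensures that at most one such $x$ becomes evaluable, so $\adv$ wins with probability exactly $p$. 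To show that $\adv$ cannot post $\sigma_{\dealer\player}$ when $b=0$, we use extractability of $\adaptorScheme$ and hardness of $\relation$: otherwise a discrete log of the random $\paymentStmt$ could be extracted. Similarly, strong unforgeability of $\Sigma$ rules out $\adv$ spending funds any other way before the timeouts.

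\textbf{Corrupted dealer.} $\simulator$ extracts $(\sk_\dealer,\witnessTarget)$ from $\nizkProof$ by knowledge soundness, so $\paymentStmt$ is well-formed with $\witnessTarget\in[0,m]$. It sends a request on a dummy input, which is indistinguishable by request privacy. From $\nizkProof_\dealer$ it extracts $z$ and checks that the ciphertext encrypts $\request^{\sk}$. When $\adv$ posts $\sigma_{\player\dealer}$, adaptability and extractability guarantee that the witness extracted from $\sigma_{\player\dealer}$ is $z$; any other valid spend of $\pkT$ would contradict unforgeability. At that point $\simulator$ sends $\play$ for $\dealer$ and obtains $b$. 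In the real world the honest party wins iff $\witnessGuess=\witnessTarget$. Since $\witnessGuess$ is uniform and hidden from $\adv$ (request privacy), this happens with probability $1/m=p$, independently of $\adv$'s choice to claim. If $b=1$, $\simulator$ uses the extracted $\paymentWit$ to adapt $\presig_{\dealer\player}$ and post the claim, matching the real outcome in which the party wins. Refunds after $\timet_\dealer,\timet_\player$ map directly to $\abort$. Because $\timet_\dealer>\timet_\player$, the party's claim always lands inside the window the functionality allows.

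Summing the hybrid losses in each case gives $\idealview\approx_c\realview$. The delicate step is the party-corruption simulation: the OPRF output must be made to encode the functionality's bit $b$ after the fact, so the proof must establish that a single evaluation gives $\adv$ at most one useful candidate.
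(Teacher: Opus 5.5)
Your proposal follows the paper's proof closely. It uses the same two static-corruption simulators and the same hops: 2PC simulators, simulated NIZKs on the honest side and extracted ones on the adversarial side, OPRF pseudorandomness, unforgeability and extractability aborts, and CPA plus hardness of $\relation$ against early claims. It also shares the paper's central idea of programming the outer hash on the single valid OPRF query according to the bit $b$ of $\lotteryfunc$ (the paper's ``valid OPRF preimages'' and its programming of $\Prim$), and the same extraction-based simulator for a corrupted dealer. Your dummy request in the corrupted-dealer case, instead of the paper's request on the extracted $\witnessTarget$, is immaterial, since the simulator already knows $\paymentWit$. Where you invoke one-more-DH for 2HashDH, the paper instead derives the ``one useful evaluation'' property from the $\mathsf{LimEv}$ quota of the abstract pseudorandomness game, which is what the theorem's hypothesis on a generic two-hash OPRF provides.

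Two steps need repair, though.

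First, the dummy ciphertext cannot remain in the final corrupted-party simulator. You later publish $z$ via $\sigma_{\player\dealer}$, and from then on CPA security says nothing. A committed ciphertext also cannot be ``equivocated'': $\adv$ simply decrypts the dummy. As your own later sentence (``generates the decrypted response honestly'') implicitly concedes, and as the paper does, the simulator must encrypt the honest $\response = \request^{\sk}$. The dummy (the paper encrypts $g$) belongs only in an intermediate sub-hybrid. That sub-hybrid bounds the probability that $\adv$ obtains a valid OPRF evaluation, or a valid $\pkT$-signature, before $\sigma_{\player\dealer}$ is released. This is precisely what guarantees that your lazy programming of $\hash_p$ is triggered only after $b$ is known.

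Second, the programming rule must be restricted to queries whose input $x$ lies in the target domain. A corrupted party that spends its single evaluation on an out-of-range guess never wins in the real protocol. Under your rule (``the first valid query'') it would still obtain $\paymentWit$ with probability $\winningProb$ and successfully claim, which a distinguisher detects immediately. The paper avoids this by defining valid OPRF preimages only over the guess domain.
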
 
\ifnum\fullversion=0
We defer a formal proof to~\cref{sec:appdx:proofs} and refer to~\cref{sec:proofsketch} for an intuitive proof description. 
\else
We refer to~\cref{sec:proofsketch} for an intuitive proof description. 
\fi
The central component of our protocol, which we referred to as \emph{OPRF evaluation as a service} in~\cref{sec:overview}, is implemented in the claim-dealer phase. There, the dealer acts as an OPRF server: it blindly evaluates the OPRF on the party's guess, verifiably encrypts the response, and uses an adaptor pre-signature to tie the release of the decryption key atomically to its own payment. This treatment of OPRF evaluation as an atomically purchasable service is what upgrades the probabilistic guessing mechanism into a probabilistic atomic swap.

\ifnum\fullversion=1
\input{appdx_proof.tex}
\fi

%% file: 2pdlogfuncfig.tex
\begin{figure}
	\begin{textdef}
	Functionality $\mathcal{F}_\mathsf{DLog}$ interacts with an honest party $P$ and an adversary $\adv$. 
	\begin{enumerate}[leftmargin=*]
		\item Receive $\sk_\adv$ from the adversary.
		\item Choose $\sk \sample \ZZ_p$ and set $\pk = g^{\sk}$.
		\item Send $\pk$ to $\adv$. If $\adv$ sends $\texttt{ok}$, send $(\pk, \sk - \sk_\adv)$ to $P$.
	\end{enumerate}
	\end{textdef}
	\caption{Two-party DLog keypair generation functionality.}
	\label{fig:func:jointDKG}

\end{figure}

%% file: appdx_proof.tex
\ifnum\fullversion=0
\section{Proofs} \label{sec:appdx:proofs}
\fi
\begin{proof}[Proof of~\cref{thm:main}]
To prove~\cref{thm:main}, we show that for every PPT adversary $\adv$ corrupting either the dealer or the party, there exists an efficient simulator $\simulator$ that simulates the real-world protocol execution with $\adv$ while interacting with the ideal functionality $\lotteryfunc$.
We consider static corruptions, where the environment $\environment$ specifies the corrupted party at the beginning of the session. The adversary controls the corrupted party, while the simulator is given the corrupted party's internal state, in particular its secret key. The simulator faithfully impersonates the honest party.
As we carry out the proof in the $\smtfunc$-hybrid world, communication reveals only that it occurred, without leaking any information about message contents. Therefore, we omit modelling communication in the simulator.
We proceed in two steps, constructing one simulator for the case of a corrupted dealer and one for the case of a corrupted party.

We begin with the case where the adversary corrupts the party and construct a simulator $\simulator_\dealer$.
We analyze this case via a sequence of hybrid executions, starting from the real-world execution and gradually modifying it until we obtain the simulator’s execution. We then argue that any two consecutive hybrids are indistinguishable. For simplicity, we describe a single session, noting that the argument extends to multiple sessions in a straightforward manner.
	
\initialgame The first experiment is the honest execution of our protocol depicted in~\cref{fig:const}. \label{gamehop:simdealer:honest}
	
\gamehop In \thisgame, we replace honestly computed zero-knowledge proofs by their simulated counterparts using the simulator $\simulator_\nizk$. By the zero-knowledge property of $\proofsys$, there exists such a simulator. 
$\simulator_\nizk$ reads a statement $\proofstatement$ and outputs a simulated proof $\pi$. 
\label{gamehop:simdealer:nizk}
	
\gamehop In \thisgame, we replace the key generation protocol of~\cref{fig:proto:jointDKG} using the 2PC simulator $\simulator_\dkg$. Such a simulator exists for this protocol, as our theorem assumes it is a secure 2PC protocol. 
$\simulator_\dkg$ reads a key $\pkT \in \GG$ and outputs $\skTP$ to $\simulator_\dealer$. 
\label{gamehop:simdealer:jointDKG}

\gamehop In \thisgame, we replace the joint adaptor signing protocols of~\cref{fig:proto:jointpresign} using the 2PC simulator $\simulator_\pSign$. Such a simulator exists for this protocol, as our theorem assumes it is a secure 2PC protocol. 
$\simulator_\pSign$ reads a public key $\pkT \in \GG$, a secret key $\skTP$, and a pre-signature $\presig$ and has no output to $\simulator_\dealer$. 
$\simulator_\pSign$ ensures, that the pre-signature computed in the 2PC equals $\presig$.
\label{gamehop:simdealer:jointpresign}

\gamehop In \thisgame, we replace all OPRF-related operations from the dealer using the simulator $\simulator_\oprf$. Such a simulator exists, as our theorem assumes that the OPRF is pseudorandom. 
$\simulator_\oprf$ reads public parameters, outputs a public key $\pk$, and provides access to $\eval, \BlindEval, \Prim$ oracles. 
\label{gamehop:simdealer:oprf}

\gamehop In \thisgame, the simulator aborts with message $\abort_0$ if the adversary attempts to post a tuple $(\tx^*, \sigma^*)$ to $\blockchain$, in which $\sigma^*$ is valid w.r.t. $\pk_\dealer$ on $\tx^*$, and $\tx^* \notin \{\tx_{\dealer}, \tx_{\refund\dealer}\}$. 
\label{gamehop:simdealer:abort0}

\gamehop In \thisgame, the simulator aborts with message $\abort_1$ if the adversary attempts to post a tuple $(\tx^*, \sigma^*)$ to $\blockchain$, in which $\sigma^*$ is valid w.r.t. $\pkT$ on $\tx^*$, and $\sigma^*$ does not allow extracting a valid witness for $\paymentStmt$ from $\presig_{\dealer\player}$, i.e., $(\paymentStmt, \adaptorScheme.\extract(\pkT, \presig_{\dealer\player}, \sigma^*,\paymentStmt))\notin \relation$. If our protocol did not yet define $\presig_{\dealer\player}$, we set it to $\bot$. 

\label{gamehop:simdealer:abort1}

\gamehop In \thisgame, the simulator aborts with message $\abort_2$ if the adversary attempts to post a tuple $(\tx^*, \sigma^*)$ to $\blockchain$, in which $\sigma^*$ is valid w.r.t. $\pkT$ on $\tx^*$, \emph{before} the simulator forwards $\sigma_{\player\dealer}$ to $\adv$. \label{gamehop:simdealer:aborttwo} 

\gamehop In \thisgame, at the beginning of the experiment, the simulator uses the two-hash structure of the OPRF, and whenever the adversary queries the $\Prim$ oracle for the first hash function $H_1$ on some input $\hat \witnessGuess \in [2^\ell]$, the simulator computes $y = F(\sk, H_1(\hat \witnessGuess)$, and stores $(\hat \witnessGuess, y)$. 
We denote these tuples as \emph{valid OPRF preimages}. 
The simulator aborts with message $\abort_3$ if an input to the $\Prim$ oracle is a valid $\oprf$ preimage, but the simulator did not forward $\sigma_{\player\dealer}$ to $\adv$, or if this output would be the second valid OPRF preimage to the $\Prim$ oracle to the adversary. \label{gamehop:simdealer:abortsinglequery} 

\gamehop In \thisgame, the simulator changes the way it simulates $\Prim$ to $\adv$. When the adversary queries $\Prim$, on input $z$, $\simulator_\dealer$ checks if this is a valid OPRF preimage (defined in~\cref{gamehop:simdealer:abortsinglequery}). 
If this is the case \emph{after} the functionality outputs $b=1$, the simulator changes the output of $\Prim$ to $\Prim(z) = \paymentWit$. \label{gamehop:simdealer:correctprobb}
\label{gamehop:simdealer:last}

\noindent\textbf{Simulator $\simulator_\dealer$.} The execution of the simulator is defined as the execution in ~\cref{gamehop:simdealer:last}. 
The simulator $\simulator_\dealer$ reads as input $(\buyout, \winningProb, \ell, \timet_\dealer, \timet_\player)$ and starts by internally running $\simulator_\oprf$ to receive the OPRF-part $\pk_{\dealer, \oprf}$ of $\pk_\dealer$. 
Second, it runs $(\pk_{\dealer, \Sigma}, \sk_{\dealer, \Sigma})\gets \kgen(\secpar)$ to obtain the signature part of $\pk_\dealer$. 
Then, $\simulator_\dealer$ internally executes $\adv$ on input $(\buyout, \winningProb, \ell, \timet_\dealer, \timet_\player, \pk_\dealer)$. 
Eventually, $\adv$ outputs $\pk_\player$.

\noindent\underline{Setup:} $\simulator_\dealer$ samples $(\paymentStmt, \paymentWit) \sample \genRelation(\secpar)$, and simulates $\pi$ using $\simulator_\nizk$ on input $\proofstatement = (\pk_\dealer, \paymentStmt, \ell)$.
The simulator sends $(\paymentStmt, \pi)$ to $\adv$, and $(\buyout, \winningProb, \pk_\dealer, \pk_\player, \timet_\dealer, \timet_\player)$ to $\lotteryfunc$. 

\noindent\underline{Funding:} $\adv$ eventually attempts to run the DKG protocol. To run this protocol, $\simulator_\dealer$ first samples a keypair $(\pkT, \skT) \gets \keygen(\secpar)$ and interacts with $\adv$ using $\simulator_\dkg$ on input $\pkT$. Eventually, $\adv$ receives back $\skTP$. 
After running this $\dkg$, the adversary eventually attempts to post $(\tx_\player, \sigma_\player)$ on-chain. 
The simulator computes a signature $\sigma_\dealer$ and provides the pair $(\tx_\dealer, \sigma_\dealer)$ to $\adv$. 
In addition, the simulator invokes the \texttt{Freeze} interface of $\lotteryfunc$. 

\noindent\underline{Claim Dealer:}
Eventually, $\adv$ sends a request $\request$ to the simulator. 
$\simulator_\dealer$ then computes the response $\response \gets \simulator_\oprf.\BlindEv(\request)$, samples $(Z, z)\gets \genRelation(\secpar)$, and $\ciphertext_\dealer \gets \Pi_\enc.\enc(Z, \response)$.
Finally, $\simulator_\dealer$ simulates $\nizkProof_\dealer$ using $\simulator_\nizk$ on input $\proofstatement = (\pk_\dealer, Z, \request, \ciphertext_\dealer)$ and forwards $(Z, \ciphertext_\dealer, \nizkProof_\dealer)$ to $\adv$. 
Then, to run both pre-signing protocols, $\simulator_\dealer$ locally pre-signs each $\tx_{\player\dealer}, \tx_{\dealer\player}$ using $\skT$, resulting in $\presig_{\player\dealer}$, $\presig_{\dealer\player}$, and uses $\simulator_\pSign$ on input $(\skTP, \presig_{\player\dealer})$ and $(\skTP, \presig_{\dealer\player})$, to enforce these pre-signatures as outputs of the pre-signing protocols. 
Finally, $\simulator_\dealer$ adapts $\presig_{\player\dealer}$ using $z$, and outputs $\sigma_{\player\dealer}$ to $\adv$. 
In addition, $\simulator_\dealer$ calls the \texttt{ClaimDealer} interface of $\lotteryfunc$, which responds with a bit $b$. 

\noindent\underline{Claim Party:} 
Eventually, $\adv$ queries the $\Prim$ interface on a valid OPRF preimage. 
If this is the case, and $b=1$, $\simulator_\dealer$ returns $\paymentWit$ to $\adv$. If $\adv$ attempts to send $(\tx_{\dealer\player}, \sigma_{\dealer\player})$ to the blockchain, $\simulator_\dealer$ calls the \texttt{ClaimParty} interface of \lotteryfunc. 

\noindent\underline{Timeout:}
If funding was successful, but $\adv$ did not participate in the protocol until $\timet_\dealer$ is up, $\simulator_\dealer$ signs $\tx_{\refund\dealer}$, sends $\abort$ to $\lotteryfunc$, and forwards $(\tx_{\refund\dealer}, \sigma_{\refund\dealer})$ to $\adv$. 

On top of this simulation, $\simulator_\dealer$ aborts if the adversary forges signatures under $\pk_\dealer$ (c.f.,~\cref{gamehop:simdealer:abort0}) or $\pkT$ (c.f.,~\cref{gamehop:simdealer:abort1,gamehop:simdealer:aborttwo}), or obtains unfaithful \oprf evaluations (c.f.,~\cref{gamehop:simdealer:abortsinglequery}).

We next show that the differences between two neighboring experiments are computationally indistinguishable. 
Since we have only polynomially many experiments, this establishes the first part of our theorem, since by this negligible transition, we have 
\[
	\idealview_{\lotteryfunc, \simulator_\dealer} \approx \realview_{\Pi_\lottery^\blockchain, \adv_\player}.
\]
\compclose{gamehop:simdealer:honest}{gamehop:simdealer:nizk} 
	These experiments differ only in the way how zero-knowledge proofs are computed by $\simulator_\dealer$. 
	By the zero-knowledge property of $\proofsys$, indistinguishability follows canonically. 

\compclose{gamehop:simdealer:nizk}{gamehop:simdealer:jointDKG} These experiments differ only in how the two-party DKG protocol is executed. In~\cref{gamehop:simdealer:nizk}, $\simulator_\dealer$ runs the honest DKG protocol of~\cref{fig:proto:jointDKG} with $\adv$. In~\cref{gamehop:simdealer:jointDKG}, it is replaced by the 2PC simulator $\simulator_\dkg$. Since~\cref{thm:main} assumes $\Gamma_\mathsf{KG}$ is a secure two-party protocol, such a simulator exists, and the adversary's view is computationally indistinguishable across the two experiments.

\compclose{gamehop:simdealer:jointDKG}{gamehop:simdealer:jointpresign} These experiments differ only in how the two joint pre-signing protocols of~\cref{fig:proto:jointpresign} are executed. In~\cref{gamehop:simdealer:jointDKG}, $\simulator_\dealer$ runs the honest pre-signing protocol. In~\cref{gamehop:simdealer:jointpresign}, it is replaced by the 2PC simulator $\simulator_\pSign$. Since~\cref{thm:main} assumes $\Gamma_\pSign$ is a secure two-party protocol, the adversary's view is computationally indistinguishable.

\compclose{gamehop:simdealer:jointpresign}{gamehop:simdealer:oprf} These experiments differ in the way how \oprf queries are handled by $\simulator_\dealer$. If an efficient adversary could distinguish this change, it would by definition break the pseudorandomness of $\oprf$. 

\compclose{gamehop:simdealer:oprf}{gamehop:simdealer:abort0} Indistinguishability between these experiments follows canonically from the unforgeability of $\Sigma$. 

\compclose{gamehop:simdealer:abort0}{gamehop:simdealer:abort1} We show the computational indistinguishability between \cref{gamehop:simdealer:abort0} and \cref{gamehop:simdealer:abort1} by building a reduction $\bdv$ that uses an efficient distinguisher of these experiments and breaks the extractability of $\adaptorScheme$. 
As input, $\bdv$ receives a public key $\pk$, and it sets $\pkT = \pk$. In addition, $\bdv$ has access to a pre-signing oracle and a signing oracle that output (pre-)signatures valid on $\pkT$ on messages (and statements) of $\bdv$'s choice. 
Whenever $\simdealer$ needs to pre-sign, it forwards the respective message-statement pair to the pre-sign oracle. During simulation, $\simdealer$ never needs to sign messages using $\skT$. 
With these modifications, $\bdv$ simulates~\cref{gamehop:simdealer:oprf} to $\adv$. 
By assumption, there exists an efficient distinguisher between the experiments. 
As the experiments only differ, if the adversary attempts to post a tuple $(\tx^*, \sigma^*)$ to $\blockchain$, in which $\sigma^*$ is valid w.r.t. $\pkT$ on $\tx^*$, and $\sigma^*$ does not allow extracting a valid witness for $\paymentStmt$ from $\presig_{\dealer\player}$, i.e., $(\paymentStmt, \adaptorScheme.\extract(\pkT, \presig_{\dealer\player}, \sigma^*,\paymentStmt))\notin \relation$, this must be the case. 
Otherwise, the experiments would be perfectly indistinguishable. 
If the adversary outputs such a forgery, $\bdv$ outputs it as well. By definition, this breaks the extractability of $\adaptorScheme$. 
In addition, the simulation of $\bdv$ is perfect, as we replace pre-signing with perfect oracles and efficient, as \cref{gamehop:simdealer:oprf} is efficient. 

\compclose{gamehop:simdealer:abort1}{gamehop:simdealer:aborttwo} 
We demonstrate the computational closeness between these two experiments using a sub-experiment. 
This subexperiment is idential to \cref{gamehop:simdealer:abort1}, but instead of encrypting the response $\response$ in $\ciphertext_\dealer$, $\simdealer$ encrypts the generator $g$. 
We show, that before $\simdealer$ forwards $\sigma_{\dealer\player}$ to $\adv$, \cref{gamehop:simdealer:abort1} and this subexperiment are computationally indistinguishable. 
We do this by building a reduction $\bdv$ from the distinguishability between these two experiments before $\simdealer$ sends $\sigma_{\dealer\player}$, and in which $\adv$ posts such a forgery to $\adv$ to the CPA security of $\Pi_{\enc}$. In other words, by contradiction, we assume the existence of an efficient distinguisher $\ddv$. 
As input $\bdv$ receives a public key $\pk$, and sets $Z = \pk$. 
Since we only consider the experiment before $\simdealer$ sends the signature, it is sufficient for $\bdv$ to not know $z$. 
Then, $\bdv$ simulates \cref{gamehop:simdealer:abort1}. Instead of encrypting the response $\response$, it forwards the pair $(\response, g)$ to its CPA challenger, and receives back a ciphertext $\ciphertext$. 
$\bdv$ continues simulating \cref{gamehop:simdealer:abort1} with $\ciphertext_\dealer = \ciphertext$ and outputs the transcript of the experiment to $\ddv$. 
Eventually, $\ddv$ outputs a bit and $\bdv$ outputs the same bit. 
If the CPA challenger encrypts $\response$, $\bdv$ simulated \cref{gamehop:simdealer:abort1} perfectly to $\ddv$. In the other case, it simulated our subexperiment. 
Therefore, $\ddv$'s advantage in distinguishing the experiments carries over to $\bdv$'s ability to break CPA.

Next, we show that this subexperiment and \cref{gamehop:simdealer:aborttwo} are computationally indistinguishable by the hardness of the relation $\relation$. 
This holds, since in the situation of \cref{gamehop:simdealer:abort1}, the adversary can only compute signature forgeries that allow it to extract a valid witness for $\paymentStmt$. 
In the situation of our subexperiment, the adversary learns no OPRF evaluation by running the protocol. 
Therefore, we can compute a reduction $\bdv$ that receives $\paymentStmt$, simulates our subexperiment to $\adv$ (note that knowledge of \paymentWit is not required to do so, since the abort happens), and if the adversary triggers the additional abort in \cref{gamehop:simdealer:aborttwo}, $\bdv$ learns a valid witness for $\paymentStmt$ and breaks the hardness of $\relation$. 
This reduction is perfect, since a uniformly random $\paymentStmt$ is equally distributed to the way our protocol computes $\paymentStmt$, as in the situation of \cref{gamehop:simdealer:oprf}, the OPRF is a random function. 
Combining these two steps, the closeness between \cref{gamehop:simdealer:abort1,gamehop:simdealer:aborttwo} holds. 

\compclose{gamehop:simdealer:aborttwo}{gamehop:simdealer:abortsinglequery}
Both cases in which \cref{gamehop:simdealer:abortsinglequery} aborts can only happen if the adversary obtains a valid OPRF preimage (and hence a correct \oprf evaluation) without querying the $\BlindEv$ oracle. 
By the pseudorandomness of \oprf, the probability that this happens is negligible. 

\compclose{gamehop:simdealer:abortsinglequery}{gamehop:simdealer:correctprobb}
The oracle $\Prim$ simulates an idealized primitive, and hence, provides uniform outputs by the pseudorandomness of \oprf. 
Therefore, as long as $\Prim$ is not queried on an input, the output cannot be anticipated. 
In addition, our simulator sampled $\paymentWit$ uniformly at random. Therefore, this transition is computationally indistinguishable.

	Next, we consider the setting in which the adversary corrupts the dealer. We again proceed in a series of hybrid executions.

	\initialgame The first experiment is the honest execution of our protocol depicted in~\cref{fig:const}. \label{gamehop:simplayer:honest}
	
	\gamehop In \thisgame, we replace the key generation protocol of~\cref{fig:proto:jointDKG} using the 2PC simulator $\simulator_\dkg$. Such a simulator exists for this protocol, as our theorem assumes it is a secure 2PC protocol. 
		$\simulator_\dkg$ reads a key $\pkT \in \GG$ and outputs $\skTP$ to $\simulator_\dealer$. 
		\label{gamehop:simplayer:jointDKG}

	\gamehop In \thisgame, we replace the joint adaptor signing protocols of~\cref{fig:proto:jointpresign} using the 2PC simulator $\simulator_\pSign$. Such a simulator exists for this protocol, as our theorem assumes it is a secure 2PC protocol. 
		$\simulator_\pSign$ reads a public key $\pkT \in \GG$, a secret key $\skTP$, and a pre-signature $\presig$ and has no output to $\simulator_\dealer$. 
		$\simulator_\pSign$ ensures, that the pre-signature computed in the 2PC equals $\presig$.
		\label{gamehop:simplayer:jointpresign}
	
	\gamehop In \thisgame, the simulator computes the OPRF outputs of all possible guesses $\hat \witnessGuess \in [2^\ell]$. If none of these outputs matches $\paymentWit$, the simulator aborts with the message $\abort_0$.
			 The simulator stores the winning guess $\witnessTarget$ and $\paymentWit$, which are found if the simulator does not abort. \label{gamehop:simplayer:extractwit}
	
	\gamehop In \thisgame, the simulator replaces $\oprf.\Request(\witnessGuess)$ in the claim dealer phase by $\oprf.\Request(\witnessTarget)$. However, only if the functionality returns $b=1$, it adapts $\presig_{\dealer\player}$ with $\paymentWit$. \label{gamehop:simplayer:oprfpriv}
	
	\gamehop In \thisgame, the simulator aborts with message $\abort_1$, if extraction of $z$ fails in the funding phase, i.e., if the adversary posts some signature $\sigma^*$ which is valid under $\pkT$, but it holds that $(Z, \adaptorScheme.\extract(\pkT, \presig_{\player\dealer}, \allowbreak \sigma^*,Z))\notin \relation$. \label{gamehop:simplayer:extractadaptor}
	
	\gamehop In \thisgame, the simulator aborts with message $\abort_2$, if in the claim phase after decrypting $\ciphertext$, the result is not well-formed, i.e.,  $\response \neq \request^{\sk_\dealer}$. \label{gamehop:simplayer:soundnessct}
	
	\gamehop In \thisgame, the simulator aborts with message $\abort_3$, if adapting the pre-signature $\presig_{\dealer\player}$ with $\paymentWit$ fails, i.e., it holds that $\Sigma.\vrfy(\pkT, \tx_{\dealer\player}, \adaptorScheme.\adapt(Y, \presig_{\dealer\player}, \paymentWit))=0$. \label{gamehop:simplayer:adaptability}
	
	\gamehop In \thisgame, the simulator aborts with message $\abort_4$ if the adversary attempts to post a tuple $(\tx^*, \sigma^*)$ to $\blockchain$, in which $\sigma^*$ is valid w.r.t. $\pk_\player$ on $\tx^*$, and $\tx^* \notin \{\tx_{\player}, \tx_{\refund\player}\}$. 
		\label{gamehop:simplayer:abortforgesigma}\label{gamehop:simplayer:last}

\noindent\textbf{Simulator $\simulator_\player$.} The execution of the simulator is defined as the execution in ~\cref{gamehop:simplayer:last}. 
The simulator $\simulator_\player$ reads as input $(\buyout, \winningProb, \ell, \timet_\dealer, \timet_\player)$ and starts by computing a signing keypair $(\pk_{\player}, \sk_{\player})\gets \kgen(\secpar)$. 
Then, $\simulator_\player$ internally executes $\adv$ on input $(\buyout, \winningProb, \ell, \timet_\dealer, \timet_\player, \pk_\player)$. 
Eventually, $\adv$ outputs $(\pk_\dealer, \paymentStmt, \pi)$. 

\noindent\underline{Setup:} $\simulator_\player$ sends $(\buyout, \winningProb, \pk_\dealer, \pk_\player, \timet_\dealer, \timet_\player)$ to $\lotteryfunc$. 

\noindent\underline{Funding:}, $\adv$ eventually attempts to run the DKG protocol. To run this protocol, $\simulator_\player$ first samples a keypair $(\pkT, \skT) \gets \keygen(\secpar)$ and interacts with $\adv$ using $\simulator_\dkg$ on input $\pkT$. Eventually, $\adv$ receives back $\skTD$. 
After running this $\dkg$, the adversary eventually attempts to post $(\tx_\dealer, \sigma_\dealer)$ on-chain. 
The simulator computes a signature $\sigma_\player$ and provides the pair $(\tx_\player, \sigma_\player)$ to $\adv$. 
In addition, the simulator invokes the \texttt{Freeze} interface of $\lotteryfunc$. 

\noindent\underline{Claim Dealer:}
The simulator extracts $(\witnessTarget', \sk_\dealer')$ using the proof system's extractor. 
If this extraction does not yield a valid secret key for $\pk_\dealer$, or $\paymentStmt \neq g^{\oprf.\eval(\sk_\dealer', \witnessTarget')}$, the simulator aborts with the message $\abort_0$.
The simulator stores $\paymentWit$, which is found if the simulator does not abort. 
Then, the simulator computes the request $\request \gets \oprf.\Request(\witnessTarget)$. and sends $\request$ to $\adv$, which eventually responds with the tuple $(Z, \ciphertext_\dealer, \nizkProof_\dealer)$ to $\adv$. 
To run both pre-signing protocols, $\simulator_\player$ locally pre-signs each $\tx_{\player\dealer}, \tx_{\dealer\player}$ using $\skT$, resulting in $\presig_{\player\dealer}$, $\presig_{\dealer\player}$, and uses $\simulator_\pSign$ on input $(\skTP, \presig_{\player\dealer})$ and $(\skTP, \presig_{\dealer\player})$, to enforce these pre-signatures as outputs of the pre-signing protocols. 
Eventually, $\adv$ posts $(\tx_{\player\dealer}, \sigma_{\player\dealer})$ to $\blockchain$. 
If this happens, the simulator calls the \texttt{ClaimDealer} interface of $\lotteryfunc$, which responds with a bit $b$. 

\noindent\underline{Claim Party:} 
If $b=1$, the simulator adapts $\presig_{\dealer\player}$ with $\paymentWit$ into $\sigma_{\dealer\player}$ and calls the \texttt{ClaimParty} interface of \lotteryfunc. 

\noindent\underline{Timeout:}
If funding was successful, but $\adv$ did not participate in the protocol until $\timet_\player$ is up, $\simulator_\player$ signs $\tx_{\refund\player}$, sends $\abort$ to $\lotteryfunc$, and forwards $(\tx_{\refund\player}, \sigma_{\refund\player})$ to $\adv$. 

On top of this simulation, $\simulator_\player$ aborts if the adversary forges signatures under $\pk_\player$ (c.f.,~\cref{gamehop:simplayer:abortforgesigma}) or $\pkT$ (c.f.,~\cref{gamehop:simplayer:extractadaptor}), or computes the encrypted OPRF response unfaithfully (c.f.,~\cref{gamehop:simplayer:soundnessct}), or adapting the adaptor signature fails (c.f.,~\cref{gamehop:simplayer:adaptability}).

We next show that the differences between two neighboring experiments are computationally indistinguishable. 
Since we have only polynomially many experiments, this establishes the first part of our theorem, since by this negligible transition, we have 
\[
	\idealview_{\lotteryfunc, \simulator_\player} \approx \realview_{\Pi_\lottery^\blockchain, \adv_\dealer}.
\]

\compclose{gamehop:simplayer:honest}{gamehop:simplayer:jointDKG} This transition is analogous to~\cref{gamehop:simdealer:nizk}$\approx$\cref{gamehop:simdealer:jointDKG} in the simulation of the dealer: the honest DKG protocol is replaced by the 2PC simulator $\simulator_\dkg$, and indistinguishability follows from the security of $\Gamma_\mathsf{KG}$.

\compclose{gamehop:simplayer:jointDKG}{gamehop:simplayer:jointpresign} This transition is analogous to~\cref{gamehop:simdealer:jointDKG}$\approx$\cref{gamehop:simdealer:jointpresign} in the simulation of the dealer: the honest pre-signing protocol is replaced by the 2PC simulator $\simulator_\pSign$, and indistinguishability follows from the security of $\Gamma_\pSign$.

\compclose{gamehop:simplayer:jointpresign}{gamehop:simplayer:extractwit} This transition is negligible by the knowledge soundness of the proof system $\proofsys$. The experiments differ only when witness extraction fails, violating knowledge soundness.  

\compclose{gamehop:simplayer:extractwit}{gamehop:simplayer:oprfpriv}
The indistinguishability between these experiments follows from the request privacy of $\oprf$, and we build a reduction to show this. 
The challenger for request privacy provides two interfaces $\Request$ and $\Finalize$ to our simulator. 
Instead of computing the request as in \cref{gamehop:simplayer:extractwit}, the simulator calls $\Request(\pk_\dealer, \witnessGuess, \witnessTarget)$, and receives back two requests. 
It forwards the first request to $\adv$ during the claim dealer phase. 
The rest of~\cref{gamehop:simplayer:extractwit} can be simulated perfectly by the simulator. 
Eventually, the adversary finishes, and our reduction provides the transcript to the distinguisher, which outputs a bit $b'$. 
Our reduction outputs the same bit $b'$ as the guess against the request privacy property of $\oprf$. 
If the request privacy challenger set $b=0$, then our reduction sent the request for $\witnessGuess$, and hence simulated~\cref{gamehop:simplayer:extractwit} perfectly. 
In the other case, if $b=1$, it simulated~\cref{gamehop:simplayer:oprfpriv} perfectly. 
Hence, any ability to distinguish these experiments leads to our reduction breaking the request privacy of \oprf. 
Therefore, the gap between the experiments is negligible.

\compclose{gamehop:simplayer:oprfpriv}{gamehop:simplayer:extractadaptor}
Indistinguishability between these experiments follows from the extractability of $\adaptorScheme$: If there is a gap between the experiments, then the adversary outputs a signature valid under $\pkT$ which does not allow extracting with $\presig_{\player\dealer}$. In this situation, we can build a reduction that receives and provides $\pkT$ as input. The simulator computes all pre-signatures using the provided pre-sign oracle, and simulates the experiment as in~\cref{gamehop:simplayer:extractadaptor}. When an abort occurs, the simulator outputs the forged signature. 
By assumption of the abort, this signature forgery breaks extractability, while our reduction is efficient. Therefore, this abort happens only with negligible probability. 

\compclose{gamehop:simplayer:extractadaptor}{gamehop:simplayer:soundnessct}
Indistinguishability between these experiments follows from the soundness of the proof system $\proofsys$: If the decryption of the simulator does not equal the well-formed \oprf evaluation, the adversary managed to compute a valid proof for an invalid statement, and the simulator can use this proof to break soundness. 

\compclose{gamehop:simplayer:soundnessct}{gamehop:simplayer:adaptability}
Indistinguishability between these experiments follows canonically from the adaptability of $\adaptorScheme$. 

\compclose{gamehop:simplayer:adaptability}{gamehop:simplayer:abortforgesigma}
Indistinguishability between these experiments follows canonically from the unforgeability of $\Sigma$. 

\end{proof}

%% file: extensions.tex
\section{Extensions and Limitations} \label{sec:discussion} \label{sec:extensions}

\mypar{Cross-chain swaps.} 
Our proposed probabilistic swap definition and protocols rely solely on the signature verification and locking capabilities of the underlying blockchain, similar to their deterministic counterparts~\cite{SP:ThyMalMor22}. 
Therefore, our protocol is also applicable to settings in which the dealer uses assets on a blockchain different from the party's. 
The only difference is that both parties lock the funds in two joint accounts, one on each blockchain. The remainder of the protocol remains unchanged. In case the two chains employ signature schemes across different curves (e.g., Bitcoin and Cardano), we would require additional cross group equality proofs for discrete log statements which are standard in the literature.

\mypar{Any winning probability.} Our protocol currently only supports a winning probability of $1/m$ for $m \in \NN$. 
This is because the party can evaluate a single OPRF per query, and we used a range proof that shows $y_\target \leq m$. 
We can slightly modify our protocol to enable any winning probability $a/b \in \mathbb{Q}$. 
For the denominator, we simply set $m=b$. 
To modify the numerator from $1$ to $a$, the dealer can allow the party to simultaneously send $a$ requests, which the dealer evaluates and encrypts under the same $z$. 
This allows the party to make $a$ guesses per exchange, enhancing the winning probability to $a/b$, while keeping the on-chain footprint unchanged. 

\mypar{Post-quantum security.} 
Our current instantiation is built on Schnorr signatures and hence relies on the hardness of the discrete logarithm problem. 
We made this choice to be compatible with Bitcoin and related blockchains at the time of writing. 
Nevertheless, we conjecture that our protocol can also be instantiated using post-quantum (PQ) secure building blocks. 
In particular, there exist PQ secure 2Hash-based OPRFs~\cite{EC:BDFH25} as well as post-quantum adaptor signature schemes~\cite{FC:TaiMorMaf21,ESORICS:EsgErsErk20}. 
Corresponding CPA-secure public-key encryption schemes, compatible in the sense that adaptor signatures can be used to exchange decryption keys, include Regev’s LWE-based encryption scheme~\cite{10.1145/1568318.1568324} and isogeny-based PKE~\cite{EC:BasMai25}, respectively. 
Moreover, PQ proof systems exist that can verify the well-formedness of ciphertexts and OPRF evaluations~\cite{C:GLSTW23}. 


\mypar{Avoiding on-chain timelocks.}
One on-chain difference between our probabilistic swap protocol and the deterministic counterpart of~\cite{SP:ThyMalMor22} is that, in addition to requiring on-chain signature verification, our protocol also requires on-chain coin locking. 
We added this to our ideal functionality, since most deployed swap protocols rely on this kind of coin-locking. 
However, we can also use verifiable timed signatures (VTS) from~\cite{SP:ThyMalMor22} to realize this without using on-chain scripts. 
This changes the on-chain footprint of our protocol to just ordinary transactions without any scripting. 
The security of our protocol still holds, but~\cref{thm:main} would require the VTS security on top. 

\mypar{Deployment on the Lightning network.}
Our protocol can also be slightly modified to run on PTLC-based layer2 networks, such as the Lightning Network, once PTLCs become active. 
To do this, the dealer and party run the funding phase only once, locking the funds into the payment channel. 
Then, they can run the protocol several times by just executing the claim dealer and claim party protocol subparts. 
Running these subparts reveals the adaptor witnesses needed to settle the corresponding channel updates. 
Once enough probabilistic exchanges have happened, both parties can close the channel. 
This results in an on-chain footprint identical to the footprint of an ordinary payment channel. 
While PTLCs are not yet deployed for the Lightning Network, they are actively under development~\cite{bitcoinops-ptlc}, so we expect them to be available in the next few years. 
For today's HTLC-based Lightning interface, we can bind the claim dealer and claim party protocols to the release of hash-preimage, instead of adaptor signatures using off-chain link proofs. We demonstrate the feasibility of this in a variant of our prototype as described in~\cref{sec:benchmarks}.

\mypar{Difficulty of two-sided probabilistic swaps.}
We provide intuition why protocols in which both parties obtain probabilistic outputs cannot be realized using minimal blockchain functionality, following the impossibility result of~\cite{EPRINT:BGKS25}.
We model the parties as interactive machines that sequentially execute a probabilistic swap protocol.
At the start of the protocol, neither party knows whether it will receive its payout; otherwise, a party facing an unfavorable outcome would not participate.
At the end of the protocol, however, both parties learn their outcomes.
Therefore, for each party, there must exist a \emph{first} message after which it can determine whether it will receive its payout; following~\cite{EPRINT:BGKS25}, we call such a message \emph{deciding}.
Consider the deciding message for party $j$.
Upon receiving it, party $j$ learns its partial outcome from the protocol and, if it is unfavorable, party $j$ has no incentive to continue and can simply abort.
Crucially, in any sequential protocol on a scriptless blockchain, the deciding messages for the two parties cannot arrive simultaneously~\cite{EPRINT:BGKS25}. 
One party must receive its deciding message strictly before the other. 
At that point, if the outcome is unfavorable, that party can abort before the other party receives its own deciding message. 
This creates an inherent asymmetry: the aborting party can condition its behavior on information that the other party does not yet have.
This argument shows that, in the absence of additional trust assumptions or stronger on-chain functionality such as a randomness beacon~\cite{CCS:CGJKM17,NDSS:KapGreMie19} or enforceable execution~\cite{ESPW:BilBen17,FC:BarZun17}, two-sided probabilistic swaps cannot be realized.
In contrast, our asymmetric formulation avoids this impossibility by ensuring that only one party's outcome is probabilistic, and hence that only one deciding message is required.

%% file: prototype.tex
\section{Prototype and Benchmarks}
\label{sec:benchmarks}

We implement an $\ell$-bit instantiation of our probabilistic swap protocol in Rust, giving winning probability $p=2^{-\ell}$, and evaluate it on both the Bitcoin and Litecoin testnets. 
The artifact implements the protocol equations, transcripts, OPRF-based claim mechanism, and transaction flows%
\ifnum\fullversion=1 and is available at~\cite{prototype}%
\fi. 
We implement both instantiations of the well-formedness proof for $Y_{\mathsf{win}}$: Bulletproofs over an R1CS constraint system using the constraint-friendly map-to-elliptic-curve-group relation of Groth et al.~\cite{AC:GMMZ25} and a MiMC-based hash-to-scalar, and our cut-and-choose construction with $\lambda=80$. We instantiate the adaptor signature scheme with Schnorr signatures over the secp256k1 curve, matching the cryptographic primitives natively supported by Bitcoin and Litecoin Taproot. The implementation uses the Rust crates \texttt{k256} for secp256k1 arithmetic, \texttt{ark-bulletproofs} and \texttt{merlin} for the Bulletproof backend and transcripts, and \texttt{bitcoin}/\texttt{bitcoincore-rpc} for Taproot transaction construction and daemon integration. The OPRF follows the 2HashDH construction from~\cite{7467360}. All benchmarks were obtained on a M3 Max MacBook Pro with 128GB memory.

\mypar{Micro benchmarks.}
We evaluated the local computation time of each aspect of our protocol and show benchmarks for the proof of well-formedness of $\paymentStmt$ in~\cref{tab:benchmarks_proofs}. As expected, all other local cryptographic operations are sub-millisecond, as shown in~\cref{tab:microbenchmarks}.

For the well-formedness proof, cut-and-choose is faster for proving at small $\ell$, with proof size and running time growing exponentially in $\ell$. At $\ell=10$, it proves in $0.27$s and verifies in $0.25$s, with a $296$KB proof. At $\ell=16$, the proof grows to roughly $16$MB and takes roughly $16$s to prove and $15$s to verify. The Bulletproof instantiation has a relatively constant proving time of $5.6$s, verification time of $0.35$s, and proof size of $2.8$KB.
We benchmarked the Bulletproof-based instantiation up to $\ell=128$, which still falls within the same range. 
These measurements were expected, since the cut-and-choose approach has to evaluate a Schnorr OR proof over all $2^\ell$ elements, so it scales linearly in the winning probability, whereas the Bulletproof-based approach runs a rangeproof, and an R1CS proof of correct hash-to-group, group exponentiation, and hash-to-scalar, which all require constant overhead (or logarithmic for the range-proof).
In practice, the choice of instantiation thus depends on the target winning probability and on whether the limiting resource is prover time or proof size. For large $\ell$ (i.e., small winning probability $p \ll 1$), the Bulletproof instantiation is preferable. For small $\ell$ (e.g., $p \geq 1/4096$), the cut-and-choose construction provides a sound and faster-proving alternative, while Bulletproof verification and proof size remain nearly constant.
\begin{table}
\centering
\ifnum\fullversion=0 
\scriptsize
\fi
\caption{Benchmark results for well-formedness proofs at selected values of $\ell$ ($p = 1/2^\ell$). 
We benchmark both a vanilla cut-and-choose, and a version with batched Schnorr proofs.}
\label{tab:benchmarks_proofs}

\begin{tabular}{r rrr rrr rrr}
\toprule
& \multicolumn{3}{c}{\textbf{Bulletproofs}} & \multicolumn{3}{c}{\textbf{Cut-and-Choose}} & \multicolumn{3}{c}{\textbf{Batched C\&C}} \\
\cmidrule(lr){2-4}\cmidrule(lr){5-7}\cmidrule(lr){8-10}
$\ell$ & Prove & Verify & Size & Prove & Verify & Size & Prove & Verify & Size \\
 & (s) & (s) & (KB) & (s) & (s) & (KB) & (s) & (s) & (KB) \\
\midrule
1  & 5.64 & 0.347 & 2.8 & 0.020 & 0.019 & 42.2  & 0.013 & 0.009 & 27.8  \\
4  & 5.63 & 0.347 & 2.8 & 0.024 & 0.022 & 45.7  & 0.016 & 0.012 & 31.4  \\
8  & 5.62 & 0.346 & 2.8 & 0.083 & 0.075 & 105.1 & 0.076 & 0.065 & 90.8  \\
10 & 5.58 & 0.346 & 2.8 & 0.272 & 0.246 & 295.7 & 0.265 & 0.235 & 281.4 \\
12 & 5.65 & 0.347 & 2.8 & 1.029 & 0.926 & 1057.9& 1.021 & 0.916 & 1043.3\\
16 & 5.54 & 0.347 & 2.8 & 16.19 & 14.53 & 16{,}297 & 16.17 & 14.53 & 16{,}284 \\
\midrule
128 & 5.65 & 0.348 & 2.8 & --- & --- & --- & --- & --- & --- \\
\bottomrule
\end{tabular}

\end{table}

\mypar{Lightning HTLC variant.}
We also implemented a local two-node Lightning prototype using \texttt{ldk-node}. This prototype targets the standard HTLC interface exposed by deployed Lightning implementations, and is therefore not the PTLC/adaptor-signature channel construction discussed in~\cref{sec:extensions}. Since a standard Lightning payment reveals a SHA256 preimage rather than an adaptor-signature witness, we bind each payment hash to the corresponding protocol witness with an off-chain Groth16 proof for the relation 
\[
\{
    ((X, h), \proofwitness): X=g^\proofwitness \land h=\mathrm{SHA256}(\proofwitness)
\}.
\]
We use this relation with $X=Z$ for the dealer payment and with $X=Y_{\mathsf{win}}$ for the prize payment. In our benchmark run, the dealer precomputed a single $Y_{\mathsf{win}}$ and three fresh ticket witnesses $z_i$, opened one local Lightning channel, and executed three HTLC rounds against the same $Y_{\mathsf{win}}$; the first two rounds were losing guesses and the final round was winning. The Groth16 link proofs were generated before channel funding. This precomputation took $21.96$s, channel setup took $0.35$s, and the online HTLC execution took $1.30$s total, or roughly $0.43$s per round. Each Groth16 link proof is $128$ bytes, with a $65$ byte public statement consisting of the compressed secp256k1 point $X$ and Lightning payment hash $h$.

\begin{table}
\centering
\ifnum\fullversion=0 
\footnotesize
\fi

\caption{Microbenchmarks for protocol functions outside $\relation_\win$. Times are in milliseconds.}
\label{tab:microbenchmarks}

\begin{tabular}{@{}lr@{\quad}|@{\quad}lr@{\quad}|@{\quad}lr@{}}
\toprule
\textbf{Function} & \textbf{Time} & \textbf{Function} & \textbf{Time} & \textbf{Function} & \textbf{Time} \\
\midrule
$\Gamma_\mathsf{DL}.\proofprove$ & $0.060$ & $\oprf.\eval$ & $0.118$ & $\Gamma_\pSign$ & $0.120$ \\
$\Gamma_\mathsf{DL}.\proofverify$ & $0.103$ & $\Pi_\enc.\enc$ & $0.091$ & $\pVrfy$ & $0.145$ \\
$\oprf.\Request$ & $0.064$ & $\Pi_\enc.\dec$ & $0.050$ & $\adapt$ & $0.287$ \\
$\oprf.\BlindEval$ & $0.043$ & $\proofsys.\proofprove(\relation_\enc)$ & $0.300$ & $\extract$ & $0.043$ \\
$\oprf.\Finalize$ & $0.109$ & $\proofsys.\proofverify(\relation_\enc)$ & $0.359$ & $\vrfy$ & $0.101$ \\
\bottomrule
\end{tabular}

\end{table}

\input{figure_proofs.tex}

\mypar{On-chain cost and fees.}
A complete winning probabilistic swap execution involves exactly four on-chain transactions: two funding transactions (one per party), one payment-claim transaction by the dealer, and one reward-claim transaction by the party. Each transaction uses a standard locking script $\Lambda(\mathsf{pk}_\mathsf{tmp}, T, \mathsf{pk})$ as defined in~\cref{fig:lockingscripts}, which requires only a single signature for the claim path and a timelock check for the refund path. These scripts are already supported natively by Bitcoin and Litecoin Taproot via \texttt{OP\_CHECKSIG} and \texttt{OP\_CHECKLOCKTIMEVERIFY}~(BIP~65), and do not require any additional opcodes or custom spending conditions beyond what standard atomic swaps already use~\cite{SP:ThyMalMor22}. Consequently, probabilistic swap transactions have the same on-chain structure as ordinary script-path Taproot spends.

In our public Bitcoin testnet4 run, the Bitcoin reward funding transaction paid a fee of $1550$ sat and had weight $616$WU, while the Bitcoin reward claim paid a fee of $500$ sat and had weight $545$WU.

\mypar{Cross-chain swap: Bitcoin $\times$ Litecoin.}
To test the cross-chain variant from~\cref{sec:extensions}, we ran the protocol on Bitcoin testnet4 and Litecoin testnet. Both chains support Taproot Schnorr spends over secp256k1, so the same adaptor-signature code applies on both sides. The public runs used $\ell=8$, hence winning probability $1/256$, and the batched cut-and-choose proof with $\lambda=80$. In the winning run, the dealer locked the Bitcoin reward and the party locked the Litecoin payment. The dealer claimed the Litecoin payment by revealing the adaptor witness $z$; the party extracted $z$, finalized the OPRF, and claimed the Bitcoin reward using $y_{\mathsf{win}}$.

We repeated the experiment with a losing guess on the same public test networks. The dealer still claimed the Litecoin payment, which revealed $z$. The party extracted $z$ and finalized the OPRF, but the result was not the winning adaptor witness. Adapting the Bitcoin reward signature therefore failed, and the dealer later spent the Bitcoin refund path after the timeout.

\begin{table}
\centering
\footnotesize

\caption{On-chain transaction IDs for Bitcoin $\times$ Litecoin cross-chain probabilistic swaps.}
\label{tab:txids}

\begin{tabular}{llll}
\toprule
\textbf{Run} & \textbf{Event} & \textbf{Chain} & \textbf{Transaction ID} \\
\midrule
Win  & Dealer funding   & Bitcoin testnet4 & \href{https://mempool.space/testnet4/tx/c69fc5c174fa9013367fca5870e333150c7928432c8532857e3015e8fedcec33}{\texttt{c69fc5c1...fedcec33}} \\
Win  & Party funding    & Litecoin testnet & \href{https://blockexplorer.one/litecoin/testnet/tx/32aa28fc6b66cf419ddf974189c0d7fa514598cfad049c8ecc0c6e453fc4f32e}{\texttt{32aa28fc...3fc4f32e}} \\
Win  & Dealer claim     & Litecoin testnet & \href{https://blockexplorer.one/litecoin/testnet/tx/e71cae45ab0d68ebaa6b6f314b0a8e808857b2bdb39c4cba7cde7c740e96c04f}{\texttt{e71cae45...740e96c04f}} \\
Win  & Party claim      & Bitcoin testnet4 & \href{https://mempool.space/testnet4/tx/7ad17bae032b5e9579582a83a1139301ea119d8cd86808716ede50a127e0fff8}{\texttt{7ad17bae...127e0fff8}} \\
Lose & Dealer funding   & Bitcoin testnet4 & \href{https://mempool.space/testnet4/tx/0f42a6e66756d053c1e4f92018cae71ac3d9aa3d3e6b5c2a70d608e2aab1ffd1}{\texttt{0f42a6e6...aab1ffd1}} \\
Lose & Party funding    & Litecoin testnet & \href{https://blockexplorer.one/litecoin/testnet/tx/08c9e270c522aa78da5b5f6494ec0556e959578052e7cc7da3d990280381fbde}{\texttt{08c9e270...0381fbde}} \\
Lose & Dealer claim     & Litecoin testnet & \href{https://blockexplorer.one/litecoin/testnet/tx/43a5c8648f42974a91b80e1de0b252700da316b4ea59ab14ee2036fb2cf78e6d}{\texttt{43a5c864...2cf78e6d}} \\
Lose & Dealer refund    & Bitcoin testnet4 & \href{https://mempool.space/testnet4/tx/6014382a6fdfe7326db91e7691f5e962ee2be322a0b7354ba5aad58806955d89}{\texttt{6014382a...06955d89}} \\
\bottomrule
\end{tabular}
\end{table}

%% file: figure_proofs.tex
\begin{figure}
\centering

\definecolor{bpblue}{RGB}{44,123,182}
\definecolor{ccgreen}{RGB}{31,160,120}

\ifnum\fullversion=0
  \pgfplotsset{mywidth/.style={width=0.2\textwidth, height=0.2\textwidth},
               myverifyplot/.style={xshift=-0.2cm}}
  \def\myhsep{1cm}
\else
  \pgfplotsset{mywidth/.style={width=0.33\textwidth, height=0.33\textwidth},
               myverifyplot/.style={}}
  \def\myhsep{1.2cm}
\fi

\begin{tikzpicture}
\begin{groupplot}[
    group style={group size=3 by 1, horizontal sep=\myhsep},
    mywidth,
    xmin=1, xmax=15,
    xtick={1,3,5,7,9,11,13,15},
    xlabel={$\ell$},
    grid=major,
    grid style={draw=gray!25},
    tick style={black!60},
    axis line style={black!70},
    label style={font=\small},
    tick label style={font=\small},
    legend style={
        fill=none,
        font=\scriptsize,
        legend columns=2,
        /tikz/every even column/.append style={column sep=0.8cm},
        at={(2,1.1)},
        anchor=south
    },
    unbounded coords=jump,
        ylabel style={yshift=-0.2cm, font=\scriptsize},
]

\nextgroupplot[
    ylabel={Prove time (s)},
    ymin=0, ymax=10,
    ytick={0,2,4,6,8,10},
]

\addplot[
    very thick,
    color=bpblue,
    restrict expr to domain={\thisrow{ell}}{1:15}
]
table[
    x=ell,
    y=bp_prove_s,
    col sep=comma
] {results.csv};
\addlegendentry{BP}

\addplot[
    very thick,
    color=ccgreen, dashed,
    restrict expr to domain={\thisrow{ell}}{1:15}
]
table[
    x=ell,
    y=batched_cut_and_choose_prove_s,
    col sep=comma
] {results.csv};
\addlegendentry{Cut-and-choose}

  \nextgroupplot[
      ylabel={Verify time (s)},
      ymin=0, ymax=8,
      ytick={0,2,4,6,8},
      myverifyplot
  ]

\addplot[
    very thick,
    color=bpblue,
    restrict expr to domain={\thisrow{ell}}{1:15}
]
table[
    x=ell,
    y=bp_verify_s,
    col sep=comma
] {results.csv};

\addplot[
    very thick,
    color=ccgreen, dashed,
    restrict expr to domain={\thisrow{ell}}{1:15}
]
table[
    x=ell,
    y=batched_cut_and_choose_verify_s,
    col sep=comma
] {results.csv};

\nextgroupplot[
    ylabel={Proof size (KB)},
    ymode=log,
    ymin=1, ymax=1e4,
    log basis y=10,
]

\addplot[
    very thick,
    color=bpblue,
    restrict expr to domain={\thisrow{ell}}{1:15},
]
table[
    x=ell,
    col sep=comma,
    create on use/bp_proof_kb/.style={
        create col/expr={\thisrow{bp_proof_bytes}/1024}
    },
    y=bp_proof_kb
] {results.csv};

\addplot[
    very thick,
    color=ccgreen, dashed,
    restrict expr to domain={\thisrow{ell}}{1:15},
]
table[
    x=ell,
    col sep=comma,
    create on use/cc_proof_kb/.style={
        create col/expr={\thisrow{batched_cut_and_choose_proof_bytes}/1024}
    },
    y=cc_proof_kb
] {results.csv};
\end{groupplot}
\end{tikzpicture}

\caption{Performance and proof size for different instantiations of the well-formedness proof of $\paymentStmt$. The parameter $\ell$ corresponds to the winning party's probability via $\winningProb = 2^{-\ell}$. }
\label{fig:results}
\end{figure}

%% file: appdx_prelims.tex
\section{Preliminaries} \label{sec:appdx:prelims}
We recall preliminaries on oblivious PRFs, adaptor signatures, and zero-knowledge proofs.

\mypar{Oblivious PRFs.}
We recall the security properties of OPRFs: pseudorandomness, unlinkability, and uniqueness.
We base our definitions on the respective definitions for partially-oblivious PRFs from~\cite{EC:TCRSTW22}. These definitions imply the same notions for OPRFs, since an OPRF is a POPRF, simply by allowing just a single tag $t=\bot$. 
We provide the security experiments in~\ifnum\fullversion=0
\cref{fig:def:poprf:popriv2}%
\else%
\cref{fig:def:poprf:popriv2,fig:def:poprf:prf}%
\fi%
~and refer to~\cite{EC:TCRSTW22} for high-level explanations and the full-fledged definitions. 
We include uniqueness here, but do not require it in~\cref{thm:main}, since in our construction, well-formedness of the encrypted request is proven using an NIZK, which implies uniqueness of the decrypted value.

\newcommand{\pparam}{\ensuremath{pp}\xspace}
\newcommand{\poprf}{\oprf}
\newcommand{\randFn}{\ensuremath{\mathsf{RandFn}}\xspace}
\newcommand{\fun}{\mathsf{F}}
\newcommand{\init}{\ensuremath{\mathsf{Init}}\xspace}

\newcommand{\POPRIV}{\ensuremath{\mathsf{PoPRIV}}\xspace}
\newcommand{\UNIQUE}{\ensuremath{\mathsf{UNIQUE}}\xspace}

\begin{figure}
\centering
\ifnum\fullversion=0

	\scalebox{0.8}{%
	\parbox{\textwidth}{%
		\begin{pcvstack}
			\input{fig_poprif.tex}     
		\pcvspace
			\input{fig_oprf.tex}
		\end{pcvstack}
	}}
\else
\input{fig_poprif.tex}     
\fi

\ifnum\fullversion=0 
	\caption{Request privacy, uniqueness, and pseudorandomness experiments for OPRFs.}
	\label{fig:def:poprf:popriv2}\label{fig:def:poprf:unique}\label{fig:def:poprf:prf}
\else
\caption{Request privacy, uniqueness, and pseudorandomness experiments for OPRFs.}
	\label{fig:def:poprf:popriv2}\label{fig:def:poprf:unique}
\fi
\end{figure}

\ifnum\fullversion=1 
\begin{figure}
\centering
	\input{fig_oprf.tex}

	\caption{Pseudorandomness experiment for OPRFs.}
	\label{fig:def:poprf:prf}
\end{figure}
\fi

For the security proof of our theorem, we require non-blackbox access to the underlying structure of the OPRF. 
In particular, our simulator needs to be able to program the OPRF output to a target value. 
To allow this, we restrict our theorem to the family of two-hash OPRFs, which all share the same structure. 
We follow the abstraction from~\cite{EC:BDFH25}. 

\begin{definition}[Two Hash \oprf]
We say that an \( \oprf \) is \emph{two-hash} if it can be written as
$
\oprf(\mathsf{sk}, x) := H_2\big( F(\sk, H_1(x)) \big),
$
where $H_1$ and $H_2$ are hash functions, and $F$ can be computed efficiently. 
\end{definition}

\lightpar{Our modified 2HashDH OPRF.}
To enable cut-and-choose proofs, we slightly modify the 2HashDH OPRF construction, such that the dealer can provide openings \emph{without} revealing the OPRF evaluation's message. 
In particular, our OPRF has the following form: 
Let $\GG$ be a cyclic group of prime order $p$, and let $\hash_p$ and $\hash_\GG$ be hash functions mapping to $\ZZ_p$ and $\GG$, respectively. 
The public key of the server has the form $\pk = \left( X= g^\sk, \{g^{\alpha_i}\}, \{g^{\alpha_j}\}, \{\alpha_i \}\right)$.

To form a request for input $x$, the client runs $\Request(x)$ by sampling $r \sample \ZZ_p$ and computing $\request = \hash_\GG(x)^r$. 
Upon receiving $\request$, the server computes $\response = \BlindEval(\sk, \request) = \request^\sk$. 
Requests and responses are identical to the original 2HashDH~\cite{7467360}. 
However, to finalize, the client runs $\Finalize(x, r, \response)$ to remove the blinding and obtains $\eval(\sk, x) = \{\hash_p(\pk, \response^{\alpha_i/r})\} = \{\hash_p(\pk, \hash_\GG(x)^{\sk\cdot \alpha_i})\}$ for all elements $\alpha_i$ part of the public key.
The main difference between this OPRF and 2HashDH is two things:
First, the first input to $\hash_p$, is $\pk$ in our case and $x$ in the case of 2HashDH. The second difference is that the OPRF finalization is run on each partial public key element, instead of a single one. 

This change does not change the security of the OPRF. 
To see this, we first recall the pseudorandomness proof of 2HashDH. 
The proof of pseudorandomness of 2HashDH is based on the one-more-gap DH assumption~\cite{7467360}. The simulator handles queries to the RO $\hash_\GG$ oracle using the challenge oracle of the Gap-OMDH assumption, which outputs CDH challenges; it also sets $X$ to a CDH challenge from the help oracle. 
When the adversary queries the oracle $\hash_p$ on input $(x, Y)$, the simulator checks, if $Y$ is a valid CDH solution by querying $(\hash_GG(x), \pk, Y)$ to the DDH oracle of the Gap-OMDH assumption. 
If this is the case, and the quota $q$ of the pseudorandomness experiment is greater than or equal $0$, i.e., the adversary did more $\BlindEv$ queries than queries to $\hash_p$ on such valid inputs, the simulator simply queries the $\mathsf{LimEv}$ oracle, and can program the output of $\hash_p$ accordingly.
However, if the adversary comes up with such an input $(x, Y)$ that denotes a solution for CDH, but the quota $q$ is exhausted, the simulator cannot use the $\mathsf{LimEv}$ oracle to answer this query. 
Yet, this is not a problem for the simulator, since in this case the adversary provided one more solution to the CDH problem, which the simulator can output to break the Gap-OMDH assumption. 
As the Gap-OMDH assumption is hard, this edge case occurs only with negligible probability, and the simulator suffices to show pseudorandomness. 

To show the pseudorandomness of our scheme in a similar fashion, we simulate the RO $\hash_p$ and the $\BlindEv$ oracle exactly as in the proof of 2HashDH. 
However, when the adversary provides a value $(\pk, Y)$ to the $\hash_p$ oracle, we handle the situation differently. 
This is, the simulator checks for all pairs $(x, H_p(x))$ the adversary received from $\hash_p$ so far, and all values $\alpha_i \in \pk$, if the triple $(\pk, H_p(x), Y^{1/\alpha_i})$ denotes a valid DDH triple.
This introduces a runtime overhead of $64 \cdot |mathcal{Q}|$, where $\mathcal{Q}$ is the size of queries to $H_p$, but is still efficient. 
In addition, the domain of the OPRF is just 64x bigger, so for each message $x$, the simulator partitions the output of the $\mathsf{LimEv}$ oracle in 64 parts, and if $(H(x), \alpha_i, pk)$ denotes a valid CDH triple, hands out the $i-$th part of this partition. This resolves the issue that for a single CDH solution, the adversary can query the oracle $\hash_p$ 64 times. 
Therefore, whenever the simulator cannot answer a query to the adversary as the $\mathsf{LimEv}$ oracle did not provide a solution, the simulator finds a solution for the Gap-OMDH assumption.

\mypar{Adaptor signatures.}
We recall the two main security guarantees of adaptor signatures, namely extractability and pre-signature adaptability following the notion of~\cite{EC:GSST24}.

\begin{definition}[Extractability] \label{def:FExt}
    An adaptor signature scheme $\adaptorScheme$ is extractable, if for every $\ppt$ adversary $\adv$ there exists a negligible function $\negl[]$ such that for every $\lambda \in \NN$
    \[
     \prob{\fext_{\adv, \adaptorScheme}(\secpar) = 1} \leq \negl[\secpar] \ ,
    \]
    where the experiment $\fext_{\adv, \adaptorScheme}$ is described in \cref{fig:fullExtractability}, and the probability is taken over the random choices of all probabilistic algorithms.
\end{definition}

\begin{figure}
\centering
\small

\ifnum\fullversion=0 
\begin{pcvstack}[boxed]
	\pcsetargs{linenumbering}
	\procedure{$\fext_{\adv, \adaptorScheme}(\secpar)$}{
		(\sk, \pk) \gets \kgen(\secparam); b \gets 1;
		\setS, \setT \gets \emptyset \\
		(m^*, \sigma^*) \gets \adv(\pk)^{\newY(\secpar), \sign(\sk,\cdot), \pSign(\sk,\cdot,\cdot)}\\
		\pcassert \vrfy(\pk, m^*, \sigma^*)\\
		\pcassert (m^* \notin \setS) \\
		\pcfor (Y, \pSig) \in \setT[m^*] \\
		\hspace{1em} \pcif (Y, \extract(Y, \pSig, \sigma^*)) \in \relation \pcthen \\
		\hspace{2em} b \gets 0 \\
		\pcreturn b
	}	
	\pcvspace
	\procedure{$\newY(\secpar)$}{
	(Y, y) \gets \relation.\genRelation(\secparam);\;
	\pcreturn Y
	}

	\pcvspace
	\begin{pchstack}
		\procedure{$\sign(\sk,m)$}{
			\sigma \gets \Sigma.\sign(\sk, m) \\
			\setS \gets \setS \cup \{m\}\\
			\pcreturn \sigma
		}	
		\pchspace
		\procedure{$\pSign(\sk,m, Y)$}{
			\pSig \gets \adaptorScheme.\pSign(\sk, m, Y)\\
			\setT[m] \gets \setT[m] \cup \{(Y, \pSig)\}\\
			\pcreturn \pSig
		}
	\end{pchstack}
\end{pcvstack}
\else

\begin{pchstack}[boxed]
	\pcsetargs{linenumbering}
	\procedure{$\fext_{\adv, \adaptorScheme}(\secpar)$}{
		(\sk, \pk) \gets \kgen(\secparam); b \gets 1;
		\setS, \setT \gets \emptyset \\
		(m^*, \sigma^*) \gets \adv(\pk)^{\newY(\secpar), \sign(\sk,\cdot), \pSign(\sk,\cdot,\cdot)}\\
		\pcassert \vrfy(\pk, m^*, \sigma^*)\\
		\pcassert (m^* \notin \setS) \\
		\pcfor (Y, \pSig) \in \setT[m^*] \\
		\hspace{1em} \pcif (Y, \extract(Y, \pSig, \sigma^*)) \in \relation \pcthen \\
		\hspace{2em} b \gets 0 \\
		\pcreturn b
	}	
	\pchspace
	
	\begin{pcvstack}

		\procedure{$\sign(\sk,m)$}{
			\sigma \gets \Sigma.\sign(\sk, m) \\
			\setS \gets \setS \cup \{m\}\\
			\pcreturn \sigma
		}	
		\pcvspace
		\procedure{$\pSign(\sk,m, Y)$}{
			\pSig \gets \adaptorScheme.\pSign(\sk, m, Y)\\
			\setT[m] \gets \setT[m] \cup \{(Y, \pSig)\}\\
			\pcreturn \pSig
		}
		\pcvspace
			\procedure{$\newY(\secpar)$}{
	(Y, y) \gets \relation.\genRelation(\secparam);\;
	\pcreturn Y
	}
	\end{pcvstack}
\end{pchstack}
\fi

	\caption{The security game $\fext_{\adv,\adaptorScheme}(\secpar)$.}
	\label{fig:fullExtractability}
\end{figure}

\begin{definition}[Pre-signature adaptability]
    \label{def:PreSignatureAdaptability}
    An adaptor signature scheme $\adaptorScheme$ satisfies
    \emph{pre-signature adaptability}, if for all $\secpar \in \NN$, messages $m \in \bin^*$,
    statement/witness pairs $(Y, y) \in \relation$, public keys $\pk$ and pre-signatures 
    $\pSig \in \bin^*$ we have $\pVrfy(\pk, m, Y, \pSig) = 1$, 
    then $\vrfy(\pk, m, \adapt(\pk, \pSig, y)) = 1$.
\end{definition}

	\subsection{Non-interactive Arguments} 
	
\newcommand{\defcal} [1]{\expandafter\newcommand\csname c#1\endcsname{\mathcal{#1}}}
\newcounter{ct}
\forloop{ct}{1}{\value{ct} < 27}{
    \edef\letter{\Alph{ct}}
    \expandafter\defcal\letter
}
\label{sec:nizk}
We follow the notion of~\cite{EC:GLRS26}.
A \emph{non-interactive argument system} (NARG) for relation $\cR$ in the random oracle model,
denoted by $\proofsys$, consists of a tuple of algorithms $(\proofsetup, \cP, \cV)$ having black-box access to a random oracle
$\hash:\bin^* \to \bin^\secpar$, with the following syntax:
\begin{itemize}[leftmargin=*,nosep]
  \item $\nizkpp \gets \setup(\secparam)$: takes as input the security parameter $\secparam$ and outputs public parameters $\nizkpp$.
  \item $\pi \gets \cP^\hash(\nizkpp,\proofstatement,\proofwitness)$: takes as input parameters $\nizkpp$, a statement $\proofstatement$ and witness $\proofwitness$, and outputs a proof $\pi$ if $(\proofstatement,\proofwitness)\in\cR$.
  \item $b \gets \cV^\hash(\nizkpp,\proofstatement,\pi)$: takes as input parameters $\nizkpp$, a statement $\proofstatement$ and proof $\pi$, and it accepts ($b = 1$) or rejects ($b = 0$).
\end{itemize}
In this work, we consider NARGs with \emph{transparent} setup, i.e. $\nizkpp$ can be generated with a call to the random oracle.
For this reason, we may omit $\nizkpp$ in the description of the prover and verifier algorithms as it is implicit.

\begin{definition}[Completeness]
  A NARG $\proofsys$ satisfies \emph{completeness} if for every $(\proofstatement,\proofwitness)\in\cR$, it holds that
\begin{align*}
\prob{
     \begin{array}{l}
		b=1: \nizkpp\gets\setup(\secparam); \pi \gets \cP^\hash(\nizkpp,\proofstatement,\proofwitness); \\ b \gets \cV^\hash(\nizkpp,\proofstatement,\pi)
\end{array}
} = 1.
\end{align*}
\end{definition}
\begin{definition}[Soundness]
  A NARG $\proofsys$ satisfies \emph{soundness} if for every PPT adversary $\adv$, it holds that
    \[
    \prob{
     \begin{array}{l}
	    b=1 \land \proofstatement\not\in\cL_\cR : \nizkpp\gets\setup(\secparam); \\
	    (\proofstatement,\pi) \gets \adv^\hash(\nizkpp); b \gets \cV^\hash(\nizkpp,\proofstatement,\pi)
      \end{array}  
    }
\]
is negligible, where $\cL_\cR\coloneqq \set{\proofstatement \,|\, \exists \proofwitness: (\proofstatement,\proofwitness)\in\cR}$ is the set of true-statements.
\end{definition}
If soundness holds with respect to a computationally unbounded adversary, we say that the argument has \emph{statistical} (or \emph{information-theoretic}) soundness, and we call the argument a \emph{proof} system.

\mypar{Zero-knowledge.}
Informally, an argument is zero-knowledge if a proof reveals no information about the witness~\cite{STOC:GolMicRac85}.
We formalize this property by following the syntax of~\cite{INDOCRYPT:FKMV12}.
A zero-knowledge simulator $\cS$ is defined as a stateful algorithm, whose initial state $\st = \nizkpp$, that operates in two modes.
The first mode, $(y, \st') \gets \cS(1, \st, a)$ takes care of handling calls to the oracle $\hash$ on input a query $a$.
The second mode, $(\pi, \st') \gets \cS(2, \st, \proofstatement)$ simulates a proof for the input statement $\proofstatement$.
We define the following \emph{wrapper} oracles, that are stateful and share their internal state:
\begin{itemize}[leftmargin=*,nosep]
\item $\cS_1(a)$ returns the first output of $\cS(1,\st,a)$;
\item $\cS_2(\proofstatement,\proofwitness)$ returns the first output of $\cS(2,\st,\proofstatement)$ if $(\proofstatement,\proofwitness) \in \cR$ and $\bot$ otherwise;
\item $\cS_2'(\proofstatement)$ returns the first output of $\cS(2,\st,\proofstatement)$.
\end{itemize}
\begin{definition}[Zero-knowledge]\label{def:zk}
  A NARG $\proofsys$ is zero-knowledge if there exists a simulator $\cS$, with wrapper oracles $\cS_1,\cS_2$, such that for all PPT adversaries $\cA$ it holds that:
  \[
    \prob{
      \begin{array}{l}
        \nizkpp \gets \setup(\secparam) \\
        \cA^{\cP,\hash}(\nizkpp) = 1
      \end{array}} \approx
    \prob{
      \begin{array}{l}
        \nizkpp \gets \setup(\secparam) \\
        \cA^{\cS_1,\cS_2}(\nizkpp) = 1
      \end{array}
    }
  \]
\end{definition}
Notice that zero-knowledge is a security property that is only guaranteed for valid statements in the language, hence the above definition uses $\cS_2$ as a proof simulation oracle.
Also, a zero-knowledge NARG with statistical soundness is simply referred to as a non-interactive zero-knowledge proof (NIZK)~\cite{STOC:BluFelMic88}.

\mypar{Simulation exctractability.}
A strong security property of NARGs is \emph{simulation extractability}~\cite{C:DDOPS01,C:GroMal17}.
Informally, from a prover producing a valid proof after seeing polynomially many proofs provided by a simulator (even for \emph{false} statements), it is possible to extract a valid witness.
This property can be formalized in many different ways, depending on the model and the constraints of the adversary and the extractor.
We refer to~\cite{C:BCCRS25} for a recent work that summarizes the main differences.
\begin{definition}[Simulation extractability]\label{def:simext}
  A NARG $\proofsys$ is simulation-extractable with respect to a zero-knowledge simulator $\cS$, with wrapper oracles $\cS_1,\cS_2'$, if for all PPT adversaries $\cA$ there exists a PPT extractor $\cE$ such that:
  \[
    \condprob{
      \begin{array}{l}
        \cV^{\cS_1}(\proofstatement,\pi) = 1 \\
        \land\;(\proofstatement,\pi)\not\in\cQ \\
        \land\; (\proofstatement,\proofwitness) \not\in\cR
      \end{array}
    }{
      \begin{array}{l}
        \nizkpp \gets \setup(\secparam)\\
        (\proofstatement,\pi) \gets \cA^{\cS_1,\cS_2'}(\nizkpp)\\
        \proofwitness \gets \cE(\st,\proofstatement,\pi)
      \end{array}
    } \in \negl
  \]
  where $\st$ is the final internal state of $\cS$, and $\cQ$ is the set of statement-proof pairs obtained by $\cA$ when interacting with $\cS_2'$.
\end{definition}

The previous definition implies the weaker notion of \emph{knowledge soundness}~\cite{FOCS:Sahai99}.
Informally, a NARG is knowledge-sound if there exists a PPT extractor, but without simulating proofs to the adversary first.

%% file: fig_poprif.tex
\begin{pchstack}     
\begin{pcvstack}[boxed]
\pcsetargs{linenumbering}
\procedure{$\POPRIV _{\adv, \poprf}^b(\secpar)$}{
   \pparam \gets \poprf.\setup(\secpar)\\
	i \gets 0\\
    b' \gets \adv^{\Request, \Finalize}(\pparam)\\
   \pcreturn b'
}


\procedure[lnstart=4,linenumbering]{$\Request(\pk, x_0, x_1)$}{
	i \gets i +1\\
	(\st_{i, 0}, \request_0) \gets \Request(\pk, x_0)\\
	(\st_{i, 1}, \request_1) \gets \Request(\pk, x_1)\\
	\pcreturn (\request_b, \request_{1-b})
}
\procedure[lnstart=8,linenumbering]{$\Finalize(j, \response, \response')$}{
	\pcassert j \leq i\\
	y_b \gets \Finalize(\st_{j, b}, \response)\\
	y_{1-b} \gets \Finalize(\st_{j, 1-b}, \response')\\
	\pcif y_0 = \bot \lor y_1 = \bot \\ \t \pcreturn \bot\\
	\pcreturn (y_0, y_1)
}

\end{pcvstack}
\pchspace
\begin{pcvstack}[boxed]
\pcsetargs{linenumbering}
\procedure{$\UNIQUE_{\adv, \poprf}(\secpar)$}{
	\queue \gets \emptyset; \;   	q \gets 0\\
   \pparam \gets \poprf.\setup(\secpar)\\
   	(i, j, \response_i, \response_j) \gets \adv^{\Request}(\pparam)\\
   	\pcassert 1 \leq i \leq j \leq q\\
   	(\st_i, \pk_i,  x_i) \gets \queue[i]\\
   	(\st_j, \pk_j,  x_j) \gets \queue[j]\\
   	\pcassert (\pk_i, x_i) = (\pk_j, x_j)\\
   	y_i \gets \Finalize(\st_i, \response_i)\\
   	y_j \gets \Finalize(\st_j, \response_j)\\
	\pcif y_0 = \bot \lor y_1 = \bot\\ \t  \pcreturn 0\\
   \pcreturn y_i \neq y_j
}


\procedure[lnstart=12,linenumbering]{$\Request(\pk, x)$}{
	q \gets q+1\\
	(\st, \request) \gets \Request(\pk, x)\\
	\queue[q] \gets (\st, \pk, x)\\
	\pcreturn  \request
}
\end{pcvstack}

\end{pchstack}

%% file: fig_oprf.tex
\begin{pchstack}[boxed]

\begin{pcvstack}
\procedure[linenumbering]{$\poprf^b_{\adv,\poprf, \simulator}(\secpar)$}{
   \randFn \sample \fun, q \gets 0 \\
   \st_\mathsf{P} \gets \mathsf{P}.\init(\secpar) \\
   \pparam \gets \poprf.\setup(\secpar)\\
   (\sk, \pk_1) \gets \poprf.\kgen^\mathsf{P}(\pparam)\\
   (\st_\simulator, \pk_0) \gets \simulator.\init(\pparam)\\
   b' \gets \adv^{\eval, \BlindEval, \Prim}(\pparam, \pk_b)\\
   \pcreturn b'
}

\pcvspace

\procedure[lnstart=7,linenumbering]{$\BlindEval(\request)$}{
	q \gets q + 1\\
	\response_1 \gets \oprf.\BlindEval(\sk, \request)\\
	\response_0 \gets \simulator.\BlindEval^\mathsf{LimEv}(\st_\simulator, \request)\\
	\pcreturn \response_b
}

\end{pcvstack}
\pchspace
\begin{pcvstack}
\procedure[lnstart=11,linenumbering]{$\eval(x)$}{
	y_1 \gets \oprf.\eval^\mathsf{P}(\pk, \sk, x)\\
	y_0 \gets \randFn(x)\\
	\pcreturn y_b
}

\pcvspace

\procedure[lnstart=14,linenumbering]{$\Prim(x)$}{
	y_1 \gets \mathsf{P}.\eval(x, \st_\mathsf{P})\\
	(y_0, \st_\simulator) \gets \simulator.\eval^{\mathsf{LimEv}}(x, \st_\simulator)\\
	\pcreturn y_b
}

\pcvspace

\procedure[lnstart=17,linenumbering]{$\mathsf{LimEv}(x)$}{
	q \gets q - 1\\
	\pcif q \geq 0 \pcreturn \eval(x)\\
	\pcreturn \bot
}
\end{pcvstack}
\end{pchstack}

%% file: instantiation.tex
\section{Instantiation of Proofsystems} \label{sec:instances}

Our protocol requires proofs of well-formedness and proofs of knowledge. 
In this section, we provide instantiations of the relations to be proven, along with sigma protocols for proving them. 
We keep the sigma protocols interactive, but they can be made non-interactive using the Fiat-Shamir transform~\cite{C:FiaSha86}.

\mypar{Proof of well-formed winning statement $\paymentStmt$.} The relation $\relation_\win$ is satisfied, if a winning statement $\paymentStmt$ is well-formed w.r.t. some OPRF public key $\pk$, and a length parameter $\ell$. For this, we provide two relations. The first one, $\relation_\win$, we use when proving well-formedness using Bulletproofs. 
\[
	\relation_\win := \left\{
        \begin{array}{l}
          \proofstatement = (\pk, \paymentStmt, \ell) \\
          \proofwitness = (\sk, \witnessTarget)
        \end{array}:
        \begin{array}{l}
        	|\witnessTarget| \leq 2^{\ell},\\
        	\paymentWit = \oprf.\eval(\sk, \witnessTarget),\\
        	\paymentStmt = g^{\paymentWit} \land \pk = g^{\sk}
        \end{array}
        \right\}
\]   
To prove this relation, we can instantiate the RO and prove well-formedness of $\proofstatement$ using, e.g., Bulletproofs~\cite{SP:BBBPWM18} and a constraint system like R1CS~\cite{SP:PHGR13}. 
If we do not want to instantiate the RO for our proof, we can use the relation $\relation_\win'$ for the well-formedness of $\paymentStmt$.
\[
	\relation_\win' := \left\{
        \begin{array}{l}
          \proofstatement = (\pk, \ciphertext, \paymentStmt, \ell) \\
          \proofwitness = (\sk, \witnessTarget)
        \end{array}:
        \begin{array}{l}
        	|\witnessTarget| \leq 2^{\ell}\\ 
            (\paymentStmt, \paymentWit) \gets \genRelation(\secpar)\\
        	\rho = \oprf.\eval(\sk, \witnessTarget),\\
        	\ciphertext = \rho + \paymentWit \land \pk = g^{\sk}
        \end{array}
        \right\}
\] 
 
To prove well-formedness using $\relation_\win'$, we can use the cut-and-choose strategy, as depicted in~\cref{fig:proofs:cutandchoose,fig:proofs:orSchnorr}. 
 
\mypar{Proof of well-formed OPRF response encryption.}
This proof is the technical cornerstone of the OPRF-as-a-service abstraction introduced in~\cref{sec:overview}: it lets the party verify that the dealer has honestly computed a blind OPRF evaluation on the party's request and locked the result in a well-formed ciphertext, without yet revealing the output.
The relation $\relation_\enc$ is satisfied if a ciphertext $\ciphertext$ is well-formed w.r.t. some OPRF public key $\pk$, a statement $Z$, and an OPRF request $\request$.
\[
	\relation_{\enc} := \left\{
        \begin{array}{l}
          \proofstatement = (\pk, Z, \request, \ciphertext) \\
          \proofwitness = (\sk, z)
        \end{array}:
        \begin{array}{l}
			\pk = g^{\sk} \land Z = g^{z},\\
			\response = \oprf.\BlindEv(\sk, \request),\\
			\ciphertext = \Pi_\enc.\enc(Z, \response)
        \end{array}
        \right\}
\]  
If $\Pi_\enc$ is instantiated with ElGamal encryption, we show this well-formedness using a Chaum-Pedersen sigma protocol, depicted in~\cref{fig:proofs:schnorrEnc}.
\begin{figure}
	\begin{textdef}
		Public inputs: a group $(\GG,g,p)$ and 
		$
			\proofstatement = (\pk, Z, \request, \ciphertext).
		$\\ 
		The prover provides a witness
		$
			\proofwitness = (\sk,\alpha)
		$
		such that
		$
			\pk = g^{\sk},\quad
			Z = g^z,\quad
			\ciphertext = (c_1,c_2) = (g^\alpha,\; Z^\alpha \cdot \request^{\sk}).
		$\\
		The verifier has no additional input.

		\textbf{Proving:}
		\begin{enumerate}[leftmargin=*]
			\item The prover samples
			$
				r_{\sk}, r_{\alpha} \sample \ZZ_p
			$
			and computes the commitments
			$
				R_1 := g^{r_{\sk}},		
				R_2 := g^{r_{\alpha}},
				R_3 := Z^{r_{\alpha}} \cdot \request^{r_{\sk}}.
			$
			It sends $(R_1,R_2,R_3)$ to the verifier.

			\item The verifier samples a challenge
			$
				c \sample \ZZ_p
			$ 
			and sends it to the prover.

			\item The prover computes the responses
			$
				z_{\sk} := r_{\sk} + c \cdot \sk,
				z_{\alpha} := r_{\alpha} + c \cdot \alpha,
			$
			and sends $(z_{\sk}, z_{\alpha})$ to the verifier.
		\end{enumerate}

		\textbf{Verification:}
		Parse $\ciphertext$ as $(c_1,c_2)$. Upon receiving $(R_1,R_2,R_3,z_{\sk},z_{\alpha})$, the verifier accepts iff
		\[
			g^{z_{\sk}} \stackrel{?}{=} R_1 \cdot \pk^c \land 
			g^{z_{\alpha}} \stackrel{?}{=} R_2 \cdot c_1^c \land
			Z^{z_{\alpha}} \cdot \request^{z_{\sk}}
			\stackrel{?}{=}
			R_3 \cdot c_2^c.
		\]
	\end{textdef}

	\caption{Interactive Schnorr proof for the relation $\relation_{\enc}$.}
	\label{fig:proofs:schnorrEnc}
\end{figure}

\begin{lemma}\label{lem:schnorrEnc}
The sigma protocol in~\cref{fig:proofs:schnorrEnc} is complete, $2$-special sound with knowledge error $1/p$, and honest-verifier zero-knowledge.
\end{lemma}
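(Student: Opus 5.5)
The proof follows the standard template for Chaum--Pedersen-style sigma protocols, checking the three properties separately. The relation being proven is the linear relation over $\ZZ_p$ given by $\pk = g^{\sk}$, $c_1 = g^{\alpha}$ and $c_2 = Z^{\alpha}\cdot\request^{\sk}$, where the witness is $(\sk,\alpha)$ with $\alpha$ the ElGamal randomness. Since the group $(\GG,g,p)$ has prime order, exponent arithmetic is over a field, and every step reduces to linear algebra in the exponents.

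\emph{Completeness.} I substitute the honest responses $z_{\sk} = r_{\sk} + c\,\sk$ and $z_\alpha = r_\alpha + c\,\alpha$ into the three verification equations. The first gives $g^{z_{\sk}} = g^{r_{\sk}}(g^{\sk})^c = R_1\pk^c$, and the second is analogous. For the third,
\[
Z^{z_\alpha}\request^{z_{\sk}} = Z^{r_\alpha}\request^{r_{\sk}}\cdot\bigl(Z^{\alpha}\request^{\sk}\bigr)^c = R_3\, c_2^c .
\]

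\emph{2-special soundness.} Take two accepting transcripts $(R_1,R_2,R_3,c,z_{\sk},z_\alpha)$ and $(R_1,R_2,R_3,c',z'_{\sk},z'_\alpha)$ with $c\neq c'$. Dividing the corresponding verification equations cancels the commitments and yields
\[
g^{z_{\sk}-z'_{\sk}} = \pk^{c-c'},\qquad g^{z_\alpha-z'_\alpha} = c_1^{c-c'},\qquad Z^{z_\alpha-z'_\alpha}\request^{z_{\sk}-z'_{\sk}} = c_2^{c-c'}.
\]
Because $c-c'$ is invertible mod $p$, I set $\sk^* = (z_{\sk}-z'_{\sk})/(c-c')$ and $\alpha^* = (z_\alpha-z'_\alpha)/(c-c')$. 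Then $\pk = g^{\sk^*}$, $c_1 = g^{\alpha^*}$ and $c_2 = Z^{\alpha^*}\request^{\sk^*}$. This means $\ciphertext$ is an ElGamal encryption under $Z$ of $\request^{\sk^*} = \oprf.\BlindEv(\sk^*,\request)$, which is exactly the statement of $\relation_\enc$. The knowledge error $1/p$ then follows from the standard argument: a cheating prover that succeeds with probability greater than $1/p$ must answer two distinct challenges for the same first message, and extraction follows by rewinding.

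\emph{Honest-verifier zero-knowledge.} The simulator samples $c, z_{\sk}, z_\alpha \sample \ZZ_p$ and defines the commitments backwards:
\[
R_1 = g^{z_{\sk}}\pk^{-c},\qquad R_2 = g^{z_\alpha}c_1^{-c},\qquad R_3 = Z^{z_\alpha}\request^{z_{\sk}}c_2^{-c}.
\]
In a real transcript for a true statement, $(c,z_{\sk},z_\alpha)$ is uniform because $r_{\sk}$ and $r_\alpha$ act as uniform masks. The commitments are then uniquely determined by the verification equations. The simulated transcript has the same structure, so the two distributions are identical, giving perfect HVZK.

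The one place that needs care is a gap between the relation and the proof. $\relation_\enc$ lists the witness as $(\sk,z)$, but the sigma protocol actually extracts $(\sk,\alpha)$ and does not extract $z$. I therefore state soundness with respect to the ElGamal formulation of the relation, where $Z$ is a public group element and correctness of the encrypted value is what matters. This suffices for the use in \cref{thm:main}, since there the decryption key $z$ is later obtained by adaptor extraction. I also need to note that $g$, $Z$ and $\request$ must be group elements for the division step in the soundness argument to be valid.
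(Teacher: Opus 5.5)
Your proposal is correct and follows the paper's proof essentially step for step. Completeness is by direct expansion. Soundness divides two accepting transcripts and inverts $c-c'$ to extract $(\sk,\alpha)$. HVZK uses the same backward-computing simulator. Two of your additions improve on the paper: you justify HVZK by the uniformity of $(c,z_{\sk},z_\alpha)$, which is more precise than the paper's appeal to uniform first-round values, and you note that the extracted witness is $(\sk,\alpha)$ rather than the $(\sk,z)$ listed in $\relation_\enc$, a mismatch the paper glosses over.
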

\begin{proof}
\lightpar{Completeness.} For any valid witness $(\sk, \alpha)$ the verification equations follow directly from expanding $R_1, R_2, R_3$ and the responses.

\lightpar{Special soundness.} Given two accepting transcripts with identical first message $(R_1,R_2,R_3)$ but distinct challenges $c \neq c'$, let $e = c-c' \neq 0$. From the first equation $g^{z_\sk - z'_\sk} = \pk^e$, so $\sk = (z_\sk - z'_\sk)/e$. From the second, $\alpha = (z_\alpha - z'_\alpha)/e$. Substituting into the third equation yields $Z^\alpha \cdot \request^\sk = c_2$, so $(\sk,\alpha) \in \relation_\enc$.

\lightpar{Honest-verifier zero-knowledge.} The simulator samples $c, z_\sk, z_\alpha \sample \ZZ_p$ and sets
$
  R_1 := g^{z_\sk}\pk^{-c}, R_2 := g^{z_\alpha}c_1^{-c}, R_3 := Z^{z_\alpha}\cdot\request^{z_\sk}\cdot c_2^{-c}.
$
The resulting transcript satisfies all verification equations and is identically distributed to an honest one, since the honest first-round values are uniformly random in $\GG$ given the challenge.
\end{proof}

\mypar{Proofs of knowledge for commitment-opening, and secret keys.}      
The relation $\relation_\com$ is satisfied, if a party knows an opening $(\sk, \comrand)$ for a commitment $\com = g^{\sk}h^{\comrand}$.
The relation $\relation_\mathsf{DL}$ is satisfied if a party knows the discrete logarithm $\sk$ of a group element $\pk$.
\begin{align*}        
	\relation_\com &:= \left\{
        \begin{array}{l}
          \proofstatement = (\com, \GG, g, h) \\
          \proofwitness = (\sk, \comrand)
        \end{array}:
        \begin{array}{l}
			\com = g^{\sk}h^{\comrand}
        \end{array}
        \right\}\\
	\relation_\mathsf{DL} &:= \left\{
        \begin{array}{l}
          \proofstatement = (\pk, \GG, g) \\
          \proofwitness = \sk
        \end{array}:
        \begin{array}{l}
			\pk = g^{\sk}
        \end{array}
        \right\}\\
\end{align*}
We depict sigma protocols for these PoKs in~\cref{fig:proofs:schnorrCom,fig:proofs:schnorrDL}.
\begin{figure}
	\begin{textdef}
		Public inputs: a group $(\GG,g,p)$, an additional generator $h \in \GG$, and a statement
		$
			\proofstatement = (\com,\GG,g,h).
		$\\ 
		The prover provides a witness
		$
			\proofwitness = (\sk,\comrand)
		$
		such that
		$
			\com = g^{\sk} h^{\comrand}.
		$\\
		The verifier has no additional input.

		\textbf{Proving:}
		\begin{enumerate}[leftmargin=*]
			\item The prover samples
			$
				r_{\sk}, r_{\comrand} \sample \ZZ_p
			$
			and computes the commitment
			$
				R := g^{r_{\sk}} h^{r_{\comrand}}.
			$ 
			It sends $R$ to the verifier.

			\item The verifier samples a challenge
			$
				c \sample \ZZ_p
			$ 
			and sends it to the prover.

			\item The prover computes the responses
			$
				z_{\sk} := r_{\sk} + c \cdot \sk ,
				z_{\comrand} := r_{\comrand} + c \cdot \comrand ,
			$
			and sends $(z_{\sk}, z_{\comrand})$ to the verifier.
		\end{enumerate}

		\textbf{Verification:}
		Upon receiving $(R,z_{\sk},z_{\comrand})$, the verifier accepts iff
		$
			g^{z_{\sk}} h^{z_{\comrand}}
			\stackrel{?}{=}
			R \cdot \com^c.
		$
	\end{textdef}

	\caption{Interactive Schnorr proof for the relation $\relation_\com$.}
	\label{fig:proofs:schnorrCom}
\end{figure}
\begin{figure}
	\begin{textdef}
		Public inputs: a group $(\GG,g,p)$ and a statement
		$
			\proofstatement = (\pk,\GG,g).
		$\\ 
		The prover provides a witness
		$
			\proofwitness = \sk
		$
		such that
		$
			\pk = g^{\sk}.
		$\\
		The verifier has no additional input.

		\textbf{Proving:}
		\begin{enumerate}[leftmargin=*]
			\item The prover samples
			$
				r \sample \ZZ_p
			$
			and computes
			$
				R := g^r.
			$
			It sends $R$ to the verifier.

			\item The verifier samples a challenge
			$
				c \sample \ZZ_p
			$
			and sends it to the prover.

			\item The prover computes the response
			$
				z := r + c \cdot  \sk 
			$
			and sends $z$ to the verifier.
		\end{enumerate}

		\textbf{Verification:}
		Upon receiving $(R,z)$, the verifier accepts iff
		$
			g^z \stackrel{?}{=} R \cdot \pk^c.
		$
	\end{textdef}

	\caption{Interactive Schnorr proof for the relation $\relation_\mathsf{DL}$.}
	\label{fig:proofs:schnorrDL}
\end{figure}

\begin{figure*}

	\begin{textdef}
		Public inputs: a group $(\GG, g, p)$, and hash functions $\hash_\GG: \bin^{*} \to \GG$, $\hash_p: \bin^{*} \to \ZZ_p$, and a statement $\proofstatement = (\pk, \paymentStmt, \ell)$.\\
		The prover provides $\proofwitness = (\sk, \witnessTarget)$ as additional input, and the verifier has no additional input. \\
		\textbf{Proving:}
		\begin{enumerate}[leftmargin=*]
			\item The prover samples scalars $(\alpha_1, \ldots, \alpha_\secpar, r_1, \ldots, r_\secpar) \sample \ZZ_p$, computes the $\secpar$ commitments
			\[
				\left\{ \hash_p\big(\hash_\GG(\witnessTarget)^{\sk \cdot \alpha_i}\big) + r_i, R_i = g^{r_i}, A_i = g^{\alpha_i} \right\}_{1\leq i \leq \secpar},
			\]
			and sends them to the verifier. 
			\item The verifier selects $\secpar/2$ random indices and sends them to the prover.
			\item For each index $j$ selected by the verifier, the prover outputs $(r_j, \hash_\GG(y)^{\sk \cdot \alpha_j})$, and a proof of well-formedness of $\hash_\GG(y)^{\sk \cdot \alpha_j}$ w.r.t. $\pk$, $g^{\alpha_j}$, and all possible values $\hat y \in \bin^{\ell}$ (depicted in~\cref{fig:proofs:orSchnorr}). 
			In addition, for each unselected commitment with index $k$, the prover outputs $(\alpha_k, s_k = r_k + \paymentWit)$.
		\end{enumerate}
		\textbf{Verification:} The verifier obtains $r_j, \hash_\GG(y)^{\sk \cdot \alpha_j}$ and a proof of well-formedness for each selected index $j$, and the tuple $(\alpha_k, s_k)$ for each unselected index $k$. 
		\begin{enumerate}[leftmargin=*]	
			\item For each selected index $j$, the verifier verifies the opening by checking the proof of well-formedness, and by asserting that $r_j$ is a correct opening for $R_j$. 
			\item For each unselected index $k$, the verifier asserts that the value $s_k$ is well-formed by checking if $\paymentStmt \cdot R_k \stackrel{?}{=} g^{s_k}$ and  $A_k = g^{\alpha_k}$.
		\end{enumerate}

	\end{textdef}
	\caption{Interactive cut-and-choose proof for $\relation_\win$.}
	\label{fig:proofs:cutandchoose}
\end{figure*}

\begin{lemma}\label{lem:cutandchoose}
The cut-and-choose protocol in~\cref{fig:proofs:cutandchoose} is complete, computationally sound under the DDH assumption, and honest-verifier zero-knowledge, assuming the HVZK of Lemma~\ref{lem:orSchnorr}.
\end{lemma}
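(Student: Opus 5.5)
The plan is to prove the three properties separately, treating the Chaum--Pedersen OR proof of Lemma~\ref{lem:orSchnorr} as a black box with completeness, special soundness and HVZK.

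\emph{Completeness} is checked directly. For an opened index $j$, the value $r_j$ opens $R_j=g^{r_j}$. The element $\hash_\GG(\witnessTarget)^{\sk\cdot\alpha_j}$ is a correct instance of the OR relation for $\hat y=\witnessTarget\in\bin^\ell$, so completeness of Lemma~\ref{lem:orSchnorr} applies. For an unopened index $k$, we have $g^{s_k}=g^{r_k+\paymentWit}=R_k\cdot\paymentStmt$ and $A_k=g^{\alpha_k}$.

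\emph{Soundness.} I would argue by contradiction against a cheating prover that makes the verifier accept a statement outside $\relation_\win'$. The argument proceeds in three steps.
\begin{itemize}
\item First, call index $i$ \emph{good} if its commitment is consistent with some $\hat y$ of length at most $\ell$ and the same $\sk$ and $\alpha_i$. The masked value must then equal $\hash_p(\hash_\GG(\hat y)^{\sk\alpha_i})+r_i$ with $R_i=g^{r_i}$.
\item Second, use special soundness of the OR proof to show that any opened bad index is rejected except with negligible probability. A standard cut-and-choose counting argument then shows: if at least one index is bad, the prover survives only if all bad indices lie in the unopened half, which happens with probability at most $\binom{\secpar-b}{\secpar/2}/\binom{\secpar}{\secpar/2}\le 2^{-b}$ for $b$ bad indices. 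So if we require all unopened indices to be good, acceptance of a false statement requires every index to be good except with probability $2^{-\Omega(\secpar)}$.
\item Third, handle a good unopened index $k$. There the check $\paymentStmt R_k=g^{s_k}$ pins $\paymentWit=s_k-r_k$, and $r_k$ is the mask of an honest OPRF value, so $\paymentWit$ is consistent with $\hash_p$ of a well-formed $\hash_\GG(\hat y)^{\sk\alpha_k}$.
\end{itemize}
The subtle point is binding across indices. Because each $\alpha_i$ is fresh, the prover could use a different $\hat y$ per index. Showing that opened and unopened indices share a single $\witnessTarget$ is where I would invoke DDH. Suppose the prover embeds, in an index, a value $\hash_\GG(\hat y)^{\sk\alpha_i}$ that does not correspond to the element it later claims. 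Then from $(g^{\sk},g^{\alpha_i},\hash_\GG(\hat y))$ and the accepting OR transcripts one obtains a distinguisher for a DDH tuple. I expect this step, making precise which statement the cheater can violate and embedding the DDH challenge into $\pk$ and one $A_i$ while still answering the unopened $\alpha_k$, to be the main obstacle. I would guess the challenge index in advance, losing a factor $\secpar$. I would also rely on the computational binding of $\hash_p$ as a random oracle to fix the masked value.

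\emph{Honest-verifier zero-knowledge.} Given the challenge set $J$, the simulator proceeds as follows.
\begin{itemize}
\item For unopened $k\notin J$, it samples $\alpha_k,s_k$ uniformly and sets $R_k=g^{s_k}\paymentStmt^{-1}$, $A_k=g^{\alpha_k}$. The first-message component is a uniform scalar, which is distributed correctly because $r_k$ masks it perfectly.
\item For opened $j\in J$, it samples $r_j$ and a uniform group element $W_j$ in place of $\hash_\GG(\witnessTarget)^{\sk\alpha_j}$. It sets $A_j=g^{\alpha_j}$ for fresh $\alpha_j$, and runs the HVZK simulator of Lemma~\ref{lem:orSchnorr} for the OR proof.
\end{itemize}
Indistinguishability goes through a hybrid argument over the opened indices. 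In each hybrid, the real $W_j$ is replaced by a random element, which is justified by DDH on the triple $(g^{\alpha_j},\hash_\GG(\witnessTarget)^{\sk},W_j)$ with $\alpha_j$ unrevealed. After that replacement, the OR transcript is swapped for its HVZK simulation.
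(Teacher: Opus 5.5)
Your completeness argument and HVZK simulator follow the paper's. The genuine gap is in soundness.

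First, your counting step does not give its conclusion. The bound $\le 2^{-b}$ equals $1/2$ for a single bad index. So a cheating prover keeps one unopened index bad with probability $1/2$, and ``all unopened indices are good'' cannot be enforced with negligible error. The paper targets, and only needs, the weaker guarantee that \emph{at least one} unopened index is good. Suppose none is. Then every unopened index is bad, while every opened index must pass. Up to the $1/p$ soundness error of the OR proof, the opened set must therefore coincide exactly with the set of good indices, which happens with probability at most $((\secpar/2)!)^2/\secpar!$. Your third bullet then closes the argument: a single good unopened index $k$ already makes $\paymentWit=s_k-c_k+\hash_p(\hash_\GG(\hat y)^{\sk\alpha_k})$ recoverable for some $\hat y$ in the domain.

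Second, the DDH step for ``binding across indices'' cannot be carried out, because the property it aims at is false. The verifier never inspects the masked value $c_k$ of an unopened index. A prover can therefore set $c_k=\hash_p(\hash_\GG(\hat y_k)^{\sk\alpha_k})+r_k$ with a different $\hat y_k$ for every unopened $k$ and still be accepted. There is also nowhere to plant a DDH challenge: $\pk$ and all $A_i$ are chosen by the prover, who knows $\sk$ and every $\alpha_i$. The appeal to binding of $\hash_p$ is likewise unnecessary, since goodness is a predicate of the first message. In the paper, soundness rests only on special soundness of the OR proof plus the guessing bound. DDH is used only in the HVZK argument, to hide the opened $T_j=\hash_\GG(\witnessTarget)^{\sk\alpha_j}$, exactly as in your hybrid.

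One small correction to your HVZK: $r_k$ does not mask $c_k$ perfectly, because $s_k=r_k+\paymentWit$ is revealed. Indeed, $(c_k-s_k)-(c_{k'}-s_{k'})=\hash_p(W^{\alpha_k})-\hash_p(W^{\alpha_{k'}})$ with $W=\hash_\GG(\witnessTarget)^{\sk}$. Uniformity of the $c_k$ therefore needs an additional random-oracle step using that $W$ is CDH-hard to compute.
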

\begin{proof}
\lightpar{Completeness.} An honest prover with valid witness $(\sk, \witnessTarget)$ always passes all checks: for selected index $j$, the OR proof verifies that $T_j = \hash_\GG(\witnessTarget)^{\sk\cdot\alpha_j}$ is well-formed, the opening $c_j = \hash_p(T_j) + r_j$ is consistent, and $R_j = g^{r_j}$. For unselected index $k$, the value $s_k = r_k + \paymentWit$ satisfies $g^{s_k} = \paymentStmt \cdot R_k$ and $A_k = g^{\alpha_k}$.

\lightpar{Soundness.} Suppose no $\witnessTarget \in [0,2^\ell]$ exists such that $\paymentWit$ can be recovered evaluating the OPRF with key $\sk$ matching $\pk$. 
					  Since all $\secpar$ committed instances are constructed using the \emph{same} $\witnessTarget$, no valid witness exists for any of the selected instances' OR statements either.
					  Hence for each selected index $j$ the prover must either produce an accepting transcript of~\cref{lem:orSchnorr} without a valid witness, or guess the set of opened elements correctly. 
					  By the $2$-special soundness of the sigma protocol, forging an OR proof succeeds with probability at most $1/p$. The probability of correctly guessing the opening is at most $\frac{((n/2)!)^2}{n!}$, which is also negligible. Soundness holds. 

\lightpar{Honest-verifier zero-knowledge.} The simulator draws the challenge set $\mathcal{I} \subset [\secpar]$ with $|\mathcal{I}|=\secpar/2$ uniformly and sets $\paymentWit := \log_g(\paymentStmt)$.
\begin{itemize}[leftmargin=*,nosep]
  \item For each $j \in \mathcal{I}$: sample $r_j, \alpha_j \sample \ZZ_p$, pick arbitrary $T_j \in \GG$, set $c_j = \hash_p(T_j)+r_j$, $R_j = g^{r_j}$, $A_j = g^{\alpha_j}$, and simulate the OR proof for $T_j$ via Lemma~\ref{lem:orSchnorr}.
  \item For each $k \notin \mathcal{I}$: sample $s_k, \alpha_k \sample \ZZ_p$, set $R_k = g^{s_k}/ \paymentStmt$, $A_k = g^{\alpha_k}$.
\end{itemize}
The opened instances provide no information about the target value $y_\target$ by the DDH assumption in $\GG$. 
Furthermore, all commitments are uniformly distributed, and the revealed values are identically distributed in the honest execution. 
\end{proof}

\begin{figure*}

	\begin{textdef}
		Public inputs: a group $(\GG,g,p)$, a hash function $\hash_\GG:\bin^* \to \GG$, and a statement $\proofstatement_{\mathsf{or}} = (\pk, A, T, \ell).$\\
		The prover provides a witness
		$
			\proofwitness_{\mathsf{or}} = (\sk,\alpha,y)
		$ 
		such that $y \in \bin^\ell$, $\pk = g^{\sk}$, $A = g^\alpha$, and $T = \hash_\GG(y)^{\sk \alpha}$.\\
		The verifier has no additional input.

		For every candidate $\hat y \in \bin^\ell$, define $h_{\hat y} := \hash_\GG(\hat y)$.
		The protocol proves that
		\[
			\exists \hat y \in \bin^\ell,\ \exists (\sk,\alpha) \in \ZZ_p^2 :
			\pk = g^{\sk},\ A = g^\alpha,\ T = h_{\hat y}^{\sk\alpha}.
		\]

		\textbf{Proving:}
		\begin{enumerate}[leftmargin=*]
			\item For the real branch $\hat y = y$, the prover computes
			$
				B_y := h_y^\alpha.
			$
			It samples $t_{\alpha}, t_{\sk} \sample \ZZ_p$ and computes
			\[
				a_{1,y} = g^{t_\alpha}, \qquad
				a_{2,y} = h_y^{t_\alpha}, \qquad
				a_{3,y} = g^{t_\sk}, \qquad
				a_{4,y} = B_y^{t_\sk}.
			\]

			\item For every simulated branch $\hat y \in \bin^\ell \setminus \{y\}$, the prover samples
			$
				c_{\hat y}, z_{\alpha,\hat y}, z_{\sk,\hat y}, \beta_{\hat y} \sample \ZZ_p,
			$ 
			defines
			$
				B_{\hat y} := h_{\hat y}^{\beta_{\hat y}},
			$
			and computes
			\[
				a_{1,\hat y} = g^{z_{\alpha,\hat y}} A^{-c_{\hat y}}, \qquad
				a_{2,\hat y} = h_{\hat y}^{z_{\alpha,\hat y}} B_{\hat y}^{-c_{\hat y}}, \qquad
				a_{3,\hat y} = g^{z_{\sk,\hat y}} \pk^{-c_{\hat y}}, \qquad
				a_{4,\hat y} = B_{\hat y}^{z_{\sk,\hat y}} T^{-c_{\hat y}}.
			\]

			\item The prover sends, for every $\hat y \in \bin^\ell$, the first message
			$
				\Big(B_{\hat y}, a_{1,\hat y}, a_{2,\hat y}, a_{3,\hat y}, a_{4,\hat y}\Big)
			$ 
			to the verifier.

			\item The verifier samples a challenge
			$
				c \sample \ZZ_p
			$ 
			and sends it to the prover.

			\item The prover sets
			$
				c_y := c - \sum_{\hat y \in \bin^\ell \setminus \{y\}} c_{\hat y} ,
			$ 
			and computes the real responses
			\[
				z_{\alpha,y} := t_\alpha + c_y \alpha ,
				\qquad
				z_{\sk,y} := t_\sk + c_y \sk .
			\]

			\item The prover sends, for every $\hat y \in \bin^\ell$, the tuple
			$
				(c_{\hat y}, z_{\alpha,\hat y}, z_{\sk,\hat y})
			$
			to the verifier.
		\end{enumerate}

		\textbf{Verification:}
		Upon receiving
		$
			\Big(B_{\hat y}, a_{1,\hat y}, a_{2,\hat y}, a_{3,\hat y}, a_{4,\hat y},
			c_{\hat y}, z_{\alpha,\hat y}, z_{\sk,\hat y}\Big)_{\hat y \in \bin^\ell},
		$ 
		the verifier accepts iff:
		\begin{enumerate}[leftmargin=*]
			\item The branch challenges sum to the global challenge:
			$
				\sum_{\hat y \in \bin^\ell} c_{\hat y} \equiv c \pmod p.
			$

			\item For every $\hat y \in \bin^\ell$, the following four equations hold:
			\[
				g^{z_{\alpha,\hat y}} \stackrel{?}{=} a_{1,\hat y} \cdot A^{c_{\hat y}},
				\qquad
				h_{\hat y}^{z_{\alpha,\hat y}} \stackrel{?}{=} a_{2,\hat y} \cdot B_{\hat y}^{c_{\hat y}}, \qquad
				g^{z_{\sk,\hat y}} \stackrel{?}{=} a_{3,\hat y} \cdot \pk^{c_{\hat y}},
				\qquad
				B_{\hat y}^{z_{\sk,\hat y}} \stackrel{?}{=} a_{4,\hat y} \cdot T^{c_{\hat y}}.
			\]
		\end{enumerate}
	\end{textdef}

	\caption{Interactive Schnorr OR proof for well-formedness of cut-and-choose openings (\cref{fig:proofs:cutandchoose}).}
	\label{fig:proofs:orSchnorr}
\end{figure*}

\begin{lemma}\label{lem:orSchnorr}
The sigma protocol in~\cref{fig:proofs:orSchnorr} is complete, $2$-special sound with knowledge error $1/p$, and honest-verifier zero-knowledge.
\end{lemma}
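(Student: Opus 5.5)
Completeness, special soundness and HVZK will be handled separately, in the same style as the proof of \cref{lem:schnorrEnc}: each OR branch $\hat y$ is treated as a Chaum--Pedersen-style conjunction for the statement ``$A=g^{\alpha}$, $B_{\hat y}=h_{\hat y}^{\alpha}$, $\pk=g^{\sk}$, $T=B_{\hat y}^{\sk}$'', and the standard CDS composition is then applied.

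\lightpar{Completeness.} For the real branch $y$ we have $B_y=h_y^\alpha$ and $T=h_y^{\sk\alpha}=B_y^{\sk}$. Expanding $a_{1,y},\dots,a_{4,y}$ with $z_{\alpha,y}=t_\alpha+c_y\alpha$ and $z_{\sk,y}=t_\sk+c_y\sk$ satisfies all four equations. For each simulated branch the equations hold by construction, because the $a_{i,\hat y}$ were defined as the verification equations solved for the commitment. The definition of $c_y$ makes the challenge-sum check hold.

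\lightpar{Special soundness.} Take two accepting transcripts with the same first message (which fixes every $B_{\hat y}$ and $a_{i,\hat y}$) and global challenges $c\neq c'$. Since $\sum_{\hat y}c_{\hat y}=c\neq c'=\sum_{\hat y}c'_{\hat y}$, there is some branch $\hat y$ with $c_{\hat y}\neq c'_{\hat y}$. Set $e=c_{\hat y}-c'_{\hat y}$. Dividing the first and third equations gives $\alpha=(z_{\alpha,\hat y}-z'_{\alpha,\hat y})/e$ and $\sk=(z_{\sk,\hat y}-z'_{\sk,\hat y})/e$. The second equation then gives $B_{\hat y}=h_{\hat y}^{\alpha}$ with the same $\alpha$, and the fourth gives $T=B_{\hat y}^{\sk}$, so $T=h_{\hat y}^{\sk\alpha}$. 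Hence $(\sk,\alpha,\hat y)$ is a valid witness. The main point to argue carefully is that the same exponent $z_{\alpha,\hat y}$ appears in both the $g$- and $h_{\hat y}$-equations, and likewise $z_{\sk,\hat y}$ in the $g$- and $B_{\hat y}$-equations. This forces the extracted $\alpha$ and $\sk$ to be consistent across each equation pair, which is exactly what ties $T$ to the committed $h_{\hat y}$. A cheating prover who does not know a witness can therefore succeed only by predicting the challenge, which happens with probability $1/p$.

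\lightpar{HVZK.} Given $c$, the simulator samples $c_{\hat y}$ uniformly for all $\hat y$ except one, and sets the last one so that $\sum_{\hat y}c_{\hat y}=c$. For every branch it samples $\beta_{\hat y},z_{\alpha,\hat y},z_{\sk,\hat y}$ and computes $B_{\hat y}$ and $a_{i,\hat y}$ exactly as the prover computes a simulated branch. In the real protocol, every $B_{\hat y}$ for $\hat y\neq y$ is uniform in $\GG$, while $B_y=h_y^{\alpha}$ is not. In the simulation, $B_y$ is instead uniform.

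This is the step I expect to be the obstacle. Given $A,\pk,T$, perfect HVZK requires $h_y^{\alpha}$ to be indistinguishable from uniform, which is a DDH-type statement on $(g,A,h_y,B_y)$. So I would first try to claim perfect HVZK by arguing that, conditioned on the challenges, the responses are uniform and the $a_{i,\hat y}$ are then determined. If the distribution of $B_y$ breaks this, I would fall back to computational HVZK under DDH, via a hybrid that replaces $B_y$ by a uniform element, consistent with how \cref{lem:cutandchoose} already invokes DDH. The distribution of $c_{\hat y}$ and of the responses is identical in both cases.
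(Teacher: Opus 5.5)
Your completeness and special-soundness arguments match the paper's. You pick a branch $y^*$ with $c_{y^*}\neq c'_{y^*}$, extract $\alpha$ and $\sk$ from the two $g$-equations, and read off $B_{y^*}=h_{y^*}^{\alpha}$ and $T=B_{y^*}^{\sk}$ from the other two.

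For HVZK, the paper only invokes the CDS argument: each Chaum--Pedersen transcript is simulatable by choosing responses first, so the joint distribution is claimed to be identical. Your worry about $B_y$ is justified, and it refutes that claim. So the first option in your plan cannot succeed. In a real first message there is always one branch whose element satisfies both $B=h_{\hat y}^{\alpha}$ and $B^{\sk}=T$. A simulator that knows neither $\alpha$ nor $y$ can only output uniform $B_{\hat y}$'s, and each satisfies $B_{\hat y}^{\sk}=T$ with probability $1/p$. Hence, except with probability at most $2^{\ell}/p$, no simulated branch does. Real and simulated transcripts are therefore statistically far apart: an unbounded distinguisher computes $\sk$ and tests. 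No argument about the uniformity of challenges and responses can repair this, because the leak is in the first message, not in the Schnorr components. You should commit to computational HVZK.

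Your fallback is not yet a proof, though. The single hybrid ``replace $B_y$ by a uniform element under DDH on $(g,A,h_y,B_y)$'' does not go through. The statement contains $T=h_y^{\alpha\sk}=B_y^{\sk}$, so a reduction that embeds a challenge $(g^a,g^b,W)$ as $(A,h_y,B_y)$ would need $g^{ab}$ to produce $T$. One working sequence is as follows.
\begin{enumerate}
\item Generate every branch, including the real one, by the response-first formulas; this is identical to the honest prover while the relations hold.
\item Replace $T$ by a uniform element via DDH on $(g,\pk,B_y,T)$. The reduction chooses $\alpha$ itself and programs $h_y:=B_y^{1/\alpha}$.
\item Replace $B_y$ by a uniform element via DDH on $(g,A,h_y,B_y)$. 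This now works because $T$ is independent.
\item Restore $T=h_y^{\alpha\sk}$ by embedding a challenge $(g^x,g^s,W)$ as $h_y=g^x$, $\pk=g^s$, $T=W^{\alpha}$, with $\alpha$ chosen by the reduction.
\end{enumerate}
The intermediate hybrids contain false statements, so this compares joint distributions of statement and transcript over uniformly random $(\sk,\alpha)$.

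Be explicit that this is all you get. A distinguisher that knows the witness simply checks $B_y\stackrel{?}{=}h_y^{\alpha}$ and separates real from simulated. The protocol is therefore only distributionally, computationally HVZK under DDH, which is strictly weaker than the perfect HVZK the paper's proof asserts. That weaker notion is presumably what \cref{lem:cutandchoose} actually needs, since it already appeals to DDH.
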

\begin{proof}
The protocol follows the standard Cramer--Damg{\aa}rd--Schoenmakers OR-composition.

\lightpar{Completeness.} For the real branch $\hat y = y$, the four verification equations hold by expanding $a_{1,y},\ldots,a_{4,y}$ and the responses $z_{\alpha,y}, z_{\sk,y}$. For every simulated branch $\hat y \neq y$, the equations hold by construction.

\lightpar{Special soundness.} Given two accepting transcripts with identical first messages $\{(B_{\hat y}, a_{1,\hat y},\ldots,a_{4,\hat y})\}_{\hat y}$ but distinct global challenges $c \neq c'$, there exists $y^* \in \bin^\ell$ with $c_{y^*} \neq c'_{y^*}$, since $\sum_{\hat y} c_{\hat y} = c$ and $\sum_{\hat y} c'_{\hat y} = c'$. Let $e = c_{y^*} - c'_{y^*} \neq 0$. From the first two equations for branch $y^*$:
\[
  \alpha = \frac{z_{\alpha,y^*} - z'_{\alpha,y^*}}{e},\qquad\text{satisfying}\quad A = g^\alpha\;\text{ and }\;B_{y^*} = h_{y^*}^\alpha.
\]
From the last two:
\[
  \sk = \frac{z_{\sk,y^*} - z'_{\sk,y^*}}{e},\qquad\text{satisfying}\quad \pk = g^\sk\;\text{ and }\;T = B_{y^*}^\sk = h_{y^*}^{\sk\alpha}.
\]
Hence $(\sk,\alpha,y^*)$ is a valid witness.

\lightpar{Honest-verifier zero-knowledge.} All simulated branches use the standard Schnorr simulation (responses and branch challenges uniform). The real branch contributes a pair of composed Chaum--Pedersen transcripts, each perfectly simulatable by choosing responses first. The joint distribution is therefore identical to a simulated transcript.
\end{proof}

\begin{figure*}
	\begin{textdef}
		Public inputs: a group $(\GG,g,p)$, a second generator $u \in \GG$, a hash function $\hash_\GG:\bin^* \to \GG$, and a statement
		$
			\proofstatement_{\mathsf{or}} = (\pk, U, V, \ell).
		$
		For every candidate $\hat y \in \bin^\ell$, define
		$
			h_{\hat y} := \hash_\GG(\hat y).
		$

		The prover provides a witness
		$
			\proofwitness_{\mathsf{or}} = (\sk,\rho,y)
		$
		such that $y \in \bin^\ell$, $\pk = g^{\sk}$, and
		\[
			U = u^\rho,
			\qquad
			V = h_y^{\sk}\cdot g^\rho.
		\]

		The protocol proves that
		$
			\exists \hat y \in \bin^\ell,\ \exists (\sk,\rho)\in\ZZ_p^2 :
			\pk = g^{\sk}
			\ \land\
			U = u^\rho
			\ \land\
			V = h_{\hat y}^{\sk}\cdot g^\rho.
		$

		\textbf{Proving:}
		\begin{enumerate}[leftmargin=*]
			\item For the real branch $\hat y = y$, the prover computes
			$
				M_y := h_y^{\sk}.
			$
			It samples
			$
				t_{\sk}, t_\rho \sample \ZZ_p
			$
			and computes
			\[
				a_{1,y} = g^{t_{\sk}},
				\qquad
				a_{2,y} = h_y^{t_{\sk}},
				\qquad
				a_{3,y} = u^{t_\rho},
				\qquad
				a_{4,y} = g^{t_\rho}.
			\]

			\item For every simulated branch $\hat y \in \bin^\ell \setminus \{y\}$, the prover samples
			$
				c_{\hat y}, z_{\sk,\hat y}, z_{\rho,\hat y}, \mu_{\hat y} \sample \ZZ_p,
			$
			defines
			$
				M_{\hat y} := h_{\hat y}^{\mu_{\hat y}},
			$
			and computes
			\[
				a_{1,\hat y} = g^{z_{\sk,\hat y}} \pk^{-c_{\hat y}},
				\qquad
				a_{2,\hat y} = h_{\hat y}^{z_{\sk,\hat y}} M_{\hat y}^{-c_{\hat y}},
				\qquad 
				a_{3,\hat y} = u^{z_{\rho,\hat y}} U^{-c_{\hat y}},
				\qquad
				a_{4,\hat y} = g^{z_{\rho,\hat y}} (V/M_{\hat y})^{-c_{\hat y}}.
			\]

			\item The prover sends, for every $\hat y \in \bin^\ell$, the first message
			$
				\bigl(M_{\hat y}, a_{1,\hat y}, a_{2,\hat y}, a_{3,\hat y}, a_{4,\hat y}\bigr)
			$
			to the verifier.

			\item The verifier samples a challenge
			$
				c \sample \ZZ_p
			$
			and sends it to the prover.

			\item The prover sets
			$
				c_y := c - \sum_{\hat y \in \bin^\ell\setminus\{y\}} c_{\hat y} \pmod p
			$
			and computes
			\[
				z_{\sk,y} := t_{\sk} + c_y \sk,
				\qquad
				z_{\rho,y} := t_\rho + c_y \rho.
			\]

			\item The prover sends, for every $\hat y \in \bin^\ell$, the tuple
			$
				(c_{\hat y}, z_{\sk,\hat y}, z_{\rho,\hat y})
			$
			to the verifier.
		\end{enumerate}

		\textbf{Verification:}
		Upon receiving
		$
			\Bigl(M_{\hat y}, a_{1,\hat y}, a_{2,\hat y}, a_{3,\hat y}, a_{4,\hat y},
			c_{\hat y}, z_{\sk,\hat y}, z_{\rho,\hat y}\Bigr)_{\hat y \in \bin^\ell},
		$
		the verifier accepts iff:
		\begin{enumerate}[leftmargin=*]
			\item
			$
				\sum_{\hat y \in \bin^\ell} c_{\hat y} \equiv c \pmod p.
			$

			\item For every $\hat y \in \bin^\ell$, the following equations hold:
			\[
				g^{z_{\sk,\hat y}} \stackrel{?}{=} a_{1,\hat y}\cdot \pk^{c_{\hat y}},
				\qquad
				h_{\hat y}^{z_{\sk,\hat y}} \stackrel{?}{=} a_{2,\hat y}\cdot M_{\hat y}^{c_{\hat y}},
			\qquad
				u^{z_{\rho,\hat y}} \stackrel{?}{=} a_{3,\hat y}\cdot U^{c_{\hat y}},
				\qquad
				g^{z_{\rho,\hat y}} \stackrel{?}{=} a_{4,\hat y}\cdot (V/M_{\hat y})^{c_{\hat y}}.
			\]
		\end{enumerate}
	\end{textdef}

	\caption{Interactive Schnorr OR proof that $(U,V)$ is well formed as $U=u^\rho$ and $V=h_y^{\sk}\cdot g^\rho$ for some $y \in \bin^\ell$.}
	\label{fig:proofs:orWF}
\end{figure*}

\begin{lemma}\label{lem:orWF}
The sigma protocol in~\cref{fig:proofs:orWF} is complete, $2$-special sound with knowledge error $1/p$, and honest-verifier zero-knowledge.
\end{lemma}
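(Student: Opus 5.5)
We establish the three properties separately, treating the protocol as a Cramer--Damg{\aa}rd--Schoenmakers OR-composition over the branches $\hat y \in \bin^\ell$. Each branch is a conjunction of two Chaum--Pedersen equalities: $\log_g \pk = \log_{h_{\hat y}} M_{\hat y}$ and $\log_u U = \log_g (V/M_{\hat y})$.

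\emph{Completeness.} For the real branch $\hat y = y$ we expand the honest first message and responses. From $a_{1,y}=g^{t_\sk}$ and $z_{\sk,y}=t_\sk+c_y\sk$ we get $g^{z_{\sk,y}} = a_{1,y}\pk^{c_y}$. Likewise $h_y^{z_{\sk,y}} = a_{2,y} M_y^{c_y}$ because $M_y = h_y^{\sk}$. Next, $u^{z_{\rho,y}} = a_{3,y}U^{c_y}$. Finally, $g^{z_{\rho,y}} = a_{4,y}(V/M_y)^{c_y}$ because $V/M_y = g^\rho$. For every $\hat y \neq y$, the four equations hold by construction, since each $a_{i,\hat y}$ was defined by solving that equation for the pre-chosen $(c_{\hat y}, z_{\sk,\hat y}, z_{\rho,\hat y})$. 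The definition of $c_y$ guarantees that the branch challenges sum to $c$.

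\emph{$2$-special soundness.} We mirror the proof of \cref{lem:orSchnorr}. Take two accepting transcripts with the same first message and distinct challenges $c \neq c'$. Since $\sum_{\hat y} c_{\hat y} = c \neq c' = \sum_{\hat y} c'_{\hat y}$, some branch $y^*$ has $e := c_{y^*} - c'_{y^*} \neq 0$. Dividing the first two equations for $y^*$ gives $\sk = (z_{\sk,y^*}-z'_{\sk,y^*})/e$ with $\pk = g^\sk$ and $M_{y^*} = h_{y^*}^{\sk}$. Dividing the last two gives $\rho = (z_{\rho,y^*}-z'_{\rho,y^*})/e$ with $U = u^\rho$ and $V/M_{y^*} = g^\rho$. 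Hence $V = h_{y^*}^{\sk} g^\rho$, and $(\sk,\rho,y^*)$ is a valid witness. A cheating prover therefore succeeds only by predicting the challenge, which bounds the knowledge error by $1/p$.

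\emph{Honest-verifier zero-knowledge.} This is the delicate step, and it is where a naive simulator fails. The transcript is not just a set of Schnorr-type branches. It also contains the values $M_{\hat y}$, and in an honest transcript exactly one of them, namely $M_y = h_y^{\sk}$, makes $(u, g, U, V/M_y)$ a DH tuple. If the simulator sampled every $M_{\hat y}$ uniformly, we would only get indistinguishability under DDH, not identical distributions. To obtain a perfect simulation we will use the simulator's usual power to program $\hash_\GG$, consistent with the NARG syntax of \cref{sec:appdx:prelims} where $\nizkpp$ and the oracle are simulator-controlled. We will also let the simulator choose $u = g^{t}$ with $t$ known; if $u$ is derived from the oracle, this is again programming.

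The simulator proceeds as follows. It sets $h_{\hat y} := g^{\kappa_{\hat y}}$ for uniform $\kappa_{\hat y}$; these are identically distributed to random-oracle outputs. It computes $g^\rho = U^{1/t}$ and $M^* := V/g^\rho$. It then identifies the unique branch $y$ with $\pk^{\kappa_y} = M^*$, which exists for a true statement. It sets $M_y := M^*$ and $M_{\hat y} := h_{\hat y}^{\mu_{\hat y}}$ for uniform $\mu_{\hat y}$ on all other branches, exactly as the honest prover does. Finally, it samples every $(c_{\hat y}, z_{\sk,\hat y}, z_{\rho,\hat y})$ uniformly subject to $\sum_{\hat y} c_{\hat y} = c$ and solves for all $a_{i,\hat y}$.

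It remains to show the two distributions agree. In the honest transcript the values $M_{\hat y}$ coincide with the simulator's. Given them, the challenges $c_{\hat y}$ are uniform subject to summing to $c$, and the responses are uniform. For the real branch this holds because $t_\sk, t_\rho$ are uniform, and each first-message element is then the unique value determined by the verification equations. The two distributions are therefore identical, which yields perfect HVZK.
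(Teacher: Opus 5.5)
Your completeness and special-soundness arguments match the paper's; it extracts $\sk$ and $\rho$ from a branch $y^*$ with $c_{y^*}\neq c'_{y^*}$ exactly as you do. For HVZK you take a genuinely different route. The paper only says the claim follows from the OR-simulation of \cref{lem:orSchnorr}, which simulates every branch Schnorr-style and asserts identical distributions without ever saying how the $M_{\hat y}$ are produced.

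Your diagnosis of that step is correct, and the problem is somewhat worse than you state. Position $y$ of an honest transcript always holds the statement-determined element $h_y^{\sk}=Vg^{-\rho}$, so no simulator that cannot compute $Vg^{-\rho}$ can be perfect. The trapdoor-free simulator that samples all $M_{\hat y}$ uniformly is only computationally close under DDH, for honestly generated statements, and only against distinguishers that hold neither $\sk$ nor $\rho$; anyone knowing $\sk$ simply tests $M_{\hat y}\stackrel{?}{=}h_{\hat y}^{\sk}$. That weaker, standard-model guarantee is what the paper's route actually delivers.

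Your route buys an exact simulation, but only for a stronger simulator. What you prove is zero-knowledge in a model where the simulator controls $\hash_\GG$ and knows $\log_g u$, not HVZK for the fixed public inputs of \cref{fig:proofs:orWF}. You should make that model precise in three places:
\begin{itemize}
\item The simulator must answer every $\hash_\GG$ query with $g^{\kappa}$ from the start of the experiment. As written, you set $h_{\hat y}:=g^{\kappa_{\hat y}}$ after receiving $(\pk,U,V)$, and then $Vg^{-\rho}$ need not equal any $\pk^{\kappa_{\hat y}}$, so your ``unique branch'' may not exist.
\item \cref{def:zk} samples the parameters with $\setup$ outside the simulator, so choosing $u=g^t$ is an extra assumption, for instance that $u$ is itself a programmable oracle output.
\item ``Perfect'' really requires $\sk\neq 0$ and pairwise distinct $h_{\hat y}$. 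For $\sk=0$ every branch is a valid witness, and $M_y=1$ reveals which one the prover used, so no simulator matches; without the distinctness assumption you only get statistical closeness.
\end{itemize}

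Finally, $\hash_\GG$ is the OPRF hash, which the paper's OPRF reduction programs with Gap-OMDH challenges of unknown discrete logarithm. A simulator that needs $\log_g h_{\hat y}$ for every $\hat y$ therefore cannot run alongside that reduction, whereas the DDH-based trapdoor-free simulator can.
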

\begin{proof}
The argument is structurally identical to Lemma~\ref{lem:orSchnorr}. Each branch proves knowledge of $(\sk,\rho)$ such that $\pk = g^\sk$, $M_{\hat y} = h_{\hat y}^\sk$, $U = u^\rho$, and $V/M_{\hat y} = g^\rho$ via a pair of composed Chaum--Pedersen sub-arguments sharing the branch challenge $c_{\hat y}$.

\lightpar{Special soundness.} From two transcripts with distinct global challenges, there exists $y^* \in \bin^\ell$ with $c_{y^*} \neq c'_{y^*}$. From the first two equations, $\sk = (z_{\sk,y^*} - z'_{\sk,y^*})/(c_{y^*}-c'_{y^*})$, satisfying $\pk = g^\sk$ and $M_{y^*} = h_{y^*}^\sk$. From the last two, $\rho = (z_{\rho,y^*} - z'_{\rho,y^*})/(c_{y^*}-c'_{y^*})$, satisfying $U = u^\rho$ and $V = M_{y^*} \cdot g^\rho = h_{y^*}^\sk \cdot g^\rho$.

\lightpar{Honest-verifier zero-knowledge.} Follows by the same OR-simulation argument as Lemma~\ref{lem:orSchnorr}.
\end{proof}

\begin{figure*}

	\begin{textdef}
		Public inputs: a cyclic group $(\GG,g,p)$ of prime order $p$, a second generator $u \in \GG$, and a statement
		$
			\proofstatement_{\mathsf{wf}} = (X,Y,U_\rho,U_\alpha,T_u,T_g).
		$
		The prover provides a witness
		$
			\proofwitness_{\mathsf{wf}} = (h,\rho,\alpha)
		$
		such that
		$
			X = h \cdot g^\rho,
		$
		$
			Y = h^\alpha,
		$
		$
			U_\rho = u^\rho,
		$
		$
			U_\alpha = u^\alpha,
		$
		and
		$
			T_u = u^{\rho\alpha}.
		$
		The verifier has no additional input.

		The protocol proves that
		\[
			\exists (h,\rho,\alpha) \in \GG \times \ZZ_p^2 :
			X = h \cdot g^\rho,\ 
			Y = h^\alpha,\ 
			U_\rho = u^\rho,\ 
			U_\alpha = u^\alpha,\ 
			T_u = u^{\rho\alpha}.
		\]

		The prover first defines the auxiliary value
		$
			T_g := X^\alpha / Y.
		$
		For a valid witness, this satisfies
		$
			T_g = g^{\rho\alpha}.
		$
		The proof consists of three Chaum--Pedersen subproofs:
		\[
			\log_u(U_\alpha) = \log_{U_\rho}(T_u), \qquad
			\log_u(T_u) = \log_g(T_g), \qquad
			\log_u(U_\alpha) = \log_X(YT_g).
		\]

		\textbf{Proving:}
		\begin{enumerate}[leftmargin=*]
			\item The prover computes
			$
				T_g := X^\alpha / Y
			$
			and
			$
				\delta := \rho\alpha \in \ZZ_p.
			$

			\item For the first subproof
			$
				\log_u(U_\alpha) = \log_{U_\rho}(T_u),
			$
			the prover samples
			$
				r_1 \sample \ZZ_p
			$
			and computes
			\[
				a_{1,1} := u^{r_1},
				\qquad
				a_{1,2} := U_\rho^{r_1}.
			\]

			\item For the second subproof
			$
				\log_u(T_u) = \log_g(T_g),
			$
			the prover samples
			$
				r_2 \sample \ZZ_p
			$
			and computes
			\[
				a_{2,1} := u^{r_2},
				\qquad
				a_{2,2} := g^{r_2}.
			\]

			\item For the third subproof
			$
				\log_u(U_\alpha) = \log_X(YT_g),
			$
			the prover samples
			$
				r_3 \sample \ZZ_p
			$
			and computes
			\[
				a_{3,1} := u^{r_3},
				\qquad
				a_{3,2} := X^{r_3}.
			\]

			\item The prover sends
			$
				(T_g, a_{1,1},a_{1,2}, a_{2,1},a_{2,2}, a_{3,1},a_{3,2})
			$
			to the verifier.

			\item The verifier samples a challenge
			$
				c \sample \ZZ_p
			$
			and sends it to the prover.

			\item The prover computes
			\[
				z_1 := r_1 + c\alpha,
				\qquad
				z_2 := r_2 + c\delta,
				\qquad
				z_3 := r_3 + c\alpha
			\]
			and sends
			$
				(z_1,z_2,z_3)
			$
			to the verifier.
		\end{enumerate}

		\textbf{Verification:}
		Upon receiving
		$
			(T_g, a_{1,1},a_{1,2}, a_{2,1},a_{2,2}, a_{3,1},a_{3,2}, z_1,z_2,z_3),
		$
		the verifier accepts iff all six equations hold:
		\[
			u^{z_1} \stackrel{?}{=} a_{1,1} \cdot U_\alpha^c,
			\qquad
			U_\rho^{z_1} \stackrel{?}{=} a_{1,2} \cdot T_u^c,
		\qquad
			u^{z_2} \stackrel{?}{=} a_{2,1} \cdot T_u^c,
			\qquad
			g^{z_2} \stackrel{?}{=} a_{2,2} \cdot T_g^c,
		\qquad
			u^{z_3} \stackrel{?}{=} a_{3,1} \cdot U_\alpha^c,
			\ifnum\fullversion=0\qquad\else\;\fi
			X^{z_3} \stackrel{?}{=} a_{3,2} \cdot (Y T_g)^c.
		\]

		If all checks pass, then the verifier concludes that $T_u = u^{\rho\alpha}$, $T_g = g^{\rho\alpha}$, and $YT_g = X^\alpha$; hence
		\[
			Y \cdot g^{\rho\alpha} = X^\alpha = (h g^\rho)^\alpha = h^\alpha g^{\rho\alpha},
		\]
		which implies $Y = h^\alpha$ while $h$ remains hidden.
	\end{textdef}

	\caption{Interactive Chaum--Pedersen proof for well-formedness of $(X,Y)$ with hidden base $h$.}
	\label{fig:proofs:hiddenbaseWF}
\end{figure*}

\begin{lemma}\label{lem:hiddenbase}
The sigma protocol in~\cref{fig:proofs:hiddenbaseWF} is complete, $2$-special sound with knowledge error $1/p$, and honest-verifier zero-knowledge.
\end{lemma}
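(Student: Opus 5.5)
We treat the protocol of \cref{fig:proofs:hiddenbaseWF} as the AND-composition of three Chaum--Pedersen equality-of-discrete-log sub-protocols sharing one challenge $c$, and verify the three properties one at a time, just as in Lemma~\ref{lem:schnorrEnc} and Lemma~\ref{lem:orSchnorr}.

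\emph{Completeness.} Fix a valid witness $(h,\rho,\alpha)$. The auxiliary value satisfies $T_g = X^\alpha/Y = (h g^\rho)^\alpha/h^\alpha = g^{\rho\alpha}$. Each of the six equations then follows by expanding the commitment and the response. The first pair uses $U_\alpha=u^\alpha$ and $T_u = U_\rho^\alpha$. The second pair uses $T_u = u^\delta$ and $T_g = g^\delta$ with $\delta=\rho\alpha$. The third pair uses $U_\alpha = u^\alpha$ and $YT_g = X^\alpha$.

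\emph{Special soundness.} Take two accepting transcripts with the same first message and challenges $c\neq c'$, and set $e=c-c'$. The equations give $\alpha=(z_1-z_1')/e$ with $U_\alpha=u^\alpha$ and $T_u=U_\rho^\alpha$. They also give $\delta=(z_2-z_2')/e$ with $T_u=u^\delta$ and $T_g=g^\delta$, and $\alpha'=(z_3-z_3')/e$ with $U_\alpha=u^{\alpha'}$, hence $\alpha'=\alpha$ and $YT_g=X^\alpha$.

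It remains to extract $\rho$ and $h$. Since $u$ is a generator, $U_\rho=u^{\rho}$ for a unique $\rho$. From $T_u = U_\rho^\alpha = u^{\rho\alpha}$ and $T_u=u^\delta$ we get $\delta=\rho\alpha$. If $\alpha\neq0$ the extractor computes $\rho=\delta/\alpha$ directly. Otherwise $\rho$ is only defined as $\log_u U_\rho$, so the witness exists but is not necessarily efficiently computable. Now set $h:=X g^{-\rho}$. Then $Y = X^\alpha T_g^{-1} = (hg^\rho)^\alpha g^{-\rho\alpha} = h^\alpha$, so $(h,\rho,\alpha)$ satisfies all five relations. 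A cheating prover therefore succeeds on a false statement only by hitting the unique bad challenge, which happens with probability $1/p$.

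\emph{Honest-verifier zero-knowledge.} Sample $c,z_1,z_2,z_3$ uniformly. Set $T_g$ to the value the honest prover would send, which is the one statement-dependent element. Then define each $a_{i,j}$ by solving its verification equation, e.g.\ $a_{1,1}=u^{z_1}U_\alpha^{-c}$. For a fixed $T_g$, the honest commitments are uniform and independent given $c$, so this gives the same distribution.

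\emph{Expected main obstacle.} The hard part is $T_g$. It is part of the first message but is not computable from the statement, since it equals $g^{\rho\alpha}$ and computing it needs the witness. It is, however, listed in the statement tuple $\proofstatement_{\mathsf{wf}}$, and I will use exactly that: treat $T_g$ as public, so the simulator copies it and the transcripts match perfectly. If $T_g$ were not public, I would fall back to a computational argument: argue that $g^{\rho\alpha}$ is indistinguishable from a random element given $u^\rho,u^\alpha,u^{\rho\alpha}$ and $g$. That would need a DDH-style assumption relative to the unknown $\log_g u$, and I would flag it as a weakening of the claimed perfect HVZK.

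The second delicate point is extraction of $\rho$ when $\alpha=0$. I would either restrict to $\alpha\neq0$, as is the case for an honestly generated ephemeral key, or accept a non-efficient witness for plain special soundness.
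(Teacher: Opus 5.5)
Your proposal is correct. Completeness and special soundness follow the paper's proof: you treat the protocol as three Chaum--Pedersen arguments under one challenge, extract $\alpha$ and $\delta$, set $\rho=\delta/\alpha$ and $h=Xg^{-\rho}$, and use the third sub-argument to conclude $Y=h^\alpha$. You are more careful than the paper in two places. The paper divides by $\alpha$ without comment and never explains why $U_\rho=u^{\delta/\alpha}$; your derivation of $\delta=\rho\alpha$ from $T_u=U_\rho^{\alpha}=u^{\delta}$ supplies that step. Your remark that for $\alpha=0$ a witness exists but $\rho=\log_u U_\rho$ cannot be extracted efficiently is a genuine edge case the paper ignores.

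The real difference is in honest-verifier zero-knowledge. The paper's simulator samples $T_g \sample \GG$ uniformly and claims the transcript is identically distributed, whereas you copy $T_g$ from the statement tuple. Your choice is the sound one. For a fixed statement and witness, the honest value $T_g=X^\alpha/Y=g^{\rho\alpha}=g^{\log_u T_u}$ is a constant, so a uniform $T_g$ cannot be identically distributed. Indeed, anyone holding the witness separates the two distributions trivially.

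Your route gives perfect HVZK, but make explicit what it requires. The figure lists $T_g$ in $\proofstatement_{\mathsf{wf}}$, yet the displayed relation places no constraint on it. You therefore need the relation to include $T_g=g^{\rho\alpha}$; otherwise the statement's $T_g$ need not equal the value the honest prover sends, and copying it would not match. If $T_g$ is instead a genuine prover message, your DDH-style fallback is the honest description of what a uniform-$T_g$ simulator can achieve. That is an average-case computational guarantee, strictly weaker than the HVZK the lemma claims.

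One small wording fix: given $c$, it is the responses $(z_1,z_2,z_3)$ that are uniform and independent. The commitment pairs $(a_{i,1},a_{i,2})$ are correlated, and they are fixed by the verification equations once the responses are chosen.
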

\begin{proof}
The protocol is a conjunction of three Chaum--Pedersen sub-arguments sharing a single challenge $c$.

\lightpar{Completeness.} For a valid witness $(h,\rho,\alpha)$, set $\delta := \rho\alpha$. Then $T_g = X^\alpha/Y = (hg^\rho)^\alpha/h^\alpha = g^{\rho\alpha} = g^\delta$. The responses $z_1 = r_1 + c\alpha$, $z_2 = r_2 + c\delta$, $z_3 = r_3 + c\alpha$ satisfy all six verification equations by direct computation.

\lightpar{Special soundness.} Given two accepting transcripts with the same first message but distinct challenges $c \neq c'$, let $e = c-c'$. From the first sub-argument, $\alpha = (z_1 - z'_1)/e$. From the second, $\delta = (z_2-z'_2)/e$, hence $\rho = \delta/\alpha$. Setting $h = X \cdot g^{-\rho}$, the third sub-argument confirms $h^\alpha = Y$, so $(h,\rho,\alpha)$ is a valid witness.

\lightpar{Honest-verifier zero-knowledge.} The simulator samples $c, z_1, z_2, z_3 \sample \ZZ_p$, sets $T_g \sample \GG$, and computes
\begin{align*}
a_{1,1} &:= u^{z_1}U_\alpha^{-c},\quad& a_{1,2} &:= U_\rho^{z_1}T_u^{-c},\\
a_{2,1} &:= u^{z_2}T_u^{-c},\quad& a_{2,2} &:= g^{z_2}T_g^{-c},\\
a_{3,1} &:= u^{z_3}U_\alpha^{-c},\quad& a_{3,2} &:= X^{z_3}(YT_g)^{-c}.
\end{align*}
Each pair is a perfect Chaum--Pedersen simulation. Since the commitment pairs are uniformly distributed in $\GG$, the simulated transcript is identically distributed to an honest one.
\end{proof}

%% file: appdx_figures.tex
\section{Additional Figures} \label{sec:appdx:figures}
In this section, we provide additional figures.

\input{2pdlogfuncfig.tex}
\input{bip65fig.tex}